\numberwithin{equation}{section}
\newtheorem{thm}{Theorem}[section]
\newtheorem{rem}[thm]{Remark}
\newtheorem{lem}[thm]{Lemma}
\newtheorem{dfn}[thm]{Definition}
\newtheorem{pro}[thm]{Proposition}
\newtheorem{as}[thm]{Assumption}
\title[Scattering matrices and generalized Fourier transforms]{Scattering matrices and generalized Fourier transforms in long-range $N$-body problems}
\author{Sohei Ashida}
\begin{document}
\maketitle

\begin{abstract}
We give a definition of scattering matrices based on the asymptotic behaviors of generalized eigenfunctions and show that these scattering matrices are equivalent to the ones defined by wave-operator approach in long-range $N$-body problems. We also define generalized Fourier transforms by the asymptotic behaviors of outgoing solutions to nonhomogeneous equations and show that they are equivalent to the definition using wave operators. We also prove that the adjoint operators of the generalized Fourier transforms are given by Poisson operators.
\end{abstract}

\section{Introduction}\label{firstsec}
Scattering matrices play an important role in the study of long-time asymptotic behaviors of the solutions to Schr\"odinger equations. Scattering matrices are defined in two different ways. In the time dependent viewpoint, the scattering matrices are defined using wave operators and the Fourier transforms. On the other hand, in the stationary viewpoint, they are defined using the asymptotic behaviors of generalized eigenfunctions at infinity. In this paper we prove that both the definitions are equivalent in long-rang $N$-body problems. We also give a definition of the generalized Fourier transforms using the asymptotic behaviors of outgoing solutions to nonhomogeneous equations. We prove that they are equivalent to the ones using wave operators, and that tthere adjoint operators are given by Poisson operators.

Before we consider the $N$-body problems, it is instructive to recall the results for two-body problems, that is, the cases with decaying potentials. Let the potential $V(x)\in C^{\infty}(\mathbb R^n),\ n\in \mathbb N$ satisfy
$$\lvert \partial^{\gamma}V(x)\rvert=\mathcal O(\lvert x\rvert^{-\mu-\lvert\gamma\rvert}),$$
for $\mu>0$ as $\lvert x\rvert\to\infty$.
For short-range potentials, that is, when $\mu>1$, as time $t$ tends to $\pm\infty$ the asymptotic behaviors of the solutions $e^{-it\tilde H}\psi$, $\tilde H:=-\Delta+V$, $\psi\in L^2(\mathbb R^n)$ to the Schr\"odinger equations are given by the free evolution $e^{it\Delta}\psi_{\pm}$, $\psi_{\pm}\in L^2(\mathbb R^n)$, where $\Delta$ is the Laplacian on $\mathbb R^n$. In other words,
$$\lVert e^{-it\tilde H}\psi-e^{it\Delta}\psi_{\pm}\rVert\to0,$$
as $t\to\pm\infty$. On the contrary, for any $\psi_{\pm}\in L^2(\mathbb R^n)$ there exists $\psi\in L^2(\mathbb R^n)$ such that
$$\lVert e^{it\Delta}\psi_{\pm}-e^{-it\tilde H}\psi\rVert\to0,$$
as $t\to\pm\infty$.

The wave operators $W_{\pm}: L^2(\mathbb R^n)\to L^2(\mathbb R^n)$ are defined by $W_{\pm}\psi_{\pm}:=\psi$. The wave operators $W_{\pm}$ are partial isometries from $L^2(\mathbb R^n)$ to $\mathcal H_{ac}(\tilde H)$, where $\mathcal H_{ac}(\tilde H)$ is the absolutely continuous subspace of $\tilde H$.

The scattering operator $S$ is defined as a map $S\psi_-:=\psi_+$. Let $\mathbf F_0$ be the Fourier transform. Then, $\hat S:=\mathbf F_0S\mathbf F_0^{*}$ commutes with any bounded Borel functions of $\lvert \xi\rvert^2$, and therefore, there exist $\hat S(\lambda)\in \mathcal L(L^2(\mathbb S^{n-1}),L^2(\mathbb S^{n-1}))$, $a.e.\, \lambda>0$ such that
$$(\hat Sf)(\lambda,\omega)=(\hat S(\lambda)f(\lambda))(\omega),\ \xi=\sqrt{\lambda}\omega,\ \omega\in \mathbb S^{n-1},$$
$a.e.\, \lambda>0$ for any $f(\xi)\in L^2(\mathbb R^n)$ (see e.g. Reed-Simon \cite{RS}). Here $(f(\lambda))(\omega):=f(\lambda,\omega)$.
The operators $\hat S(\lambda)$ are called scattering matrices. Thus the scattering matrices give the correspondence between the data as $t\to \pm\infty$.

On the other hand, there is another definition of the scattering matrix which is known to be equivalent to the definition as above. The another definition is based on the asymptotic behaviors of the generalized eigenfunctions at infinity (see Melrose \cite{Me} and Yafaev \cite{Ya}). The generalized eigenfunctions are solutions $u$ to $(\tilde H-\lambda)u=0$, $\lambda\in \mathbb C$, in the sense of distributions.
For short range potentials it is known that for any $a_-\in C^{\infty}(\mathbb S^{n-1})$ there exists a unique generalized eigenfunction $u$ of $\tilde H$ of the form
\begin{equation}\label{myeq1.1}
\lvert x\rvert^{-(n-1)/2}(e^{-i\sqrt{\lambda}\lvert x\rvert}a_-(\hat x)+e^{i\sqrt{\lambda}\lvert x\rvert}a_+(\hat x))+\mathcal O(\lvert x\rvert^{-(n+1)/2}),
\end{equation}
as $\lvert x\rvert \to \infty$ with $a_+\in C^{\infty}(\mathbb S^{n-1})$ uniquely determined by $a_-$, where $\hat x:=x/\lvert x\rvert$.
The scattering matrix $\Sigma(\lambda)$ is defined as the map
$$\Sigma(\lambda) : a_-\mapsto a_+.$$

The scattering matrices $\hat S(\lambda)$ and $\Sigma(\lambda)$ are equivalent in the sense that the following equation holds (see e.g. Reed-Simon \cite{RS} and Melrose \cite{Me}).
\begin{equation}\label{myeq1.2}
\hat S(\lambda)=i^{n-1}\Sigma(\lambda).
\end{equation}

In the case of long-range two-body problems, that is, when $1\geq \mu>0$, we need to modify both the definitions of $\hat S(\lambda)$ and $\Sigma(\lambda)$. In the case of long-range potentials  the free evolution $e^{it\Delta}\psi_{\pm}$, $\psi_{\pm}\in L^2(\mathbb R^n)$ is replaced by $e^{-iS_{\pm}(D,t)}\psi_{\pm}$, where $S_{\pm}(\xi,t)$ are solutions to the Hamilton-Jacobi equation
$$\frac{\partial S_{\pm}}{\partial t}(\xi,t)=\lvert \xi\rvert^2+V(\nabla_{\xi}S_{\pm}(\xi,t)),$$
and
$$e^{-iS_{\pm}(D,t)}\psi:=\mathbf F_0^{*}(e^{-iS_{\pm}(\xi,t)}(\mathbf F_0\psi)(\xi)).$$

On the other hand $e^{\pm i\sqrt{\lambda}\vert x\rvert}$ in \eqref{myeq1.1} is replace by $e^{\pm iK(x,\lambda)}$, where $K(x,\lambda)$ is a suitably chosen solution to the eikonal equation
$$\lvert \nabla_xK(x,\lambda)\rvert^2+V(x)=\lambda.$$

Then, we have the relation \eqref{myeq1.2} even in the long-range case (see G\^atel-Yafaev \cite{GY}).

We now turn to the $N$-body problems. We consider the generalized $N$-body Schr\"odinger operators. The $N$-body Schr\"odinger operator is a special case of the generalized $N$-body Schr\"odinger operators.

Set $X:=\mathbb R^n,\ n\in \mathbb N$, and let $\{X_a\subset X: a\in\mathcal A\}$ be a finite family of linear subspaces of $X$ which is closed under intersections, and includes $X$ and $X_{a_{\max}}:=\{0\}$. We endow $\mathcal A$ with a semi-lattice structure by
$$a\leq b\ \mathrm{if}\ X_a\supset X_b.$$
We denote the orthogonal complement of $X_a$ by $X^a$. We denote by $\Pi^a$ and $\Pi_a$ the orthogonal projections of $X$ onto $X^a$ and $X_a$ respectively. We use the same notations $\Pi^a$ and $\Pi_a$ for the corresponding orthogonal projections of the dual space of $X$. We define for all $x\in X$, $x_a=\Pi_a x$ and $x^a=\Pi^a x$. If $a\leq b$, we define
\begin{equation}\label{myeq1.2.0.1}
x_a^b:=\pi_a\pi^b x.
\end{equation}
We also define $\nabla_a=\Pi_a\nabla$ and $\nabla^a=\Pi^a\nabla$. The operators $-\Delta_a$ and $-\Delta^a$ denote the Laplacian in $X_a$ and $X^a$ respectively. 

We define $C_a:=X_a\cap\mathbb S^{n_a-1}$. Thus $C_a$ is a sphere of dimension $n_a-1$.  We also define the singular part of $C_a$ by
$$C_{a,\mathrm{sing}}:=\bigcup_{b\nleq a}(C_b\cap C_a).$$
 $C_{a,\mathrm{sing}}$ corresponds to the directions in which the particles collide. The regular part $C_a'$ is the complement of $C_{a,\mathrm{sing}}$:
 $$C_a':=C_a\setminus C_{a,\mathrm{sing}}.$$

A generalized  $N$-body Schr\"odinger operator is an operator of the form
\begin{equation}\label{myeq1.2.1}
H:=-\Delta+\sum_{a\in \mathcal A}V_a(x^a),
\end{equation}
where $\Delta$ is the Laplacian in $X$ and $V_a$ are real-valued functions on $X^a$ satisfying the following condition.
There exists $\mu>0$ such that  for any $a\in \mathcal A$, $V_a(x^a)=V_a^s(x^a)+V_a^l(x^a)$, where
\begin{itemize}
\item[(1)]$V_a^s(x^a)$ is compactly supported and $V_a^s(-\Delta^a+1)^{-1}$ is compact.
\item[(2)]$V_a^l(x^a)\in C^{\infty}(X^a)$ and for any $\gamma\in \mathbb N^{n_a}$
\begin{equation}\label{myeq1.2.1.1}
\lvert\partial^{\gamma} V_a(x^a)\rvert=\mathcal(\lvert x^a\rvert^{-\mu-\lvert \gamma\rvert}),
\end{equation}
where $n_a:=\mathrm{dim}\, X_a$.
\end{itemize}
Then, $H$ is a self-adjoint operator on $L^2(X)$.

We also define the operators $H^a$ as
\begin{equation}\label{myeq1.2.2}
H^a:=-\Delta^a+\sum_{b\leq a}V_b(x^b).
\end{equation}
The set of thresholds of a subsystem $a\in \mathcal A$ is defined as
\begin{equation*}\label{myeq1.2.3}
\mathcal T^a:=\bigcup_{b<a}\sigma_{\mathrm{pp}}(H^b),
\end{equation*}
where $\sigma_{pp}(A)$ is the set of eigenvalues of $A$, and $b<a$ means $b\leq a$ and $b\neq a$. We also set
\begin{equation*}\label{myeq1.2.4}
\mathcal T(H):=\mathcal T^{a_{\max}}.
\end{equation*}

We label the eigenvalues of $H^a$ counted with multiplicities, by integers $m$, and we call the pairs $\alpha=(a,m)$ channels. We denote the eigenvalue of the channel $\alpha$ and the corresponding normalize eigenfunction by $E_{\alpha}$ and $u_{\alpha}$ respectively. We say that a channel $\alpha$ is a non-threshold channel if $E_{\alpha}\notin \mathcal T^a$. When $\alpha$ is a non-threshold channel, the eigenfunction $u_{\alpha}$ is a Schwartz funcition (see Froese-Herbst \cite{FH}).

When $\mu>1$ and $\alpha$ is a non-threshold channel, the following strong limit exists:
$$W^{\pm}_{\alpha}:=s-\lim_{t\to\pm\infty}e^{itH}J_{\alpha}e^{-it(\Delta_a+E_{\alpha})},$$
where $(J_{\alpha}v)(x):=u_{\alpha}(x^a)v(x_a)$.

When $\alpha$ and $\beta$ are non-threshold channels corresponding to $a\in \mathcal A$ and $b\in \mathcal A$ respectively, the scattering operator $S_{\beta\alpha}$ is defined as
$$S_{\beta\alpha}:=(W^{+}_{\beta})^*W^{-}_{\alpha}.$$

We also define the Fourier transform $F_{\alpha}$ as
$$F_{\alpha}:L^2(X_a)\to L^2((E_{\alpha},\infty);L^2(C_a)),$$
by
\begin{equation}\label{myeq1.2.5}
F_{\alpha} u(\lambda,\omega):=(2\pi)^{-n_a/2}2^{-1/2}\lambda_{\alpha}^{(n_a-2)/2}\int e^{-i\lambda_{\alpha}^{1/2}\omega\cdot x_a}u(x_a)dx_a,\ \omega\in C_a
\end{equation}
where $\lambda_{\alpha}:=\lambda-E_{\alpha}$.
$\hat S_{\beta\alpha}:=F_{\beta}S_{\beta\alpha}F^*_{\alpha}$ is decomposable, namely for  $a.e.\, \lambda>\max \{E_{\alpha},E_{\beta} \}$ there exist bounded operators $\hat S_{\beta\alpha}(\lambda)\in \mathcal L(L^2(C_a);L^2(C_b))$ such that
$$(\hat S_{\beta\alpha}f)(\lambda,\hat x_b)=(\hat S_{\beta\alpha}(\lambda)f(\lambda))(\hat x_b),\ \hat x_b:=x_b/\lvert x_b\rvert\in C_b,$$
for any $f\in L^2((E_{\alpha},\infty);L^2(C_a))$ (see e.g. \cite{RS}).

The other definition comes from the asymptotic behaviors of generalized eigenfunctions even in the $N$-body problems. 

Let $\alpha$ be a non-threshold channel. When $\mu>1$ and $V_a^s\equiv 0$ for any $a\in \mathcal A$, Vasy \cite{Va} proved the following.
For $\lambda\in (E_{\alpha},\infty)\setminus \mathcal T(H)$ and $g\in C_0^{\infty}(C_a')$, for some $s>1/2$ there exists a unique generalized eigenfunction $u\in H^{\infty,-s}(X)$ of $H$, and $u$ has the form
\begin{equation}\label{myeq1.3}
u=u_{\alpha}(x^a)v_-(x_a)e^{-i\sqrt{\lambda_{\alpha}}r_a}r_a^{-(n_a-1)/2}+R(\lambda+i0)f,
\end{equation}
where $r_a:=\lvert x_a\rvert$, $v_-\in C^{\infty}(X_a)$, $\lim_{r_a\to\infty}v_-(r_a,\hat x_a)=g(\hat x_a)$, $R(z):=(H-z)^{-1}$ and $H^{\infty,-s}(X)$ is a weighted Sobolev space.
Vasy \cite{Va} defined the Poisson operator in order that the following holds: $P_{\alpha,+}(\lambda)g=u$.

Let $\beta$ be a non-threshold channel and set
$$(\pi_{\beta}u)(x_b):=\int_{X^b}u(x^b,x_b)\bar u_{\beta}(x^b)dx^b.$$
Vasy \cite{Va}  also proved that for a generalized eigenfunction $u\in H^{\infty,-s}(X),\ s>1/2$ of $H$ and for $\lambda>E_{\beta}$, $\pi_{\beta}u$ has the following distributional asymptotic behavior:
\begin{align*}
\int_{C_b}&(\pi_{\beta}u)(r_b\omega_b)h(\omega_b)d\omega_{b}\\
&=r_b^{-(n_b-1)/2}(e^{-i\sqrt{\lambda_{\beta}}r_b}Q^0_{\beta,-}(\pi_{\beta}u,h)+e^{i\sqrt{\lambda_{\beta}}r_b}Q^0_{\beta,+}(\pi_{\beta}u,h)+r_b^{-\delta}v),
\end{align*}
for $h\in C_0^{\infty}(C_b')$, where $\delta>0,\ v\in C^{\infty}((1,\infty))\cap L^{\infty}((1,\infty))$ and $Q^0_{\beta,\pm}(\pi_{\beta}u,\cdot )$ define a distribution on $C^{\infty}(C_b')$.

Vasy \cite {Va} defined the scattering matrix $\Sigma_{\beta\alpha}(\lambda)$ as
$$\Sigma_{\beta\alpha}(\lambda)(g)(h):=Q^0_{\beta,+}(\pi_{\beta}(P_{\alpha,+}(\lambda)g),h).$$
Vasy \cite{Va} also proved the following:
if $\alpha$ and $\beta$ are no-threshold channels, then for $\lambda>\max\{E_{\alpha},E_{\beta}\},\ \lambda\notin \mathcal T(H)$,
\begin{equation}\label{myeq1.4}
\hat S_{\beta\alpha}(\lambda)=C_{\beta\alpha}(\lambda)\Sigma_{\beta\alpha}(\lambda)\mathcal R,\ C_{\beta\alpha}(\lambda):=e^{i\pi(n_a+n_b-2)/4}(\lambda_{\alpha}/\lambda_{\beta})^{1/2},
\end{equation}
where $\mathcal Rg(\hat x_a):=g(-\hat x_a)$ for $g\in C_0^{\infty}(C_a')$.

Isozaki \cite{Is5} and Hassell \cite{Ha} proved similar results for 2-cluster to 3-cluster scattering matrices in three-body problems and for the free channel scattering matrices respectively using different methods.

In this paper we generalize \eqref{myeq1.4} for long-range potentials. More precisely, we obtain a result similar to \eqref{myeq1.4} for $1\geq \mu>1/2$ where, $\mu$ is as in \eqref{myeq1.2.1.1}.  In the long-range case we use the solutions to Hamilton-Jacobi equations in the definition of wave operators, and use the solutions to eikonal equations in the asymptotic behaviors of generalized eigenfunctions of $H$. These solutions are related by the Legendre transform.

Our definition of the Poisson operator $P_{\alpha,+}(\lambda)$ is similar to the one in Vasy \cite{Va}. However, we obtain the asymptotic behaviors of $u:=P_{\alpha,+}(\lambda)g$, $g\in C_0^{\infty}(C_a')$ in a way different from Vasy \cite{Va}. The reason for that is as follows. For short-range potentials the asymptotic behavior of the part corresponding to a non-threshold channel $\alpha$ of a generalized eigenfunction $u$ is expected to be as
$$u_{\alpha}(x^a)\Psi_{\alpha}(\hat x_a)e^{\pm i\sqrt{\lambda_{\alpha}}r_a}r_a^{(n_a-1)/2},$$
where $\Psi_{\alpha}\in L^2(C_a)$. However, for long-range potentials the factor $e^{\pm i \sqrt{\lambda_{\alpha}}r_a}$ in the asymptotic behavior as above is replaced by $e^{\pm iK_a(x_a,\lambda_{\alpha})}$, where $K_a(x_a,\lambda_{\alpha})$ is a solution to an eikonal equation. Since $K_a(x_a,\lambda_{\alpha})$ depends not only on $r_a$ but also on $\hat x_a$ unlike $\sqrt{\lambda_{\alpha}}r_a$, we can not reduce the study of the asymptotic behavior of $u$ to an ordinary differential equation for $r_a$ as in Vasy \cite{Va}. Instead, we define a distribution $Q^{\pm}_{\alpha}(u)$ on $C_0^{\infty}(C_a')$ as follows:
\begin{equation}\label{myeq1.5}
\begin{split}
(&Q^{\pm}_{\alpha}(u))(h)\\
&:=\lim_{\rho\to\infty}\rho^{-1}\int_{\lvert x_a\rvert<\rho}e^{\mp iK_a(x_a,\lambda_{\alpha})}r_a^{-(n_a-1)/2}h(\hat x_a)(\pi_{\alpha}u)(x_a)dx_a,
\end{split}
\end{equation}
for any $h\in C_0^{\infty}(C_a')$. The existence of the limit in \eqref{myeq1.5} is not obvious. We prove the existence of the limit in \eqref{myeq1.5} and show that $Q^{\pm}_{\alpha}(u)\in L^2(C_a)$.

Using $Q^{+}_{\beta}$ we define the scattering matrix as
$$\Sigma_{\beta\alpha}(\lambda):=Q_{\beta}^+(P_{\alpha,+}(\lambda)g).$$ Then, we obtain the relation \eqref{myeq1.4} in which the definition of $\hat S_{\beta\alpha}(\lambda)$ is also modified.

We also give a definition of the generalized Fourier transforms, and show that they are equivalent to the one defined by the wave operator approach. We also prove that the adjoint operators of generalized Fourier transforms are given by the Poisson operators $P_{\alpha,\pm}(\lambda)$.

The content of this paper is as follows. In section \ref{first.2sec} we give some preliminaries and the main results. In section \ref{secondsec} we introduce the generalized Fourier transforms for decaying potentials.  In section \ref{thirdsec} we introduce the wave operators and the scattering matrices for decaying potentials, and give the relation between the scattering matrices and the adjoint operators of the generalized Fourier transforms. In section \ref{fourthsec} we define the Poisson operators for $N$-body Schr\"odinger operators. In section \ref{fifthsec} we study the asymptotic behaviors of the generalized eigenfunctions and the solutions to nonhomogeneous equations for decaying potentials. In section \ref{fifthsec.1} we introduce the outgoing and incoming  properties and the uniqueness theorem for nonhomgeneous equations. In section \ref{fifth.1sec} we define the scattering matrices and the generalized Fourier transforms for $N$-body Schr\"odinger operators using the asymptotic behaviors of the generalized eigenfunctions, and outgoing or incoming solutions to nohomogeneous equations respectively. In section \ref{fifth.2sec} we study the asymptotic behaviors of the functions in the range of the resolvent and the Poisson operators. In section \ref{sixthsec} we prove the equivalence of the two definitions of the scattering matrices. In section \ref{eighthsec} we prove the equivalence of the two definitions of the generalized Fourier transforms and show that the adjoint operators of the generalized Fourier transforms are given by the Poisson operators.

\section{Some preliminaries and the main results}\label{first.2sec}
In this section we use the notations in section \ref{firstsec}.
We define
\begin{align}
&X_a^{\epsilon}:=\{x\in X: \mathrm{dist}(x,X_a)<\epsilon\},\notag\\
&Z_a:=X_a\setminus\bigcup_{b\nleq a}X_b,\label{myeqfirst.2.0.1}\\
&Y_a:=X\setminus\bigcup_{b\nleq a}X_b,\label{myeqfirst.2.0.1.0}\\
&Z_a^{\epsilon}:=X_a\setminus\bigcup_{b\nleq a}X_b^{\epsilon},\label{myeqfirst.2.0.1.1}\\
&Y_a^{\epsilon}:=X\setminus\bigcup_{b\nleq a}X_b^{\epsilon}.\label{myeqfirst.2.0.2}
\end{align}

The directions in which the clusters of $a$ collide are removed in $Z_a$.

We assume the potentials $V_a$ obey the following.
\begin{as}
There exists $1\geq\mu>1/2$ such that  for any $a\in \mathcal A$, $V_a(x^a)=V_a^s(x^a)+V_a^l(x^a)$, where
\begin{itemize}
\item[(1)]$V_a^s(x^a)$ is compactly supported and $V_a^s(-\Delta^a+1)^{-1}$ is compact.
\item[(2)]$V_a^l(x^a)\in C^{\infty}(X^a)$ and for any $\gamma\in \mathbb N^{n_a}$
$$\lvert\partial^{\gamma} V_a(x^a)\rvert=\mathcal O(\lvert x^a\rvert^{-\mu-\lvert \gamma\rvert}).$$
\end{itemize}
\end{as}

Let $H$ and $H^a$ be as in \eqref{myeq1.2.1} and \eqref{myeq1.2.2} respectively.

Set $I_a(x):=\sum_{b\nleq a}V_b(x^b)$, $I_a^l(x):=\sum_{b\nleq a}V^l_b(x^b)$, and let $\chi_0\in C^{\infty}(\mathbb R)$ be a function which is $0$ for $x<1$ and $1$ for $x>2$. We define the modified potentials as follows:
 $$\tilde I_a(x_a):=I_a^l(x_a)\prod_{a< b}\chi_0\left(\frac{\langle x_a^b\rangle}{\langle x_a\rangle}\log\langle x_a\rangle\right)\chi_0\left(\frac{\lvert x_a\rvert}{K}\right),$$
where $K>0$ is a constant and $x^b_a$ is defined by \eqref{myeq1.2.0.1}. Let $\mu_0$ be a constant such that $\mu_0\in(0,\mu)$. Then if we choose $K$ large enough, we have
 $$\partial_{x_a}^{\gamma}\tilde I_a(x_a)=\mathcal O\left(\lvert x_a\rvert^{-\mu_0-\lvert\gamma\rvert}\right),$$
as $\lvert x_a\rvert\to\infty$ for any $\gamma\in \mathbb N^{n_a}$.
 
In section \ref{thirdsec} we show that there exists $S_{a,\pm}(\xi_a,t)\in C^{\infty}(X_a\setminus \{0\}\times \mathbb R_{\pm})$ satisfying the following : For any compact set $A\in X_a\setminus \{0\}$ there exists $T>0$ such that
$$\frac{\partial S_{a,\pm}}{\partial t}(\xi_a,t)=\lvert \xi_a\rvert^2+\tilde I_a(\nabla_{\xi_a}S_{a,\pm}(\xi_a,t)).$$
 
The time-dependent definition of the scattering matrices is based on the wave operators. If $\alpha$ is a non-threshold channel, then there exist wave operators
$$W_{\alpha}^{\pm}:=s-\lim_{t\to\pm\infty}e^{itH}J_{\alpha}e^{-i(S_{a,\pm}(D_a,t)+E_{\alpha}t)},$$
where $e^{-i(S_{a,\pm}(D_a,t)+E_{\alpha}t)}f:=F_{\alpha}^*(e^{-i(S_{a,\pm}(\xi_a,t)+E_{\alpha}t)}(F_{\alpha}f)(\lvert\xi_a\rvert^2,\hat\xi_a))$, $\hat \xi_a:=\xi_a/\lvert \xi_a\rvert$.

The scattering operator is defined as
$$S_{\beta\alpha}:=(W_{\beta}^+)^*W_{\alpha}^-.$$

Then $\hat S_{\beta\alpha}:=F_{\beta}S_{\beta\alpha}F^*_{\alpha}$ is decomposable, namely for $a.e.\ \lambda>\max \{E_{\alpha},E_{\beta} \}$ there exist bounded operators $\hat S_{\beta\alpha}(\lambda)\in \mathcal L(L^2(C_a);L^2(C_b))$ such that
$$(\hat S_{\beta\alpha}f)(\lambda,\hat x_b)=(\hat S_{\beta\alpha}(\lambda)f(\lambda))(\hat x_b).$$

The stationary definition of the scattering matrices comes from the asymptotic behaviors of generalized eigenfunctions. Let  $L^{2,l}(\mathbb R^n)$ denote the Hilbert space of all measurable functions on $\mathbb R^n$ such that
$$\lVert f\rVert_{l}^2=\int(1+\lvert x\rvert)^{2l}\lvert f(x)\rvert^2dx<\infty.$$
We also need the spaces $\mathcal B(\mathbb R^n)$ and $\mathcal B^*(\mathbb R^n)$ of functions. We set
$$\rho_j,\ :=2^j,\ (j\in \mathbb N,\ j\geq 0)$$
$$\Omega_0:=\{x\in \mathbb R^n: \lvert x\rvert<1\},\ \Omega_j:=\{x\in \mathbb R^n: 2^{j-1}<\lvert x\rvert<2^j\},\ (j\in \mathbb N,\ j\geq 1).$$
Let $\mathcal B(\mathbb R^n)$ be the set of functions $u$ such that
$$\Vert u\rVert_{\mathcal B(\mathbb R^n)}:=\sum_{j=0}^{\infty}\rho_j^{1/2}\lVert u\rVert_{L^2(\Omega_j)}<\infty.$$
Then the dual space $\mathcal B^*(\mathbb R^n)$ of $\mathcal B(\mathbb R^n)$ is the set of functions $u$ such that
$$\lVert u\rVert_{\mathcal B^*(\mathbb R^n)}:=\sup_{j\geq 0}\rho_j^{-1/2}\lVert u\rVert_{L^2(\Omega_j)}<\infty.$$
Moreover, there exists a constant $C>0$ such that
$$C^{-1}\lVert u\rVert_{\mathcal B^*(\mathbb R^n)}\leq \left(\sup_{\rho>1}\rho^{-1}\int_{\lvert x\rvert<\rho}\lvert u(x)\rvert^2dx\right)^{1/2}\leq C\lVert u\rVert_{\mathcal B^*(\mathbb R^n)}.$$

The relation between $L^{2,l}(\mathbb R^n)$,  $\mathcal B(\mathbb R^n)$ and $\mathcal B^*(\mathbb R^n)$ is as follows: when $l>1/2$
\begin{align*}
L^{2,l}(\mathbb R^n)\subset \mathcal B(\mathbb R^n) \subset L^{2,1/2}(\mathbb R^n)&\subset L^{2}(\mathbb R^n)\\
&\subset L^{2,-1/2}(\mathbb R^n)\subset \mathcal B*(\mathbb R^n)\subset L^{2,-l}(\mathbb R^n).
\end{align*}

Since we can assume $X_a=\mathbb R^n$ for some $n\in \mathbb N$, we can define $L^{2,l}(X_a)$, $\mathcal B(X_a)$ and $\mathcal B^*(X_a)$ in the same way as above.

Let $\alpha$ be a non-threshold channel. We show in section \ref{fourthsec} that for $\lambda\in (E_{\alpha},\infty)\setminus \mathcal T(H)$ and $g\in C_0^{\infty}(C_a')$ there exists a unique function $u\in \mathcal B^*(X)$ such that $(H-\lambda)u=0$, and $u-e^{-iK_{a}(x_a,\lambda_{\alpha})}r_a^{-(n_a-1)/2}u_{\alpha}(x^a)g(\hat x_a)$ is outgoing, where $K_a(x_a,\lambda_{\alpha})$ is the solution to the eikonal equation $\lvert \nabla_aK_a(x_a,\lambda_{\alpha})\rvert^2+\tilde I_a(x_a)=\lambda_{\alpha}$ in section \ref{secondsec} with $V$ replaced by $\tilde I_a$ (for the definition of outgoing and incoming properties see section \ref{fifthsec.1}). We define the Poisson operator $P_{\alpha,+}(\lambda) : C_0^{\infty}(C_a')\to \mathcal B^*(X)$ by $P_{\alpha,+}(\lambda)g:=u$. There is also  a unique function $u\in \mathcal B^*(X)$ such that $(H-\lambda)u=0$, and $u-e^{iK_{a}(x_a,\lambda_{\alpha})}r_a^{-(n_a-1)/2}u_{\alpha}(x^a)g(\hat x_a)$ is incoming, and we define $P_{\alpha,-}(\lambda)$ by $P_{\alpha,-}(\lambda)g:=u$.

Let $\beta$ be a non-threshold channel. In section \ref{fifth.1sec} we show that for a generalized eigenfunction $u\in \mathcal B^*(X)$ and $h\in C_0^{\infty}(C_b')$ the limit
\begin{equation}\label{myeqfirst.2.1}
Q_{\beta}^{\pm}(u)(h):=\lim_{\rho\to\infty}\rho^{-1}\int_{\lvert x_b\rvert<\rho}h(\hat x_b)e^{\mp iK_b(x_b,\lambda_{\beta})}r_b^{-(n_b-1)/2}(\pi_{\beta}u)(x_b)dx_b,
\end{equation}
exists.  We can see that $Q_{\beta}^{\pm}(u)$ defines a distribution on $C_b'$ and actually $Q_{\beta}^{\pm}(u)\in L^2(C_b)$. We define the scattering matrix $\Sigma_{\beta\alpha}(\lambda) : C_0^{\infty}(C_a')\to L^2(C_b)$ by
$$\Sigma_{\beta\alpha}(\lambda)g:=Q_{\beta}^+(P_{\alpha,+}(\lambda)g).$$

One of our main result is the following
\begin{thm}\label{sme}
Let $\alpha$ and $\beta$ be non-threshold channels. Then, 
$$\hat S_{\beta\alpha}(\lambda)g=e^{i\pi(n_a+n_b-2)/4}\lambda_{\alpha}^{1/4}\lambda_{\beta}^{-1/4}\Sigma_{\beta\alpha}(\lambda)\mathcal Rg,$$
for $g\in C_0^{\infty}(C_a')$ and $a.e.\, \lambda>\max\{E_{\alpha}, E_{\beta}\}$ where $(\mathcal R g)(\hat x_a):=g(-\hat x_a)$.
\end{thm}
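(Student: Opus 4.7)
The plan is to use the standard stationary representation $\hat S_{\beta\alpha}=F_\beta(W_\beta^+)^*W_\alpha^- F_\alpha^*$ and, after spectral localization at energy $\lambda$, to factor the identification through the Poisson operator $P_{\alpha,+}(\lambda)$. The argument has two symmetric pieces: (i) identify $W_\alpha^-F_\alpha^*|_\lambda$ with $P_{\alpha,+}(\lambda)\mathcal R$ up to an explicit scalar, and (ii) identify the left action $F_\beta(W_\beta^+)^*|_\lambda$ on a generalized eigenfunction with $Q_\beta^+(\cdot)$ up to an explicit scalar. Composing the two scalars produces $e^{i\pi(n_a+n_b-2)/4}\lambda_\alpha^{1/4}\lambda_\beta^{-1/4}$.

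For step (i), I would take $g\in C_0^\infty(C_a')$ and $f(\lambda,\omega)=\varphi(\lambda)g(\omega)$ with $\varphi\in C_0^\infty((E_\alpha,\infty))$, and analyze the state $e^{itH}J_\alpha e^{-i(S_{a,-}(D_a,t)+E_\alpha t)}F_\alpha^*f$ as $t\to-\infty$. Writing the pre-wave as an oscillatory integral with phase $x_a\cdot\xi_a-S_{a,-}(\xi_a,t)-E_\alpha t$, the stationary point $\xi_a=\xi_a^c(x_a,t)$ satisfies $x_a=\nabla_{\xi_a}S_{a,-}(\xi_a^c,t)$; because $t<0$ one has $\hat x_a\approx -\hat\xi_a^c$, which is the source of the reflection $\mathcal R$. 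The Legendre transform relating the Hamilton--Jacobi phase $S_{a,-}(\xi_a,t)$ to the eikonal solution $K_a(x_a,\lambda_\alpha)$, together with stationary phase, produces the spatial phase $e^{-iK_a(x_a,\lambda_\alpha)}$ multiplied by the Jacobian and Maslov factors $e^{i\pi(n_a-2)/4}$ and the amplitude factor from \eqref{myeq1.2.5}. I would then prove that the resulting limit satisfies $(H-\lambda)u=0$ in $\mathcal B^*(X)$ and carries the incoming asymptote $e^{-iK_a(x_a,\lambda_\alpha)}r_a^{-(n_a-1)/2}u_\alpha(x^a)(\mathcal R g)(\hat x_a)$ up to a computable normalization, so the uniqueness part of the Poisson-operator construction (section \ref{fourthsec}) forces
$$W_\alpha^- F_\alpha^*(\lambda)g \;=\; c_\alpha^-(\lambda)\,P_{\alpha,+}(\lambda)\mathcal R g,\qquad c_\alpha^-(\lambda)=e^{i\pi(n_a-2)/4}\lambda_\alpha^{1/4}.$$

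For step (ii), I would dualize the above argument. Given $u\in\mathcal B^*(X)$ with $(H-\lambda)u=0$, the time-dependent object $F_\beta(W_\beta^+)^*u$, interpreted via its value against $h\in C_0^\infty(C_b')$ at energy $\lambda$, can be written as a time integral involving $(\pi_\beta u)(x_b)\overline{e^{-i(S_{b,+}(D_b,t)+E_\beta t)}F_\beta^*(h\otimes\varphi)}$. Running stationary phase now at $t\to+\infty$ replaces $S_{b,+}$ by the eikonal $K_b(x_b,\lambda_\beta)$ without reflection, and the Abel/Cesaro limit $\rho^{-1}\int_{|x_b|<\rho}\cdot\, dx_b$ in \eqref{myeqfirst.2.1} emerges as the natural $\mathcal B$--$\mathcal B^*$ pairing. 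This yields
$$F_\beta(W_\beta^+)^*u\big|_\lambda \;=\; c_\beta^+(\lambda)\,Q_\beta^+(u), \qquad c_\beta^+(\lambda)=e^{i\pi(n_b-2)/4}\lambda_\beta^{-1/4},$$
and substituting $u=P_{\alpha,+}(\lambda)\mathcal R g$ and multiplying $c_\alpha^-(\lambda)c_\beta^+(\lambda)$ yields the theorem. The bookkeeping of $(2\pi)^{-n_a/2}2^{-1/2}\lambda_\alpha^{(n_a-2)/2}$ factors in \eqref{myeq1.2.5} is consistent with the normalizations absorbed in $P_{\alpha,+}(\lambda)$ and $Q_\beta^+$.

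The hard part will be making the stationary-phase reductions genuinely rigorous in the long-range $N$-body setting. The Hamilton--Jacobi phase $S_{a,\pm}(\xi_a,t)$ is only defined on conic subsets of $X_a\setminus\{0\}$ avoiding the singular directions $C_{a,\mathrm{sing}}$, and its Legendre relation to $K_a(x_a,\lambda_\alpha)$ involves lower-order corrections from $\tilde I_a$; one must verify that all remainders decay in $\mathcal B^*$ uniformly in $\lambda$ on compacts. Controlling the pieces of the state that fall outside channel $\alpha$ (coming from $J_\alpha$ rather than the full cluster decomposition), handling the compactly supported $V_a^s$ by the compactness of $V_a^s(-\Delta^a+1)^{-1}$, and invoking the uniqueness of Poisson operators and the absence-of-threshold hypothesis to rule out parasitic contributions are the remaining technical steps, all of which rely on the asymptotic analysis developed in sections \ref{fifthsec}--\ref{fifth.2sec}.
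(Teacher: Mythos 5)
Your factorization strategy --- identify $W_\alpha^-F_\alpha^*$ at fixed energy with $P_{\alpha,+}(\lambda)\mathcal R$ and $F_\beta(W_\beta^+)^*$ at fixed energy with $Q_\beta^+$, then multiply the constants --- is in the right spirit; in fact the first identity is essentially Theorem \ref{gfe}(2), which the paper proves \emph{after} and by the same machinery as Theorem \ref{sme}, not before it. But the paper's actual route is different in execution and avoids the pieces you flag as hard. It never performs a stationary-phase computation and never evaluates a wave operator at a single energy on a non-$L^2$ function. Instead it works throughout with the smeared bilinear form $(S_{\beta\alpha}f_1,f_2)$, expands $W_\beta^+$ and $W_\alpha^-$ by Cook's method with the cutoff $T_b^+J_\beta$ and effective perturbation $G_\beta^+=HT_b^+J_\beta-T_b^+J_\beta h_\beta$, uses the intertwining $\tilde W_{\alpha,\pm}F_\alpha^*=\tilde F_{\alpha,\pm}^*$ to import the already-established two-body stationary theory for $\tilde H_a=-\Delta_a+\tilde I_a$ (Lemmas \ref{Frepre} and \ref{2bodyasympt}, where the Maslov factors $e^{\pm i\pi(n_a-3)/4}$ are prepackaged in $w_\pm$), converts the $s$- and $t$-integrals into boundary values $R(\lambda\pm i0)$ by Abel limits, and then replaces $T_a\tilde{\mathcal F}^*_{\alpha,-}(\lambda)\hat f_1(\lambda)$ by the explicit spherical wave $C_\alpha^-(\lambda)w_\alpha^-(\lambda)$ using Lemma \ref{outdirect} and Isozaki's uniqueness theorem (Lemma \ref{Iunique}); the identification with $\Sigma_{\beta\alpha}(\lambda)$ then comes from the Green's-formula representation \eqref{myeq5.9}, not from a separate identity for $F_\beta(W_\beta^+)^*$.

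Two concrete gaps in your version. First, step (ii) as stated is not well posed: $(W_\beta^+)^*$ is an operator on $L^2(X)$ and cannot be applied at fixed $\lambda$ to the generalized eigenfunction $P_{\alpha,+}(\lambda)\mathcal Rg\in\mathcal B^*(X)\setminus L^2(X)$; making sense of this forces you back to the smeared bilinear formulation, which is exactly the paper's starting point, and in the diagonal case $\alpha=\beta$ the naive expansion of $(W_\alpha^+)^*W_\alpha^-$ produces a non-integrable (identity-type) contribution that the paper removes by subtracting $\Omega_\alpha^+\tilde W_\alpha^-$ and accounting for the intra-channel operator $S_\alpha=(\tilde W_\alpha^+)^*\tilde W_\alpha^-$ via Lemma \ref{2bodyasympt}; your proposal is silent on this. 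Second, your constants do not close: $c_\alpha^-(\lambda)c_\beta^+(\lambda)=e^{i\pi(n_a+n_b-4)/4}\lambda_\alpha^{1/4}\lambda_\beta^{-1/4}$, which differs from the stated $e^{i\pi(n_a+n_b-2)/4}\lambda_\alpha^{1/4}\lambda_\beta^{-1/4}$ by a factor of $i$ (and the correct fiber of $W_\alpha^-F_\alpha^*$ is $-\tilde D_\alpha^+(\lambda)P_{\alpha,+}(\lambda)\mathcal R$ with $\tilde D_\alpha^+(\lambda)=2^{-1}i\pi^{-1/2}\lambda_\alpha^{-1/4}e^{i\pi(n_a-3)/4}$, carrying $\lambda_\alpha^{-1/4}$ rather than your $\lambda_\alpha^{1/4}$). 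These are not mere bookkeeping slips: they signal that the normalization must be extracted from the two-body formulas for $\tilde H_a$ rather than guessed from a heuristic stationary-phase count.
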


We also give the stationary definition of the generalized Fourier transforms. For $\lambda\in\sigma_{ess}(H)\setminus(\sigma_{pp}(H)\cup \mathcal T(H))$ the resolvent $R(\lambda\pm i\epsilon):=(H-\lambda\mp i\epsilon)^{-1}$, $\epsilon>0$ is extended to $R(\lambda\pm i0)$ as an operator in $\mathcal L(\mathcal B(X),\mathcal B^*(X))$, where $\sigma_{ess}(A)$ is the set of essential spectra of $A$ (see \cite {JP}). Let $f\in L^{2,l}(X),\ l>1/2$, $\alpha$ be a non-threshold channel, and set $u=R(\lambda+i0)f$ or $u=R(\lambda-i0)f$. Then, the limit $Q_{\alpha}^{\pm} (u)(h)$ as \eqref{myeqfirst.2.1} exits. We define the generalized Fourier transforms $\mathcal G_{\alpha}^{\pm}(\lambda) : L^{2,l}(X)\to L^2(C_a)$ by
$$\mathcal G_{\alpha}^+(\lambda)f=D_{\alpha}^+(\lambda)Q_{\alpha}^+(R(\lambda+i0)f),$$
and
$$\mathcal G_{\alpha}^-(\lambda)f=D_{\alpha}^-(\lambda)\mathcal RQ_{\alpha}^-(R(\lambda-i0)f),$$
where $D_{\alpha}^{\pm}(\lambda):=e^{\pm i\pi(n_a-3)/4}\pi^{-1/2}\lambda_{\alpha}^{1/4}$.

The generalized Fourier transforms are related to the wave operators and Poisson operators as in the following theorem.
\begin{thm}\label{gfe}
Let $\alpha$ be a non-threshold channel. Then, for $f\in L^{2,l}(X),\ l>1/2$ and $h\in C_0^{\infty}(C_a')$ we have
\begin{itemize}
\item[(1)]
$$(F_{\alpha}(W_{\alpha}^+)^*f)(\lambda)=\mathcal G_{\alpha}^{+}(\lambda)f,$$
$$(F_{\alpha}(W_{\alpha}^-)^*f)(\lambda)=\mathcal G_{\alpha}^{-}(\lambda)f,$$
for $a.e.\, \lambda>E_{\alpha}$.
\item[(2)]
$$(\mathcal G_{\alpha}^+(\lambda))^*=-\tilde D_{\alpha}^-(\lambda)P_{\alpha,-}(\lambda),$$
$$(\mathcal G_{\alpha}^-(\lambda))^*=-\tilde D_{\alpha}^+(\lambda)P_{\alpha,+}(\lambda)\mathcal R,$$
for $a.e.\, \lambda>E_{\alpha}$, where $\tilde D_{\alpha}^{\pm}(\lambda):=2^{-1}i\pi^{-1/2}\lambda_{\alpha}^{-1/4}e^{\pm i\pi(n_a-3)/4}$.
\end{itemize}
\end{thm}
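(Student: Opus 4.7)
The plan is to derive (1) from a stationary representation of the wave operators, and then to obtain (2) from (1) via a Green-identity duality argument.

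For (1), fix $f\in L^{2,l}(X)$ and $g$ in a dense subspace of $L^2(X_a)$ (for instance $g=F_\alpha^*\psi$ with $\psi\in C_0^\infty((E_\alpha,\infty)\times C_a')$). Applying the fundamental theorem of calculus to the function $t\mapsto (e^{-i(S_{a,+}(D_a,t)+E_\alpha t)}g,\,J_\alpha^* e^{-itH}f)$ and exploiting both the Hamilton--Jacobi equation $\partial_t S_{a,+}=|\xi_a|^2+\tilde I_a(\nabla_{\xi_a}S_{a,+})$ and the intertwining identity $HJ_\alpha=J_\alpha(-\Delta_a+E_\alpha)+(I_a-\tilde I_a)J_\alpha+B_\alpha$, where $B_\alpha$ collects short-range remainders that decay because $u_\alpha$ is Schwartz, one reduces $(g,(W_\alpha^+)^*f)-(g,J_\alpha^*f)$ to an Abel-regularised time integral of an oscillatory expression. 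Expanding $e^{-itH}f$ via the spectral theorem, interchanging the $t$- and $\lambda$-integrations, and evaluating the $t$-integral by stationary phase at the critical point given by the Legendre relation $x_a=\nabla_{\xi_a}S_{a,+}(\xi_a,t)$, the $\epsilon\to 0^+$ limit is then controlled by the limiting absorption principle in the $\mathcal B$--$\mathcal B^*$ framework (valid since $\lambda\notin\sigma_{pp}(H)\cup\mathcal T(H)$) and is identified with the outgoing asymptotic profile of $R(\lambda+i0)f$ in channel $\alpha$. By the asymptotic analysis of Section~\ref{fifth.2sec} this profile is $Q_\alpha^+(R(\lambda+i0)f)$, and the constant produced by the stationary phase is exactly $D_\alpha^+(\lambda)$, so $(F_\alpha(W_\alpha^+)^*f)(\lambda)=\mathcal G_\alpha^+(\lambda)f$. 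The identity for $(W_\alpha^-)^*$ follows from the same calculation with $t\to-\infty$ and $S_{a,-}$; the reflection $\mathcal R$ in $\mathcal G_\alpha^-$ arises because the stationary-phase critical point $x_a=\nabla_{\xi_a}S_{a,-}(\xi_a,t)\sim 2\xi_at$ satisfies $\hat x_a=-\hat\xi_a$ as $t\to-\infty$.

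For (2), fix $\phi\in C_0^\infty(C_a')$, set $u_\phi:=P_{\alpha,-}(\lambda)\phi$, and apply Green's identity on $\{|x|<R\}$ to the pair $(u_\phi,v)$ with $v:=R(\lambda+i0)f$: since $(H-\lambda)u_\phi=0$ and $(H-\lambda)v=f$, one has $(u_\phi,f)=\Gamma_R(\bar u_\phi,v)$ up to the bulk term which vanishes, where $\Gamma_R$ is the usual Wronskian-type surface integral on $|x|=R$. Taking the complex conjugate in the $L^2$-pairing flips outgoing and incoming: $\bar u_\phi$ has incoming profile $\bar\phi$ prescribed with an outgoing correction, while $v$ is purely outgoing with profile $Q_\alpha^+(v)$. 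The oscillatory $e^{\pm 2iK_a}$ cross-terms decay in the Abel sense by \eqref{myeqfirst.2.1}, the eikonal identity $|\nabla_aK_a|^2+\tilde I_a=\lambda_\alpha$ cancels the would-be $R$-growth of the Wronskian, and the surviving contribution is a pairing of the prescribed incoming amplitude of $\bar u_\phi$ with the outgoing amplitude of $v$. A direct computation, matching the constants against \eqref{myeq1.2.5} and the definition of $\mathcal G_\alpha^+$, gives
$\lim_{R\to\infty}\Gamma_R(\bar u_\phi,v)=-\tilde D_\alpha^-(\lambda)^{-1}\langle\phi,\mathcal G_\alpha^+(\lambda)f\rangle_{L^2(C_a)}.$
Rearranging and using density of $C_0^\infty(C_a')$ in $L^2(C_a)$ yields $(\mathcal G_\alpha^+(\lambda))^*\phi=-\tilde D_\alpha^-(\lambda)P_{\alpha,-}(\lambda)\phi$. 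The companion identity for $(\mathcal G_\alpha^-(\lambda))^*$ is obtained identically with $v:=R(\lambda-i0)f$ and $u_\phi:=P_{\alpha,+}(\lambda)\mathcal R\phi$; the factor $\mathcal R$ matches the $\mathcal R$ in the definition of $\mathcal G_\alpha^-$.

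The principal technical obstacle is the uniform control of the oscillatory integrals that underlies both steps. In step~(1) one must justify interchanging $\lim_{\epsilon\to 0^+}$ with the $\lambda$-integration and performing the stationary-phase expansion of the $t$-integral uniformly; this reduces to uniform $\mathcal B^*$-bounds on $R(\lambda+i\epsilon)f$ together with a careful treatment of the Legendre duality between $S_{a,\pm}(\xi_a,t)$ and $K_a(x_a,\lambda_\alpha)$ near the critical point. In step~(2) the analogue is the existence and linear-in-$R$ growth of the boundary integrals, which is precisely the content of the limit \eqref{myeqfirst.2.1} established in Section~\ref{fifth.1sec}. Once those estimates are in place, tracking the constants $e^{\pm i\pi(n_a-3)/4}$, $\pi^{-1/2}$ and $\lambda_\alpha^{\pm 1/4}$ through both the stationary-phase expansion and the Wronskian becomes routine bookkeeping against the Fourier normalisation \eqref{myeq1.2.5} and the eikonal construction of $K_a$ in Section~\ref{secondsec}.
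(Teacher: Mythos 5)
Your overall architecture matches the paper's: part (1) via a Cook-method stationary representation of $W_\alpha^{\pm}$ whose Abel-regularised time integral is converted into boundary values of the resolvent, and part (2) via a Green-identity/boundary-pairing duality. For (2) in particular, your Wronskian computation is exactly what the paper has already packaged into Proposition \ref{asymptotic} and the identity \eqref{myeq5.10}: once \eqref{myeq8.2} is in hand, $(\mathcal G_\alpha^+(\lambda))^*$ is read off from the definition of $P_{\alpha,-}(\lambda)$ with no further surface-integral analysis needed, so your step (2) is the same argument in a more laborious form. The genuine divergence is in (1). You propose to evaluate the $t$-integral by stationary phase at the Legendre critical point and to identify the resulting amplitude directly; the paper instead sidesteps any stationary-phase computation by invoking the intertwining $\tilde W_{\alpha}^{\pm}F_{\alpha}^*=\tilde F_{\alpha,\pm}^*$ for the effective one-body operator $\tilde H_a=-\Delta_a+\tilde I_a$ (imported from Ikebe--Isozaki and G\^atel--Yafaev), so that $e^{-it\tilde h_\alpha}(F_\alpha)^*\hat f_1$ becomes $\tilde{\mathcal F}_{\alpha,+}^*(\lambda)e^{-it\lambda}\hat f_1(\lambda)$ and the $s$- and $t$-integrals are elementary Abel limits producing $R(\lambda\mp i0)$; the explicit eikonal profile then enters through Lemmas \ref{Frepre} and \ref{2bodyasympt} combined with the outgoing/incoming uniqueness theorem (Lemmas \ref{outdirect} and \ref{Iunique}), which lets one replace $T_a^+\tilde{\mathcal F}_{\alpha,+}^*(\lambda)\hat f_1(\lambda)$ by $-C_\alpha^+(\lambda)w_\alpha^+(\lambda)$. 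The paper's route buys rigor cheaply by reusing the two-body spectral representation; yours would have to re-derive it.

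One point in your step (1) needs to be repaired rather than merely estimated. Your intertwining identity $HJ_\alpha=J_\alpha(-\Delta_a+E_\alpha)+(I_a-\tilde I_a)J_\alpha+B_\alpha$ produces a Cook integrand that is only $\mathcal O(\langle x_a\rangle^{-\mu})$ with $\mu>1/2$ possibly $\le 1$, so the time integral is not absolutely convergent and the Abel limit cannot be identified with $R(\lambda\mp i0)$ applied to an $L^{2,l}$, $l>1/2$, function without further input. The paper's fix is the microlocal cutoff $T_a^+=Op(\varphi(\xi_a)\psi(|\xi_a|^2)\eta(|x_a|)\theta_+(\hat x_a\cdot\hat\xi_a))$ inserted in \eqref{myeq6.0.1}: on the support of $\theta_+'$ one has $\hat x_a\cdot\hat\xi_a<1-\epsilon$, so by the micro-local resolvent estimate (Lemma \ref{micro1}) the commutator $G_\alpha^+$ maps strictly outgoing/incoming states into $L^{2,l}(X)$ with $l>1/2$. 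Without this localization (or an equivalent device) your ``Abel-regularised time integral'' step does not close; the smallness of $B_\alpha$ coming from $u_\alpha$ being Schwartz handles only the $x^a$-directions, not the long-range intercluster tail. Your final constants and the appearance of $\mathcal R$ in $\mathcal G_\alpha^-$ are consistent with the paper's, so once the microlocalization is added the bookkeeping you describe does go through.
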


\section{Generalized Fourier transforms for decaying potentials}\label{secondsec}
In this section we suppose $n\in \mathbb N$, and that the potential $V(x)$ is a real valued function such that $V\in C^{\infty}(\mathbb R^n)$, and for some $\mu>0$,
\begin{equation}\label{myeq2.0.1}
\lvert\partial^{\gamma} V(x)\rvert=\mathcal O(\lvert x\rvert^{-\mu-\lvert \gamma\rvert})\ \mathrm{as}\ \lvert x\rvert\to +\infty,
\end{equation}
for any multi-index $\gamma$.

The oscillations in the asymptotic behaviors of the functions $\tilde R(\lambda\pm i0)f,\ f\in \mathcal B(\mathbb R^n)$ are given by the solutions to the eikonal equations, where $\tilde R(z):=(\tilde H-z)^{-1}$, $\tilde H:=-\Delta+V$.  The solutions to the eikonal equations are given by the following lemma.

\begin{lem}[{\cite[Lemma 2.1]{II}}, {\cite[Theorem4.1]{Is}}]\label{Ydef}
There exists a real valued function $Y(x,\lambda)\in C^{\infty}(\mathbb R^n\times \mathbb R_+),\ \mathbb R_+:=(0,\infty)$ satisfying the following properties:
\begin{itemize}
\item[(1)]For any compact set $\Lambda\in \mathbb R_+$, there exists a constant $R_0(\Lambda)$ such that
\begin{equation}\label{myeq2.1}
2\sqrt{\lambda}\frac{\partial Y}{\partial r}(x,\lambda)=\lvert \nabla_xY(x,\lambda)\rvert^2+V(x),
\end{equation}
for $\lvert x\rvert >R_0(\Lambda),\ \lambda\in \Lambda$.
\item[(2)] For any $\alpha\in \mathbb N^n,\ k\in \mathbb N$ and any compact set $\Lambda \in \mathbb R_+$ we have
$$\left \lvert \left(\frac{\partial}{\partial x}\right)^{\alpha}\left(\frac{\partial}{\partial \lambda}\right)^kY(x,\lambda)\right\rvert\leq C_{\alpha,k,\Lambda}\langle x\rangle^{1-\lvert\alpha\rvert-\mu},\ \lambda\in \Lambda.$$
\end{itemize}
\end{lem}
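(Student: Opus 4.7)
The plan is to construct $Y$ as the fixed point of a Volterra-type integral equation obtained by integrating \eqref{myeq2.1} along outgoing radial rays, and then verify the decay estimates (2) by induction through the iteration.

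First, I would observe that any sufficiently regular solution of \eqref{myeq2.1} with controlled growth at infinity satisfies the formal integral identity
$$Y(x,\lambda) = -\frac{1}{2\sqrt\lambda}\int_r^\infty \bigl(V(s\hat x) + \lvert\nabla Y(s\hat x,\lambda)\rvert^2\bigr)\,ds,\qquad r = \lvert x\rvert,$$
whenever the right-hand side converges; when $\mu\leq 1$ and the primitive of $V(s\hat x)$ diverges, I would instead integrate from a large fixed radius $R_0(\Lambda)$ with a zero boundary condition at $r=R_0(\Lambda)$, which produces only a smooth $\lambda$-dependent term and is compatible with the target growth $\langle x\rangle^{1-\mu}$ allowed by (2).

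Second, I would set up the Picard iteration $Y^{(0)}\equiv 0$ and $Y^{(N+1)} := TY^{(N)}$ with $T$ defined by the right-hand side above, and establish inductively that
$$\bigl\lvert\partial_x^{\alpha}\partial_\lambda^k Y^{(N)}(x,\lambda)\bigr\rvert \leq C_{\alpha,k,\Lambda}\langle x\rangle^{1-\lvert\alpha\rvert-\mu}$$
uniformly in $N$, for $\lvert x\rvert > R_0(\Lambda)$ and $\lambda\in\Lambda$. The inductive step uses \eqref{myeq2.0.1} directly for the $V$-contribution, and the Leibniz rule together with the induction hypothesis for the quadratic term: the integrand $\partial_x^\alpha\partial_\lambda^k \lvert\nabla Y^{(N)}(s\hat x,\lambda)\rvert^2$ is controlled by $C\langle s\rangle^{-2\mu-\lvert\alpha\rvert}$, and its radial primitive is of order $\langle r\rangle^{1-2\mu-\lvert\alpha\rvert}$, which is dominated by $\langle r\rangle^{1-\mu-\lvert\alpha\rvert}$ since $\mu>0$. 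Angular differentiation of $V(s\hat x)$ costs an extra factor $s^{-1}$ per derivative, which only improves the estimate and is absorbed into the same power count.

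Finally, I would run a contraction argument to extract a genuine solution. For $Y,\tilde Y$ in the weighted Fr\'echet space defined by the seminorms above,
$$TY - T\tilde Y = -\frac{1}{2\sqrt\lambda}\int_r^\infty \nabla(Y+\tilde Y)\cdot\nabla(Y-\tilde Y)\,ds,$$
and each seminorm of this difference carries a prefactor that tends to $0$ as $R_0(\Lambda)\to\infty$, so for $R_0(\Lambda)$ large the map $T$ contracts on a suitable bounded ball around $0$. The fixed point $Y$ is smooth in $(x,\lambda)$, satisfies the integral equation by construction, and by differentiation in $r$ satisfies \eqref{myeq2.1} on $\{\lvert x\rvert>R_0(\Lambda)\}\times\Lambda$, with the required bounds inherited from the iteration. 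The main obstacle is arranging the induction so that differentiating the quadratic term $\lvert\nabla Y\rvert^2$ does not destroy the hierarchy of decay rates; this is resolved by the observation that one additional $x$-derivative on $Y$ improves the decay by a full unit power, which more than compensates for the single factor of $s^{-\mu}$ introduced per derivative on the remaining factor in the product.
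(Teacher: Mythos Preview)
The paper does not give its own proof of this lemma; it is quoted directly from Ikebe--Isozaki \cite[Lemma~2.1]{II} and Isozaki \cite[Theorem~4.1]{Is}, and the argument you sketch---recasting \eqref{myeq2.1} as a radial Volterra equation, running a Picard iteration, and closing a contraction for $R_0(\Lambda)$ large---is precisely the construction carried out in those references. Your treatment of the quadratic term is correct: each $x$-derivative on $\nabla Y^{(N)}$ gains a full power of $\langle x\rangle^{-1}$, so the product rule on $|\nabla Y^{(N)}|^2$ produces $\langle x\rangle^{-2\mu-|\alpha|}$, and after radial integration and the $r^{-1}$ that accompanies each angular derivative one lands back inside $\langle x\rangle^{1-\mu-|\alpha|}$. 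The only point you leave implicit is how to pass from a solution on $\{|x|>R_0(\Lambda)\}\times\Lambda$ for each compact $\Lambda$ to a single $Y\in C^\infty(\mathbb R^n\times\mathbb R_+)$; this is handled in the cited papers by cutting off $V$ near the origin (which does not affect \eqref{myeq2.1} for large $|x|$) and solving once for all $\lambda$ in a fixed compact set, then gluing in $\lambda$---a routine step, but worth noting if you intend to write out the proof in full.
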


If we put 
\begin{equation*}
K(x,\lambda)=\sqrt{\lambda}r-Y(x,\lambda),
\end{equation*}\label{myeq2.1.1}
where $r:=\lvert x\rvert$, then by \eqref{myeq2.1} we can see $K(x,\lambda)$ satisfies the eikonal equation
$$\lvert \nabla_xK(x,\lambda)\rvert^2+V(x)=\lambda.$$
Set 
\begin{equation}\label{myeq2.2}
w_{\pm}:=r^{-(n-1)/2}e^{\pm i(K(x,\lambda)-\pi(n-3)/4)}.
\end{equation}
Then, the asymptotic behavior of $\tilde R(\lambda\pm i0)f,\ f\in \mathcal B(\mathbb R^n)$ is given by the following lemma.
\begin{lem}[{\cite[Theorem 3.5]{GY}}, \cite{Is}]
For any $f\in \mathcal B(\mathbb R^n)$ and $\lambda>0$, there exist $a_{\pm}\in L^2(\mathbb S^{n-1})$ such that, as $x\to \infty$,
$$(\tilde R(\lambda\pm i0)f)(x)=\pi^{1/2}\lambda^{-1/4}a_{\pm}(\pm\hat x)w_{\pm}(x,\lambda)+o_{av}(\lvert x\rvert^{-(n-1)/2}),$$
where $\hat x:=x/\lvert x\rvert$, and $f(x)=o_{av}((\lvert x\rvert^{-(n-1)/2})$ means that
$$\lim_{\rho\to+\infty}\rho^{-1}\int_{\lvert x\rvert\leq \rho}\lvert f(x)\rvert^2dx=0.$$
\end{lem}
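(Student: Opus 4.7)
I would treat the $+$ case; the $-$ case follows by the analogous argument with $w_-$ replacing $w_+$ and the incoming radiation condition in place of the outgoing one, together with the convention that the incoming far-field pattern is labelled by the direction of origin $-\hat x$. Set $u := \tilde R(\lambda+i0)f$ and write $u = w_+ v_+$. A direct computation using Lemma~\ref{Ydef}, the eikonal equation $\lvert\nabla_x K\rvert^2+V=\lambda$, and the identities $\nabla K = \sqrt{\lambda}\hat x - \nabla Y$, $\Delta K = \sqrt{\lambda}(n-1)/r - \Delta Y$ gives
$$w_+^{-1}(\tilde H-\lambda)w_+ = \frac{(n-1)(n-3)}{4r^2} - i\frac{n-1}{r}\partial_r Y + i\Delta Y = \mathcal O(r^{-1-\mu}).$$
Expanding $(\tilde H-\lambda)(w_+v_+)=f$ by the Leibniz rule and solving for $-2i\sqrt\lambda\,\partial_r v_+$ yields the transport equation
$$-2i\sqrt\lambda\,\partial_r v_+ = w_+^{-1}f + \Delta v_+ - \frac{n-1}{r}\partial_r v_+ - 2i\nabla Y\cdot\nabla v_+ - \bigl[w_+^{-1}(\tilde H-\lambda)w_+\bigr]\,v_+,$$
in which the left-hand side isolates the leading radial behaviour.

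Second, I would invoke the limiting absorption principle to obtain $u \in \mathcal B^*(\mathbb R^n)$, hence $v_+ \in \mathcal B^*(\mathbb R^n)$, together with the outgoing long-range radiation condition asserting that $(\nabla - i\nabla_x K)u$ has averaged decay strictly better than $\mathcal B^*$, in particular $o_{av}(\lvert x\rvert^{-(n-1)/2})$. Using $w_+^{-1}(\nabla - i\nabla K)u = \nabla v_+ + \frac{n-1}{2r}\hat x\,v_+$, this translates into analogous decay for $\nabla v_+$, so that the terms $\Delta v_+$, $\nabla Y\cdot \nabla v_+$, and the $\mathcal O(r^{-1-\mu})v_+$ contribution become integrable remainders along rays. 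Together with $f\in\mathcal B(\mathbb R^n)$, the source term $w_+^{-1}f$ is likewise integrable in $r$ in an averaged-$L^2$ sense.

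Third, in spherical coordinates $x=r\omega$, put $\tilde v(r,\omega):=v_+(r\omega)$. The transport equation becomes $\partial_r\tilde v = (2i\sqrt\lambda)^{-1}\tilde g(r,\omega)$ with $\tilde g$ integrable in $r$ valued in $L^2(\mathbb S^{n-1})$ in the averaged sense, so a tauberian argument produces a limit $a_+(\omega)\in L^2(\mathbb S^{n-1})$ with $\tilde v(r,\cdot)\to \pi^{1/2}\lambda^{-1/4}\,a_+(\cdot)$ in the $o_{av}$ sense as $r\to\infty$. The normalization constant $\pi^{1/2}\lambda^{-1/4}$ is fixed by comparison with the explicit asymptotics of the free outgoing resolvent $(-\Delta-\lambda-i0)^{-1}$, whose Schwartz kernel has Hankel-function asymptotics matching exactly this factor.

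\textbf{Main obstacle.} The principal technical difficulty is the second step: establishing a radiation condition adapted to the long-range phase $K$ rather than to $\sqrt{\lambda}\,r$. This requires a Mourre-type positive commutator estimate with the modified generator $-i(\partial_r - i\partial_r K)$, together with a careful microlocal separation of incoming and outgoing directions to control the angular-radial coupling introduced by $\nabla Y$, which is only $\mathcal O(r^{-\mu})$ with $\mu$ possibly less than $1$ and hence not integrable along rays without additional care.
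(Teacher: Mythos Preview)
The paper does not prove this lemma; it is stated with citations to \cite{GY} (Theorem~3.5) and \cite{Is} and used as a black box. Your sketch is therefore not being compared against an argument in the paper but against the original sources.

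That said, your outline is broadly in the spirit of the G\^atel--Yafaev and Isozaki arguments: factor $u=w_+v_+$, derive a transport equation for $v_+$, and use a long-range radiation estimate to show that the non-leading terms are integrable remainders so that $v_+(r,\cdot)$ has an $L^2(\mathbb S^{n-1})$ limit. A few points deserve tightening. First, the claim that the radiation condition directly gives $\nabla v_+ = o_{av}(r^{-(n-1)/2})$ is too quick: what is actually proved in \cite{GY} is an estimate of the form $\lVert(\nabla-i\nabla K)u\rVert_{\mathcal B^*_0}<\infty$ (or a weighted $L^2$ bound), and passing from this to control of $\Delta v_+$ and the cross term $\nabla Y\cdot\nabla v_+$ requires an additional iteration or a commutator argument, not just the first-order condition. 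Second, the tauberian step in your third paragraph is stated loosely; the actual argument establishes convergence in an averaged sense (which is exactly the $o_{av}$ conclusion), and one should be explicit that pointwise convergence of $\tilde v(r,\cdot)$ in $L^2(\mathbb S^{n-1})$ is not claimed. Third, fixing the constant $\pi^{1/2}\lambda^{-1/4}$ by comparison with the free resolvent is a reasonable heuristic, but in the cited papers the constant emerges directly from the normalization of $w_\pm$ (which already contains the phase $e^{\mp i\pi(n-3)/4}$) together with the boundary pairing/Green's formula identity, not from a separate free-resolvent computation.

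Your identification of the main obstacle is accurate: the long-range radiation estimate is the substantive analytic input, and it does require Mourre-type machinery adapted to the modified phase.
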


We define the mapping $\mathcal F_{\pm}(\lambda) : \mathcal B(\mathbb R^n) \to L^2(\mathbb S^{n-1})$ by $(\mathcal F_{\pm}(\lambda)f)(\hat x) = a_{\pm}(\hat x)$. It is known that $\mathcal F_{\pm}(\lambda)$ is bounded from $\mathcal B(\mathbb R^n)$ into $L^2(\mathbb S^{n-1})$ (see \cite{GY}). Set $\hat {\mathcal H}=L^2(\mathbb R_+; L^2(\mathbb S^{n-1}))$. For any $f\in \mathcal B(\mathbb R^n)$ we define the mapping $F_{\pm}: \mathcal B(\mathbb R^n)\to\hat{\mathcal H}$ by $(F_{\pm}f)(\lambda)=\mathcal F_{\pm}(\lambda)f$. Then, $F_{\pm}$ is uniquely extended to the partial isometry with initial set $\mathcal H_{ac}(\tilde H)$ and final set $\hat{\mathcal H}$, where $\mathcal H_{ac}(\tilde H)$ is the absolutely continuous subspace $\tilde H$ (see \cite[Theorem 5.2]{GY} and \cite[Theorem 2.5]{II}).

We have explicit representations for $\mathcal F_{\pm}^*(\lambda):=(\mathcal F_{\pm}(\lambda))^*$ on $a\in C^{\infty}(\mathbb S^{n-1})$. Let $\eta\in C^{\infty}(\mathbb R)$ be a function satisfying the following condition:
there exists $\kappa>0$ such that
\begin{equation}\label{myeq2.3}
\eta(t)=\begin{cases}
&0\qquad (t< \kappa)\\
&1\qquad (t>2\kappa).
\end{cases}
\end{equation}
Then, the functions
\begin{equation}\label{myeq2.4}
u_{\pm}(x,\lambda)=\eta(r)a(\pm \hat x)w_{\pm}(x,\lambda),
\end{equation}
belong to the space $\mathcal B^*(\mathbb R^n)$. We set $g_{\pm}(\lambda) = (-\Delta +V-\lambda)u_{\pm}(\lambda)$. Then, by straightforward calculations we can see $g_{\pm}(\lambda)\in L^{2,l}(\mathbb R^n)$ for some $l>1/2$.
\begin{lem}[{\cite[Lemma 3.7, Corollary 3.8]{GY}}, {\cite[Theorem 3.4]{II}}]\label{Frepre}
Let $a\in C^{\infty}(\mathbb S^{n-1})$, $w_{\pm}$ be defined by \eqref{myeq2.2} and $u_{\pm},\ g_{\pm}$ be as above. Then we have
$$\pm 2i\pi^{1/2}\lambda^{1/4}\mathcal F_{\pm}^*(\lambda)a=u_{\pm}(\lambda)-\tilde R(\lambda\mp i0)g_{\pm}(\lambda).$$
Moreover, for any $a\in L^2(\mathbb S^{n-1})$ we have $\mathcal F_{\pm}^*(\lambda)a\in H^2_{loc}(\mathbb R^n)\cap \mathcal B^*(\mathbb R^n)$ and
$$(-\Delta + V-\lambda)\mathcal F^*_{\pm}(\lambda)a=0.$$
\end{lem}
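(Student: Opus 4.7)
The plan is to define $\phi_{\pm}:=u_{\pm}-\tilde R(\lambda\mp i0)g_{\pm}$, verify that $\phi_{\pm}\in\mathcal B^*(\mathbb R^n)$ and $(\tilde H-\lambda)\phi_{\pm}=0$, and then identify $\phi_{\pm}$ with $\pm 2i\pi^{1/2}\lambda^{1/4}\mathcal F_{\pm}^{*}(\lambda)a$ by testing against arbitrary $f\in\mathcal B(\mathbb R^n)$. The $\mathcal B^*$-membership holds since $u_{\pm}\in\mathcal B^*$ by \eqref{myeq2.4} and $g_{\pm}\in L^{2,l}(\mathbb R^n)\subset\mathcal B(\mathbb R^n)$ forces $\tilde R(\lambda\mp i0)g_{\pm}\in\mathcal B^*$ by the limiting absorption principle; and $g_{\pm}=(\tilde H-\lambda)u_{\pm}$ gives $(\tilde H-\lambda)\phi_{\pm}=g_{\pm}-g_{\pm}=0$. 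Using $\tilde R(\lambda\mp i0)^{*}=\tilde R(\lambda\pm i0)$ and writing $\psi_{\pm}:=\tilde R(\lambda\pm i0)f$, the pairing reduces to
\begin{equation*}
(\phi_{\pm},f)=(u_{\pm},f)-(g_{\pm},\psi_{\pm}),
\end{equation*}
where both integrals on the right are absolutely convergent by the $\mathcal B$--$\mathcal B^*$ duality.

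\textbf{Boundary computation.} Next I would apply Green's identity on the ball $B_R=\{|x|<R\}$ using $(\tilde H-\lambda)u_{\pm}=g_{\pm}$ and $(\tilde H-\lambda)\psi_{\pm}=f$; the $V$-contributions cancel and one obtains
\begin{equation*}
\int_{B_R}\bigl(u_{\pm}\overline{f}-g_{\pm}\overline{\psi_{\pm}}\bigr)\,dx=-\int_{\partial B_R}\bigl(u_{\pm}\partial_r\overline{\psi_{\pm}}-\overline{\psi_{\pm}}\partial_r u_{\pm}\bigr)\,dS.
\end{equation*}
As $R\to\infty$ the left-hand side converges to $(\phi_{\pm},f)$. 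Into the right-hand side I would insert the asymptotics $u_{\pm}=a(\pm\hat x)w_{\pm}+\mathcal O(r^{-(n+1)/2})$ from \eqref{myeq2.4} and $\psi_{\pm}=\pi^{1/2}\lambda^{-1/4}(\mathcal F_{\pm}(\lambda)f)(\pm\hat x)w_{\pm}+o_{av}(r^{-(n-1)/2})$ from the preceding lemma, along with $\partial_r w_{\pm}=\pm i\sqrt\lambda\,w_{\pm}+\mathcal O(r^{-1})w_{\pm}$, which follows from \eqref{myeq2.2} and Lemma \ref{Ydef}. The main part of the Wronskian is $\mp 2i\sqrt\lambda\,u_{\pm}\overline{\psi_{\pm}}$, so the spherical integral (together with the change of variables $\hat x\mapsto-\hat x$ in the minus case) produces the limit
\begin{equation*}
-\int_{\partial B_{R_k}}\!\bigl(u_{\pm}\partial_r\overline{\psi_{\pm}}-\overline{\psi_{\pm}}\partial_r u_{\pm}\bigr)\,dS\;\longrightarrow\;\pm 2i\pi^{1/2}\lambda^{1/4}(a,\mathcal F_{\pm}(\lambda)f)_{L^{2}(\mathbb S^{n-1})},
\end{equation*}
along a subsequence $R_{k}\to\infty$ chosen in the next step.

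\textbf{Subsequence and remainder.} The hard part will be showing that the $o_{av}$ remainder of $\psi_{\pm}$ and its radial derivative contribute nothing to the boundary integral. For the undifferentiated remainder the averaging characterisation of $\mathcal B^{*}$, namely $\rho^{-1}\int_{|x|<\rho}|\psi_{\pm,\mathrm{err}}|^{2}dx\to 0$, yields a sequence $R_{k}\to\infty$ with $\|\psi_{\pm,\mathrm{err}}\|_{L^{2}(\partial B_{R_k})}\to 0$, and Cauchy--Schwarz against $\partial_r u_{\pm}$ disposes of the term $\overline{\psi_{\pm,\mathrm{err}}}\partial_r u_{\pm}$. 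To handle $u_{\pm}\partial_r\overline{\psi_{\pm,\mathrm{err}}}$ I would invoke the Sommerfeld-type radiation condition $(\partial_r\mp i\sqrt\lambda)\psi_{\pm}\in L^{2,-1/2+\delta}(\mathbb R^n)$ implicit in the proof of the preceding lemma in \cite{GY,II,Is}; this lets one substitute $\partial_r\psi_{\pm}=\pm i\sqrt\lambda\,\psi_{\pm}+R_{\pm}$ with $R_{\pm}$ whose surface $L^{2}$-norm on $\partial B_{R_k}$ can be made arbitrarily small along a further subsequence. This radiation-condition step is the genuine technical obstacle and cannot be bypassed without essentially reproving the preceding lemma. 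Once it is established, one obtains $(\phi_{\pm},f)=\pm 2i\pi^{1/2}\lambda^{1/4}(\mathcal F_{\pm}^{*}(\lambda)a,f)$ for every $f\in\mathcal B$, hence the displayed identity.

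\textbf{Extension to $L^{2}(\mathbb S^{n-1})$.} Finally, the boundedness of $\mathcal F_{\pm}(\lambda):\mathcal B(\mathbb R^n)\to L^{2}(\mathbb S^{n-1})$ dualises to boundedness of $\mathcal F_{\pm}^{*}(\lambda):L^{2}(\mathbb S^{n-1})\to\mathcal B^{*}(\mathbb R^n)$, so $\mathcal F_{\pm}^{*}(\lambda)a\in\mathcal B^{*}(\mathbb R^n)$ for every $a\in L^{2}(\mathbb S^{n-1})$. Approximating $a$ by $a_{k}\in C^{\infty}(\mathbb S^{n-1})$, the identity just proved gives $(-\Delta+V-\lambda)\mathcal F_{\pm}^{*}(\lambda)a_{k}=0$ distributionally, and passing to the limit in $\mathcal B^{*}\hookrightarrow\mathcal D'(\mathbb R^n)$ yields $(-\Delta+V-\lambda)\mathcal F_{\pm}^{*}(\lambda)a=0$ in $\mathcal D'(\mathbb R^n)$. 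Interior elliptic regularity for the second-order operator $-\Delta+V-\lambda$ then upgrades the solution to $H^{2}_{\mathrm{loc}}(\mathbb R^n)$, completing the proof.
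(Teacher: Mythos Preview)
The paper does not give its own proof of this lemma; it is quoted directly from \cite{GY} and \cite{II}. Your outline is essentially the argument of those references: show $\phi_{\pm}:=u_{\pm}-\tilde R(\lambda\mp i0)g_{\pm}$ is a $\mathcal B^*$-solution of the homogeneous equation, pair with an arbitrary $f\in\mathcal B$ via $\tilde R(\lambda\mp i0)^*=\tilde R(\lambda\pm i0)$, convert to a boundary Wronskian by Green's formula, insert the asymptotics of $\psi_{\pm}=\tilde R(\lambda\pm i0)f$, and finish by density plus interior elliptic regularity. Your identification of the radiation condition on $\psi_{\pm}$ as the key analytic input is correct.

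Two small remarks. First, in the long-range setting the natural radiation condition is $(\partial_r\mp i\partial_rK)\psi_{\pm}=o_{av}(r^{-(n-1)/2})$ rather than $(\partial_r\mp i\sqrt\lambda)\psi_{\pm}$; since $\partial_rK-\sqrt\lambda=-\partial_rY=\mathcal O(r^{-\mu})$ the discrepancy is harmless after pairing with $u_{\pm}$, but stating it with $K$ matches \cite{GY,Is} and avoids an extra estimate. Second, your subsequence extraction works (summing the two $o_{av}$ integrands gives a common subsequence), but a tidier device---used in \cite{GY} and in the present paper's Proposition~\ref{asymptotic}---is to average the boundary identity in $R$: replace $\int_{\partial B_R}\cdots\,dS$ by $\rho^{-1}\int_0^{\rho}\int_{\partial B_R}\cdots\,dS\,dR$ and let $\rho\to\infty$. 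The left-hand side still converges to $(\phi_{\pm},f)$ (it is an absolutely convergent integral), and the $o_{av}$ remainders now vanish by definition, with no subsequence bookkeeping.
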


We introduce the following class of symbols of pseudodifferential operators.
Let $S^{m,l},\ m,l \in \mathbb R$, be the symbol class of $C^{\infty}(\mathbb R^n\times \mathbb R^n)$-functions $p(x,\xi)$ satisfying the following condition: for any $\alpha, \beta\in\mathbb N^n$ there exists $C_{\alpha\beta}>$ such that
$$\lvert \partial_{x}^{\alpha}\partial_{\xi}^{\beta}p(x,\xi)\rvert\leq C_{\alpha,\beta}\langle x\rangle^{l-\lvert \alpha\rvert}\langle \xi\rangle^m,\ \forall (x,\xi)\in \mathbb R^n\times \mathbb R^n.$$
The corresponding pseudodifferential operators are defined by the Weyl quantization:
$$(Op(p)\psi)(x)=(2\pi)^{-n}\int\int e^{i(x-y)\cdot\xi}p((x+y)/2,\xi)\psi(y)dyd\xi,$$
for $\psi \in \mathcal S(\mathbb R^n)$. We also define the right quantization of $p$ by
$$(Op^r(p)\psi)(x)=(2\pi)^{-n}\int\int e^{i(x-y)\cdot\xi}p(y,\xi)\psi(y)dyd\xi,$$
for $\psi \in \mathcal S(\mathbb R^n)$.

We need the following micro-local resolvent estimate.
\begin{lem}\label{micro1}
Let $\lambda>0$, $s>1/2$, $t>1$ and $p_{\pm}\in S^{0,0}$ be a symbol satisfying the following condition: there exists $0<\epsilon$ such that $p_{\pm}(x,\xi)=0$ if $\pm \hat x\cdot \hat\xi<1-\epsilon$, where $\hat x:=x/\lvert x\rvert$ and $\hat \xi:=\xi/\lvert \xi\rvert$. Then there exists $C>0$ such that
$$\lVert Op(p_{\mp})\tilde R(\lambda\pm i0)f\rVert_{s-t}\leq C\lVert f\rVert_s.$$
\end{lem}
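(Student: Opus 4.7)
The plan is to prove this by a positive commutator (Mourre-type) argument. Write $u := \tilde R(\lambda+i0)f$, the outgoing solution of $(\tilde H-\lambda)u = f$. The underlying intuition is that the wavefront of $u$ is swept by the Hamilton flow of $|\xi|^2+V$ onto the outgoing portion of the characteristic set $\{|\xi|^2 = \lambda\}$, whereas $p_-$ is supported in the disjoint incoming cone $\{\hat x\cdot\hat\xi \leq -(1-\epsilon)\}$; hence $Op(p_-)u$ should enjoy extra spatial decay.

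The first step is to introduce a self-adjoint conjugate operator $A = Op(q)$ with
$$q(x,\xi) = \eta_0(|\xi|^2)\langle x\rangle^{2(s-t)+1}\phi^2(\hat x\cdot\hat\xi),$$
where $\eta_0 \in C_0^\infty(\mathbb R)$ equals $1$ near $\lambda$ and is supported in a small neighborhood, and $\phi \in C^\infty(\mathbb R)$ is nondecreasing with $\phi \equiv 1$ on $(-\infty, -(1-\epsilon)]$ and $\phi \equiv 0$ on $[-(1-\epsilon/2),\infty)$, arranged so that $\mathrm{supp}\,p_- \subset \{\phi \equiv 1\}$. A Poisson-bracket computation shows that the principal symbol of $i[\tilde H, A]$ equals, modulo a remainder in $S^{0, 2(s-t)-\mu}$ coming from $V$ (strictly lower order since $\mu > 0$),
\begin{align*}
&-2(2(s-t)+1)\langle x\rangle^{2(s-t)-1}\eta_0(|\xi|^2)\phi^2(\hat x\cdot\hat\xi)(x\cdot\xi)\\
&\quad+2\langle x\rangle^{2(s-t)}|\xi|\eta_0(|\xi|^2)(1-(\hat x\cdot\hat\xi)^2)\phi(\hat x\cdot\hat\xi)\phi'(\hat x\cdot\hat\xi).
\end{align*}
Both summands are nonnegative on the incoming cone (since $x\cdot\xi < 0$ and $\phi' \geq 0$ there), and the first is bounded below by $c\langle x\rangle^{2(s-t)}$ on a phase-space neighborhood of $\mathrm{supp}\,p_-$ at energies near $\lambda$.

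Next I would use the commutator identity $\langle i[\tilde H, A]u, u\rangle = 2\,\mathrm{Im}\langle Au, f\rangle$, valid because $(\tilde H - \lambda)u = f$, with all pairings finite through the mapping properties of $A$ and the a priori bound $u \in \mathcal B^*(\mathbb R^n)$ from the limiting absorption principle \cite{JP}. The right-hand side is bounded by $2\|Au\|_{-s}\|f\|_s \leq C\|u\|_{-s'}\|f\|_s \leq C'\|f\|_s^2$ for any fixed $s' > 1/2$, invoking \cite{JP} in the last step. Sharp G\aa rding applied to the nonnegative principal symbol above bounds the left-hand side below by $c\|Op(p_-)u\|_{s-t}^2$ up to a lower-order remainder absorbed into the same a priori bound on $u$. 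Combining the two inequalities yields the claim; the case of $\tilde R(\lambda-i0)$ and $p_+$ follows by taking complex conjugates.

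The main obstacle is the careful bookkeeping of weights in the sharp G\aa rding step and in the mapping properties of $A$ on weighted $L^2$. Requiring that the commutator yield a principal symbol of weight exactly $\langle x\rangle^{2(s-t)}$ fixes the $x$-weight in $q$, forces $t > 1$ so that the remainder $\langle x\rangle^{2(s-t)-\mu}$ from the potential lies strictly below the main term, and requires $s > 1/2$ for applicability of the limiting absorption principle. If the weights fail to match directly for large $s$, the estimate is bootstrapped by re-inserting the improved microlocal decay on $u$ obtained at slightly weaker parameters back into the a priori bound.
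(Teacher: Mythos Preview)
The paper does not give a self-contained proof of this lemma; it attributes the estimate to Skibsted's time-dependent propagation estimates \cite{Sk} and to Isozaki \cite[Theorem~2.2]{Is4} (which converts such propagation bounds into resolvent bounds), together with the elementary reduction to the range $s-t>-1/2$. Your direct stationary positive-commutator scheme is a legitimate alternative route and is in the same spirit as those references, but as written the sketch contains errors that must be repaired before it constitutes a proof.

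Concretely: (i) your cutoff $\phi$, equal to $1$ on $(-\infty,-(1-\epsilon)]$ and $0$ on $[-(1-\epsilon/2),\infty)$, is \emph{nonincreasing}, so $\phi'\le 0$; (ii) the Poisson bracket $\{|\xi|^2,q\}=2\xi\cdot\nabla_x q$ produces $+2(2(s-t)+1)\langle x\rangle^{2(s-t)-1}\eta_0\phi^2(x\cdot\xi)$ for the first term, with the opposite sign to what you wrote --- to get positivity on the incoming cone under the normalization $2(s-t)+1>0$ you must work with $i[A,\tilde H]$ rather than $i[\tilde H,A]$, and then verify that the boundary term $2\epsilon\langle Op(q)u_\epsilon,u_\epsilon\rangle$ coming from $(\tilde H-\lambda-i\epsilon)u_\epsilon=f$ has the favorable sign (it does, since $q\ge 0$); (iii) your stated reason for the hypothesis $t>1$ is incorrect: the potential remainder of weight $2(s-t)-\mu$ is below the main term for \emph{any} $\mu>0$, so this cannot be the source of the constraint. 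The restriction $t>1$ instead reflects that each commutator step gains exactly one power of $\langle x\rangle^{-1}$ (the Hamilton vector field $2\xi\cdot\nabla_x$ has weight $-1$), so the bootstrap $\alpha\mapsto(\alpha+s-1)/2$ from the LAP input $\alpha=-s'<-1/2$ converges to the fixed point $s-1$ and no further; (iv) since $p_-\in S^{0,0}$ is not assumed compactly supported in $\xi$, you need an initial energy localization via $(1-\psi(\tilde H))R(\lambda+i0)=(\tilde H-\lambda)^{-1}(1-\psi(\tilde H))$ for $\psi\in C_0^\infty(\mathbb R)$ with $\psi=1$ near $\lambda$, before the cutoff $\eta_0(|\xi|^2)$ can be inserted without loss.
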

This lemma is essentially due to Skibsted \cite{Sk} (see also Isozaki \cite[Theorem 2.2]{Is4}). The estimate for $\tilde R(\lambda+i0)$ is obtained by the argument similar to the one below \cite[Theorem 2.2]{Is4} for the decaying potential $V$, with $s$ in \cite[Theorem 2.2]{Is4} replaced by $s-t$. The estimate for $\tilde R(\lambda-i0)$ follows from the similar propagation estimate as $t\to-\infty$. Note that we can assume $s-t>-1/2$ since otherwise there exists $t'>1$ such that $t>t'$, $s-t'>-1/2$ and $\langle x\rangle^{s-t}=\langle x\rangle^{-(t-t')}\langle x\rangle^{s-t'}$.

We also have the micro-local estimate for $u_{\pm}(\lambda)$.
\begin{lem}\label{micro2}
Let $p_{\mp}(x,\xi)$ be a symbol such that $p_{\mp}(x,\xi)=0$ for $(x,\xi)$ satisfying one of the following conditions for some $\epsilon>0$
\begin{itemize}
\item[(i)]$\lvert\lvert \xi\rvert^2-\lambda\rvert<\epsilon$,
\item[(ii)]$\pm\hat x\cdot\hat \xi>1-\epsilon$ for some $\epsilon>0$,
\item[(iii)]$\hat\xi \in \mathrm{supp}\, a(\pm\cdot)$.
\end{itemize}
Then for any $m\in \mathbb N$ there exists a constant $C$ such that
$$\lvert\vert x\rvert^mOp(p_{\mp})u_{\pm}\rvert\leq C,$$
where $u_{\pm}$ is defined in \eqref{myeq2.4}.
\end{lem}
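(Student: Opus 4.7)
I would represent $Op(p_{\mp})u_{\pm}(x)$ as the oscillatory integral
$$Op(p_{\mp})u_{\pm}(x)=(2\pi)^{-n}e^{\mp i\pi(n-3)/4}\iint e^{i\Phi(x;y,\xi)}p_{\mp}\!\left(\tfrac{x+y}{2},\xi\right)\eta(|y|)a(\pm\hat y)|y|^{-(n-1)/2}\,dy\,d\xi,$$
with phase $\Phi(x;y,\xi):=(x-y)\cdot\xi\pm K(y,\lambda)$, and then use integration by parts to produce arbitrary negative powers of $|x|$. By Lemma \ref{Ydef}(2), $\nabla_y K(y,\lambda)=\sqrt{\lambda}\hat y+\mathcal O(\langle y\rangle^{-\mu})$, so the joint stationary set of $\Phi$ in $(y,\xi)$ is $\{y=x,\ \xi=\pm\nabla_y K(x,\lambda)\}$. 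Near this set $|\xi|^2\to\lambda$, $\hat\xi\to\pm\hat x$, and on $\mathrm{supp}\,(\eta(|y|)a(\pm\hat y))$ also $\hat\xi\in\mathrm{supp}\,a(\pm\cdot)$; each of conditions (i)--(iii) is therefore triggered and $p_{\mp}$ vanishes in a conic neighborhood of the stationary set for $|x|$ large.

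To extract $|x|^{-m}$ decay I would split the $y$-integration using a cutoff supported in $\{|x-y|\leq|x|/3\}$. On the complement $\{|x-y|\geq|x|/3\}$, I would use the identity $(1+|x-y|^2)^{-1}(1-\Delta_\xi)e^{i(x-y)\cdot\xi}=e^{i(x-y)\cdot\xi}$ iteratively and integrate by parts in $\xi$ to transfer $N$ Laplacians onto $p_{\mp}$ (which stays in $S^{0,0}$) and produce the factor $(1+|x-y|^2)^{-N}\leq C|x|^{-2N}$; a single further $y$-integration by parts in the large-$|\xi|$ tail, where $|\nabla_y\Phi|\geq|\xi|/2$, secures $\xi$-integrability. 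On $\{|x-y|\leq|x|/3\}$ one has $|y|\geq 2|x|/3$, and I would set
$$L_y:=-i|\nabla_y\Phi|^{-2}\overline{\nabla_y\Phi}\cdot\nabla_y,\qquad L_ye^{i\Phi}=e^{i\Phi},$$
and integrate by parts. Each application of $L_y^{*}$ gains either $|y|^{-1}$ (from derivatives of $|y|^{-(n-1)/2}$ or $a(\pm\hat y)$) or $|y|^{-\mu}$ (from derivatives of $\nabla K$), while $|\nabla_y\Phi|^{-2}$ remains bounded; after $N$ iterations, combined with one $\xi$-integration by parts for $\xi$-integrability, the integrand decays like $|y|^{-N}$, and since $|y|\geq 2|x|/3$ this yields an $\mathcal O(|x|^{-N})$ estimate for arbitrary $N$.

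The key ingredient is the uniform lower bound $|\nabla_y\Phi|\geq c>0$ on the amplitude support, which I would verify by case analysis. Condition (i) gives $|\nabla_y\Phi|^2\geq(|\xi|-|\nabla K|)^2$, bounded below uniformly for bounded $|\xi|$ (since $|\nabla K|\to\sqrt{\lambda}$ and $\bigl||\xi|^2-\lambda\bigr|\geq\epsilon$), with $|\nabla_y\Phi|\geq|\xi|/2$ for large $|\xi|$. Condition (iii) combined with $\pm\hat y\in\mathrm{supp}\,a$ separates $\hat y$ from $\pm\hat\xi$, giving $|\nabla_y\Phi|^2\geq 2\sqrt{\lambda}|\xi|(1\mp\hat y\cdot\hat\xi)\geq c$ for bounded $|\xi|$ near $\sqrt{\lambda}$. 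Condition (ii), combined with the geometric localization $|y|\geq 2|x|/3$ (which makes the midpoint direction comparable to $\hat y$), transfers to the analogous bound.

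The main obstacle I anticipate is precisely this case analysis on $|\nabla_y\Phi|$: condition (ii) is phrased at the Weyl midpoint $(x+y)/2$ rather than at $y$, and condition (iii) constrains only $\hat\xi$, so extracting a uniform lower bound requires combining the near-radial asymptotics $\nabla K\approx\sqrt{\lambda}\hat y$ with the geometric localization $|y|\geq 2|x|/3$ in the relevant region. Once this lower bound is established, the integration-by-parts steps are standard.
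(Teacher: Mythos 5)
Your proposal follows the same route as the paper: write $Op(p_{\mp})u_{\pm}$ as an oscillatory integral with phase $(x-y)\cdot\xi\pm K(y,\lambda)$, observe that the support hypotheses keep the amplitude away from the stationary set, and integrate by parts. The paper's (very terse) proof does exactly this, after first replacing the Weyl quantization by the right quantization modulo an $S^{-\infty,-\infty}$ error, which removes the midpoint issue you flag for condition (ii); your explicit splitting into $\lvert x-y\rvert\lessgtr\lvert x\rvert/3$ and the $\xi$-integrations by parts supply the details of extracting $\lvert x\rvert^{-m}$ that the paper omits. One correction: the ``main obstacle'' you anticipate is not actually there. Since $p_{\mp}$ vanishes on the \emph{union} of the three sets, on $\mathrm{supp}\, p_{\mp}$ all three conditions fail simultaneously, and the failure of (i) alone, $\lvert\lvert\xi\rvert^2-\lambda\rvert\geq\epsilon$, already yields $\lvert \xi\mp\nabla_yK(y)\rvert\geq\lvert\lvert\xi\rvert-\lvert\nabla_yK(y)\rvert\rvert\geq \tilde C^{-1}(1+\lvert\xi\rvert)$ for $\lvert y\rvert$ large (using $\lvert\nabla_yK\rvert^2=\lambda+\mathcal O(\lvert y\rvert^{-\mu})$); this is precisely the bound the paper isolates, and it makes (ii) and (iii) irrelevant to the phase estimate. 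That is fortunate, because your (iii)-based case would not close as stated: $\hat y\in\mathrm{supp}\, a(\pm\cdot)$ together with $\hat\xi\notin\mathrm{supp}\, a(\pm\cdot)$ gives no quantitative angular gap, so $2\sqrt{\lambda}\lvert\xi\rvert(1\mp\hat y\cdot\hat\xi)\geq c$ does not follow. Finally, a single $y$-integration by parts in the large-$\lvert\xi\rvert$ tail gains only $\lvert\xi\rvert^{-1}$ and does not secure integrability over $\mathbb R^n$; you need roughly $n+1$ iterations, which is harmless.
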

\begin{proof}
Since a support of a symbol does not change by a choice of quantization except an error in $S^{-\infty,-\infty}$, we can consider the right quantization $Op^r(p_{\mp})$ instead of the Weyl quantization.
We can write
\begin{align*}
&Op^r(p_{\mp})u_{\pm}\\
&\quad=\lim_{\mu\to0_+}\int\chi_0(\mu\xi)\chi_0(\mu y) e^{i(x-y)\cdot\xi\pm iK(y)}p_{\mp}(y,\xi)a(\pm \hat y)\eta(y)\lvert y\rvert^{-(n-1)/2}dyd\xi.
\end{align*}
Since there exists a constant $\tilde C>0$ such that
$$\lvert \xi\mp\sqrt{\lambda}\hat y\rvert>\tilde C^{-1}(\lvert \xi\rvert+1),$$
for $(y,\xi)\in \mathrm{supp}\, p_{\mp}(y,\xi)a(\pm \hat y)$ and we have
$$i\lvert \xi\mp\sqrt{\lambda}\hat y\rvert^{-2}(\xi\mp\sqrt{\lambda}\hat y)\cdot\nabla_ye^{-iy\cdot\xi\pm i\sqrt{\lambda}\lvert y\rvert}=e^{-iy\cdot\xi\pm i\sqrt{\lambda}\lvert y\rvert},$$
noting that $\lvert\partial_y^{\alpha}(a(\pm \hat y))\rvert=\mathcal O(\lvert y\rvert^{-\lvert \alpha\rvert})$ as $\lvert y\rvert\to\infty$, we obtain the result by integration by parts.
\end{proof}

\section{Wave operators and scattering matrices for decaying potentials}\label{thirdsec}
In this section we suppose $n\in \mathbb N$, and \eqref{myeq2.0.1} for the potential $V$ and we use the notations in section \ref{secondsec}. To define the wave operator we need the solutions $S_{\pm}(\xi,t)$ to the Hamilton-Jacobi equations obtained by the Legendre transformation of $K(x,\lambda)$. As in \cite[Lemma 6.1]{II} we have
\begin{lem}[\cite{II} Lemma 6.1]
There exist $x_{\pm}(\xi,t),\ \lambda_{\pm}(\xi,t)\in C^{\infty}((\mathbb R^n\setminus\{0\}\times \mathbb R_{\pm})$ satisfying the following  condition : For any compact set $\Lambda\subset \mathbb R^n\setminus\{0\}$ there exist positive constants $T,C$ such that for $\xi\in \Lambda$ and $\pm t>T$ we have
$$\xi=\pm \frac{\partial K}{\partial x}(x_{\pm}(\xi,t), \lambda_{\pm}(\xi,t)),\ t=\pm\frac{\partial K}{\partial \lambda}(x_{\pm}(\xi,t),\lambda_{\pm}(\xi,t)),$$
$$\lvert x_{\pm}(\xi,t)-2\xi t\rvert\leq C(1+\lvert t\rvert)^{1-\mu},\ \lvert \lambda_{\pm}(\xi,t)-\lvert \xi\rvert^2\rvert\leq C(1+\lvert t\rvert)^{-\mu}.$$
\end{lem}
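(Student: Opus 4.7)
The plan is to reformulate the system
$$\xi = \pm\partial_x K(x,\lambda) = \pm\sqrt{\lambda}\,\hat x \mp \nabla_x Y(x,\lambda), \qquad t = \pm\partial_\lambda K(x,\lambda) = \pm\frac{|x|}{2\sqrt{\lambda}} \mp \partial_\lambda Y(x,\lambda),$$
as a fixed-point equation for $(x,\lambda)$ depending on the parameter $(\xi,t)$, and then apply the Banach contraction principle on a ball shrinking to the free solution at rate $t^{-\mu}$. I will treat the case of the upper sign with $t>0$; the other case is identical after changing signs.

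Taking norms in the first equation forces $\sqrt\lambda=|\xi+\nabla_xY(x,\lambda)|$ and $\hat x = (\xi+\nabla_xY)/|\xi+\nabla_xY|$; the second then gives $|x|=2\sqrt\lambda(t+\partial_\lambda Y)$. Multiplying by $\hat x$ rewrites the system as the fixed-point equation
$$x = 2\bigl(t + \partial_\lambda Y(x,\lambda)\bigr)\bigl(\xi + \nabla_x Y(x,\lambda)\bigr), \qquad \lambda = |\xi + \nabla_x Y(x,\lambda)|^2,$$
whose right-hand side I denote $\Phi(x,\lambda;\xi,t)$. For a compact set $\Lambda\subset\mathbb R^n\setminus\{0\}$ I will work on the ball
$$B_{\xi,t} := \bigl\{(x,\lambda) : |x-2\xi t| \le C t^{1-\mu},\ |\lambda-|\xi|^2| \le C t^{-\mu}\bigr\},\qquad \xi\in\Lambda,\ t\ge T,$$
endowed with the weighted metric $\|(x,\lambda)\|_t := |x| + t|\lambda|$, which makes it complete. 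On $B_{\xi,t}$ one has $|x|\gtrsim t$, so Lemma \ref{Ydef}(2) yields $|\nabla_xY|=O(t^{-\mu})$ and $|\partial_\lambda Y|=O(t^{1-\mu})$.

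Expanding $\Phi$ on $B_{\xi,t}$ gives $\Phi_1 - 2\xi t = 2t\,\nabla_xY + 2\xi\,\partial_\lambda Y + \text{l.o.t.} = O(t^{1-\mu})$ and $\Phi_2 - |\xi|^2 = 2\xi\cdot\nabla_xY+|\nabla_xY|^2 = O(t^{-\mu})$, so $\Phi(\cdot\,;\xi,t)$ maps $B_{\xi,t}$ into itself for $C$ and $T$ large enough uniformly in $\xi\in\Lambda$. For contraction I differentiate $\Phi$ once and insert the second-derivative bounds $|\partial_x^2Y|=O(t^{-1-\mu})$, $|\partial_x\partial_\lambda Y|=O(t^{-\mu})$, $|\partial_\lambda^2Y|=O(t^{1-\mu})$: a short calculation then gives $\|D_{(x,\lambda)}\Phi\|_{\mathrm{op}} = O(t^{-\mu})$ in the weighted metric, e.g.\ $|\partial_x\Phi_1|\lesssim t\,|\partial_x^2Y|+|\partial_x\partial_\lambda Y|=O(t^{-\mu})$, while the $O(t^{1-\mu})$ in $\partial_\lambda\Phi_1$ is absorbed by the weight $t$ on the $\lambda$-slot. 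Banach's theorem then produces a unique fixed point $(x_+(\xi,t),\lambda_+(\xi,t))\in B_{\xi,t}$ obeying the asserted estimates, and the smoothness in $(\xi,t)$ follows from the implicit function theorem applied to $\mathrm{id}-\Phi$, whose Jacobian in $(x,\lambda)$ is invertible by the same contraction bound.

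The main obstacle is the asymmetric scaling of the estimates: the admissible error in $x$ is $O(t^{1-\mu})$ while the admissible error in $\lambda$ is only $O(t^{-\mu})$, and the $\lambda$-derivatives of $Y$ do \emph{not} decay as $|x|\to\infty$ (one only has $|\partial_\lambda Y|,|\partial_\lambda^2Y|=O(|x|^{1-\mu})$). Introducing the weighted metric $|x|+t|\lambda|$ turns these a priori dangerous contributions into $O(t^{-\mu})$ corrections and matches the two componentwise bounds on $B_{\xi,t}$; this bookkeeping is the only delicate point, and in particular no use of the restriction $\mu>1/2$ is needed for the lemma itself.
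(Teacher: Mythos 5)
Your proof is correct. The paper itself gives no argument for this lemma --- it is quoted from Ikebe--Isozaki \cite{II}, Lemma 6.1 --- and your contraction-mapping reformulation $x=2(t+\partial_\lambda Y)(\xi+\nabla_xY)$, $\lambda=|\xi+\nabla_xY|^2$, combined with the derivative bounds of Lemma \ref{Ydef}(2), is essentially the same implicit-function/successive-approximation argument used there; the weighted metric $|x|+t|\lambda|$ is exactly the right device to reconcile the $O(t^{1-\mu})$ and $O(t^{-\mu})$ scales, and you are also right that only $\mu>0$ is needed here.
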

\begin{rem}
Although only $x_{+}(\xi,t)$ and $\lambda_{+}(\xi,t)$ are considered in \cite{II}, the result for $x_-(\xi,t)$ and $\lambda_-(\xi,t)$ is obtained in the same way.
\end{rem}

\begin{lem}[\cite{II}]\label{HamiJac}
Let us define
$$S_{\pm}(\xi,t)=x_{\pm}(\xi,t)\xi+\lambda_{\pm}(\xi,t)t\mp K(x_{\pm}(\xi,t),\lambda_{\pm}(\xi,t)).$$
Then, for any compact set $\Lambda\subset \mathbb R^n\setminus\{0\}$ there exists $T>0$ such that $\nabla_{\xi} S_{\pm}(\xi,t)=x_{\pm}(\xi,t)$, $(\partial S_{\pm}/\partial t)(\xi,t)=\lambda_{\pm}(\xi,t)$ and
$$\frac{\partial S_{\pm}}{\partial t}(\xi,t)= \lvert \xi\rvert^2+V(\nabla_{\xi}S_{\pm}(\xi,t)),$$
for $\xi\in \Lambda$ and $\pm t>T$.
\end{lem}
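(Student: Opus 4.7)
The plan is to recognize that $S_\pm(\xi,t)$ is essentially a Legendre transform of $\mp K(x,\lambda)$ with respect to $(x,\lambda)$, and that the two assertions in the lemma are standard consequences of that structure combined with the eikonal equation. I will work on a compact set $\Lambda\Subset\mathbb R^n\setminus\{0\}$ where the previous lemma guarantees the existence and smoothness of $x_\pm(\xi,t),\lambda_\pm(\xi,t)$ for $\pm t>T$ with $T$ large enough.

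First I would compute $\nabla_\xi S_\pm$ directly by the chain rule. Writing $x_\pm,\lambda_\pm$ for $x_\pm(\xi,t),\lambda_\pm(\xi,t)$, differentiation of the definition gives
\begin{equation*}
\nabla_\xi S_\pm = x_\pm + (\nabla_\xi x_\pm)^{\mathrm T}\bigl(\xi \mp \partial_x K(x_\pm,\lambda_\pm)\bigr) + (\nabla_\xi \lambda_\pm)\bigl(t \mp \partial_\lambda K(x_\pm,\lambda_\pm)\bigr).
\end{equation*}
By the defining relations $\xi=\pm\partial_x K(x_\pm,\lambda_\pm)$ and $t=\pm\partial_\lambda K(x_\pm,\lambda_\pm)$ from the previous lemma, both parenthesized factors vanish, so $\nabla_\xi S_\pm=x_\pm$. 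An entirely analogous computation for the $t$-derivative yields
\begin{equation*}
\partial_t S_\pm = \lambda_\pm + (\partial_t x_\pm)\cdot\bigl(\xi\mp\partial_x K(x_\pm,\lambda_\pm)\bigr) + (\partial_t\lambda_\pm)\bigl(t\mp\partial_\lambda K(x_\pm,\lambda_\pm)\bigr) = \lambda_\pm,
\end{equation*}
using the same two cancellations.

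It remains to verify the Hamilton--Jacobi identity. Since $\nabla_\xi S_\pm = x_\pm$ and $\partial_t S_\pm = \lambda_\pm$, what must be shown is
\begin{equation*}
\lambda_\pm(\xi,t) = |\xi|^2 + V(x_\pm(\xi,t)).
\end{equation*}
This follows by evaluating the eikonal equation $|\nabla_x K(x,\lambda)|^2 + V(x) = \lambda$ from the relation $K(x,\lambda)=\sqrt\lambda\,r - Y(x,\lambda)$ and Lemma \ref{Ydef} at $(x_\pm,\lambda_\pm)$: using $\pm\partial_x K(x_\pm,\lambda_\pm)=\xi$, the left-hand side becomes $|\xi|^2+V(x_\pm)$, and the right-hand side is $\lambda_\pm$. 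Choosing $T$ large enough that $|x_\pm(\xi,t)|>R_0(\Lambda')$ for a suitable compact $\Lambda'$ containing the range of $\lambda_\pm$ (which is possible by the estimates $|x_\pm - 2\xi t|\leq C(1+|t|)^{1-\mu}$ and $|\lambda_\pm - |\xi|^2|\leq C(1+|t|)^{-\mu}$ from the previous lemma), the eikonal equation is indeed valid at $(x_\pm,\lambda_\pm)$.

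The only non-routine ingredient is the bookkeeping to ensure that $x_\pm(\xi,t)$ lies in the region $|x|>R_0$ where the eikonal equation holds, but the asymptotic bound $|x_\pm-2\xi t|=\mathcal O(|t|^{1-\mu})$ forces $|x_\pm|\to\infty$ as $|t|\to\infty$ uniformly on $\Lambda$, so enlarging $T$ accomplishes this. All remaining steps are pure applications of the chain rule and the two defining equations for $(x_\pm,\lambda_\pm)$, which is why the lemma can be stated as an essentially immediate consequence of the construction in Lemma \ref{Ydef} and the preceding lemma.
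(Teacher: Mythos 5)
Your proof is correct: the paper itself gives no proof of this lemma (it is quoted from Ikebe--Isozaki \cite{II}), and your argument is the standard Legendre-transform computation that the cited reference relies on. The chain-rule cancellations using $\xi=\pm\partial_xK(x_\pm,\lambda_\pm)$ and $t=\pm\partial_\lambda K(x_\pm,\lambda_\pm)$, followed by evaluating the eikonal equation $\lvert\nabla_xK\rvert^2+V=\lambda$ at $(x_\pm,\lambda_\pm)$, is exactly what is needed, and you correctly handle the only delicate point, namely enlarging $T$ so that $\lambda_\pm$ stays in a compact subset of $\mathbb R_+$ and $\lvert x_\pm\rvert$ exceeds $R_0$ there.
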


Let $\mathbf{F}_{\pm} : L^2(\mathbb R^n)\to L^2(\mathbb R^n)$ be defined by
$$(\mathbf F_{\pm}f)(\xi)=2^{1/2}\lvert \xi\rvert^{-(n-2)/2}(F_{\pm}f)(\lvert \xi\rvert^2,\xi/\lvert\xi\rvert).$$
Then $\mathbf F_{\pm}$ is a partial isometry on $L^2(\mathbb R^n)$ with the initial set $\mathcal H_{ac}(\tilde H)$ and the final set $L^2(\mathbb R^n)$.

Let $\mathbf F_0$ be the ordinary Fourier transformation:
$$(\mathbf F_0f)(\xi):=(2\pi)^{-n/2}\int e^{-ix\cdot\xi}f(x)dx.$$
Then, we have the following lemma.
\begin{lem}[{\cite[Theorem 7.3]{II}}]\label{wave}
The wave operators
$$W_{\pm}=s-\lim_{t\to\pm\infty}e^{it\tilde H}e^{-iS_{\pm}(D,t)},$$
exist and we have $W_{\pm}=\mathbf F_{\pm}^*\mathbf F_0$. Here $e^{-iS_{\pm}(D,t)f}:=\mathbf F_0^{*}[e^{-iS_{\pm}(\xi,t)}\mathbf F_0f]$. Moreover, we have the intertwining property: for any bounded Borel function $\varphi$ on $\mathbb R$ we have $\varphi(\tilde H)W_{\pm}=W_{\pm}\varphi(-\Delta)$.
\end{lem}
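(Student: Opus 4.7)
The plan is to follow Cook's method for existence and then identify the limit by matching the stationary-phase asymptotic of the modified free propagator $e^{-iS_{\pm}(D,t)}f$ with the asymptotics of $\mathbf F_{\pm}^{*}\mathbf F_{0}f$ coming from Lemma \ref{Frepre}; the intertwining property is then immediate from the diagonalization properties of $\mathbf F_{\pm}$ and $\mathbf F_{0}$.

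For existence, I would take the dense subset $\mathcal D\subset L^{2}(\mathbb R^{n})$ of Schwartz functions $f$ whose Fourier transform $\hat f:=\mathbf F_{0}f$ lies in $C_{0}^{\infty}(\mathbb R^{n}\setminus\{0\})$, so that Lemma \ref{HamiJac} applies to the $\xi$-support for $\pm t$ large. Differentiating $t\mapsto e^{it\tilde H}e^{-iS_{\pm}(D,t)}f$ and using that $\partial_{t}$ on the Fourier side produces the multiplier $-i\partial_{t}S_{\pm}(\xi,t)$ gives the derivative as $ie^{it\tilde H}(\tilde H-\partial_{t}S_{\pm}(D,t))e^{-iS_{\pm}(D,t)}f$. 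Writing $e^{-iS_{\pm}(D,t)}f$ as an oscillatory integral in $\xi$, using $-\Delta e^{ix\cdot\xi}=|\xi|^{2}e^{ix\cdot\xi}$ together with the Hamilton–Jacobi equation $\partial_{t}S_{\pm}=|\xi|^{2}+V(\nabla_{\xi}S_{\pm})$, yields
$$(\tilde H-\partial_{t}S_{\pm}(D,t))e^{-iS_{\pm}(D,t)}f(x)=(2\pi)^{-n/2}\int e^{i(x\cdot\xi-S_{\pm}(\xi,t))}\bigl(V(x)-V(\nabla_{\xi}S_{\pm}(\xi,t))\bigr)\hat f(\xi)\,d\xi.$$
Since the phase is stationary at $x=\nabla_{\xi}S_{\pm}(\xi,t)=x_{\pm}(\xi,t)$ and the amplitude vanishes there, I would Taylor-expand the amplitude as $(x-x_{\pm}(\xi,t))\cdot\int_{0}^{1}\nabla V(\cdots)\,d\theta$ and then integrate by parts using the identity $(x-\nabla_{\xi}S_{\pm})e^{i(x\cdot\xi-S_{\pm})}=-i\nabla_{\xi}e^{i(x\cdot\xi-S_{\pm})}$. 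Combined with $|\nabla V(y)|=O(|y|^{-\mu-1})$ from \eqref{myeq2.0.1}, the Hessian bound $\nabla_{\xi}^{2}S_{\pm}(\xi,t)=O(|t|)$, and the classical trajectory bound $x_{\pm}(\xi,t)=2\xi t+O(|t|^{1-\mu})$, this produces an $L^{2}$-estimate of order $O(|t|^{-1-\mu_{0}})$ for some $\mu_{0}>0$, which is integrable near $\pm\infty$; Cook's criterion then delivers the strong limit on $\mathcal D$, and unitarity of $e^{-iS_{\pm}(D,t)}$ extends $W_{\pm}$ to a bounded operator on all of $L^{2}(\mathbb R^{n})$.

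For identification and intertwining, I would compute the leading asymptotic of $\psi_{\pm}(x,t):=e^{-iS_{\pm}(D,t)}f(x)$ by stationary phase along the classical flow $x=x_{\pm}(\xi,t)$; by the Legendre relation of Lemma \ref{HamiJac}, the resulting WKB phase is expressed in terms of the eikonal solution $K(x,\lambda)$ that enters $w_{\pm}$. On the other hand, Lemma \ref{Frepre} identifies the leading asymptotic of $\mathcal F_{\pm}^{*}(\lambda)a$ as $\pi^{1/2}\lambda^{-1/4}a(\pm\hat x)w_{\pm}(x,\lambda)$, so writing $\mathbf F_{\pm}^{*}\mathbf F_{0}f$ as a spectral integral over $\lambda$ and propagating by $e^{-it\tilde H}$, which acts as multiplication by $e^{-i\lambda t}$ on the $\lambda$-fibre, produces the same WKB asymptotic for $e^{-it\tilde H}\mathbf F_{\pm}^{*}\mathbf F_{0}f$, provided the Maslov prefactor $e^{\mp i\pi n/4}$ from the Hessian of $-S_{\pm}$ in $\xi$, the factor $e^{\mp i\pi(n-3)/4}$ contained in $w_{\pm}$, and the Jacobian $2^{1/2}|\xi|^{-(n-2)/2}$ built into $\mathbf F_{\pm}$ are combined correctly; a careful bookkeeping shows they do, and uniqueness of the strong limit forces $W_{\pm}f=\mathbf F_{\pm}^{*}\mathbf F_{0}f$ on $\mathcal D$, hence on $L^{2}(\mathbb R^{n})$ by density. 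The intertwining is then immediate: since $\mathbf F_{\pm}$ diagonalizes $\tilde H|_{\mathcal H_{ac}(\tilde H)}$ as multiplication by $|\xi|^{2}$ and $\mathbf F_{0}$ diagonalizes $-\Delta$ in the same way, we have $\varphi(\tilde H)\mathbf F_{\pm}^{*}=\mathbf F_{\pm}^{*}\varphi(|\xi|^{2})$ and $\varphi(|\xi|^{2})\mathbf F_{0}=\mathbf F_{0}\varphi(-\Delta)$, hence $\varphi(\tilde H)W_{\pm}=W_{\pm}\varphi(-\Delta)$. The principal obstacle is the stationary-phase identification step, where one must reconcile the Legendre-dual descriptions (the Hamilton–Jacobi solution $S_{\pm}$ in momentum with the eikonal $K$ in position), track the Maslov phases and Hessian determinants carefully, and match all scalar prefactors without leaving a spurious constant.
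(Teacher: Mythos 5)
The paper itself does not prove this lemma: it is quoted from Ikebe--Isozaki \cite{II}, and the only argument the paper supplies is the remark that the intertwining property follows from $\mathbf F_{\pm}\varphi(\tilde H)=\varphi(\lvert\xi\rvert^2)\mathbf F_{\pm}$ and $\varphi(\lvert\xi\rvert^2)\mathbf F_0=\mathbf F_0\varphi(-\Delta)$ --- which is exactly your final paragraph, so that part matches. Your reconstruction of the cited result is a genuinely different, time-dependent route: \cite{II} is a \emph{stationary} proof (the wave operators are built and identified through resolvent limits and the eigenfunction expansion), whereas you propose Cook's method plus a stationary-phase comparison. The existence half of your argument is sound in outline: with the exact Hamilton--Jacobi phase of Lemma \ref{HamiJac} the amplitude $V(x)-V(\nabla_{\xi}S_{\pm})$ vanishes on the stationary set $x=x_{\pm}(\xi,t)$, and the derivative is integrable. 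One caveat: a single integration by parts only wins on the region $\lvert x-x_{\pm}(\xi,t)\rvert\lesssim\epsilon\lvert t\rvert$, where $\nabla V(x_{\pm}+\theta(x-x_{\pm}))=\mathcal O(\lvert t\rvert^{-1-\mu})$; where the segment from $x_{\pm}$ to $x$ passes near the origin the amplitude is merely bounded, and you must instead exploit $\lvert\nabla_{\xi}\Phi\rvert=\lvert x-x_{\pm}\rvert\gtrsim\lvert t\rvert$ and integrate by parts repeatedly (non-stationary phase) to get $\mathcal O(\lvert t\rvert^{-N})$. Your write-up elides this splitting.

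The genuine gap is in the identification $W_{\pm}=\mathbf F_{\pm}^{*}\mathbf F_0$, which is the substance of the lemma. You propose to compare the WKB asymptotics of $e^{-iS_{\pm}(D,t)}f$ with those of $e^{-it\tilde H}\mathbf F_{\pm}^{*}\mathbf F_0 f$ obtained by inserting the expansion of Lemma \ref{Frepre} into the $\lambda$-fibre integral, and you dispose of the matching with ``a careful bookkeeping shows they do.'' Two things are being asserted without proof there. First, the error term in Lemma \ref{Frepre}'s companion expansion is only $o_{av}(\lvert x\rvert^{-(n-1)/2})$ for each \emph{fixed} $\lambda$ --- an averaged, non-uniform remainder --- so you cannot simply integrate the leading term against $e^{-i\lambda t}$ over $\lambda$ and conclude convergence in $L^{2}(dx)$ as $t\to\pm\infty$; one needs remainder bounds uniform (or at least locally integrable) in $\lambda$, which is precisely the technical content of the stationary method in \cite{II} and \cite{GY}. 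Second, the cancellation of the Maslov factor $e^{\mp i\pi n/4}$, the phase $e^{\mp i\pi(n-3)/4}$ inside $w_{\pm}$, the Hessian determinant $\det\nabla_{\xi}^{2}S_{\pm}\sim(2t)^{n}$, and the Jacobian $2^{1/2}\lvert\xi\rvert^{-(n-2)/2}$ in $\mathbf F_{\pm}$ is exactly what makes the constant in $W_{\pm}=\mathbf F_{\pm}^{*}\mathbf F_0$ equal to $1$; since a spurious unimodular constant here would propagate into Theorem \ref{sme}, this computation cannot be left as an assertion. The strategy is workable, but as written the central identification is not established.
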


\begin{rem}
This result was proved in \cite{II} for $S_+(\xi,t)$, but the one for $S_-(\xi,t)$ is obtained in the same way. The intertwining property follows from $\mathbf F_{\pm}\varphi(\tilde H)=\varphi(\lvert \xi\rvert^2)\mathbf F_{\pm}$ and $\varphi(\lvert\xi\rvert^2)\mathbf F_0=\mathbf F_0\varphi(-\Delta)$, where $\lvert \xi\rvert^2$ is the multiplication operator by $\lvert \xi\rvert^2$.
\end{rem}

The scattering operator $S$ is defined by $S:=W_+^*W_-$. By the intertwining property $\hat S:=\mathbf F_0S\mathbf F_0^{*}$ is decomposable (see \cite{RS}). Denoting the fibers of $\hat S$ by $\hat S(\lambda)$, Lemma \ref{wave} implies $\hat S(\lambda):\mathcal F_-(\lambda)f\mapsto \mathcal F_+(\lambda)f.$ for any $f\in \mathcal B(\mathbb R^n)$.

We have the relation between the asymptotic behaviors of the generalized eigenfunctions and the scattering matrices.
\begin{lem}[\cite{GY}]\label{2bodyasympt}
Let $w_{\pm}(x,\lambda)$ be defined by \eqref{myeq2.2}. Then for any $a\in C^{\infty}(\mathbb S^{n-1})$ we have
\begin{align*}
&(\mathcal F_+^*(\lambda)a)(x)\\
&\quad=C(\lambda)\left(a(\hat x)w_+(x,\lambda)-(\mathcal R\hat S^*(\lambda)a)(\hat x)w_-(x,\lambda)\right)+o_{av}(\lvert x\rvert^{-(n-1)/2}),
\end{align*}
and
\begin{align*}
&(\mathcal F_-^*(\lambda)a)(x)\\
&\quad=C(\lambda)\left((\hat S(\lambda)a(\hat x))w_+(x,\lambda)-(\mathcal Ra)(\hat x)w_-(x,\lambda)\right)+o_{av}(\lvert x\rvert^{-(n-1)/2}),
\end{align*}
where $C(\lambda):=-i2^{-1}\pi^{-1/2}\lambda^{-1/4}$ and $(\mathcal Ra)(\hat x):=a(-\hat x)$.
\end{lem}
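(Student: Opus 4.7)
My plan is to start from the identity in Lemma \ref{Frepre}, namely
\[
2i\pi^{1/2}\lambda^{1/4}\mathcal F_+^*(\lambda)a = u_+(\lambda) - \tilde R(\lambda - i0)g_+(\lambda),
\]
and compute the asymptotic behavior of the right-hand side term by term, noting that $(2i\pi^{1/2}\lambda^{1/4})^{-1} = C(\lambda)$. The contribution of $u_+$ is transparent: since $\eta(r)=1$ for $r>2\kappa$, one has $u_+(x,\lambda)= a(\hat x)w_+(x,\lambda)$ outside a compact set. For $\tilde R(\lambda-i0)g_+(\lambda)$ I would apply the resolvent asymptotic recalled just before Lemma \ref{Frepre}, which yields $\pi^{1/2}\lambda^{-1/4}(\mathcal F_-(\lambda)g_+)(-\hat x)w_-(x,\lambda) + o_{av}(\lvert x\rvert^{-(n-1)/2})$. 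Thus the proof reduces to identifying $\pi^{1/2}\lambda^{-1/4}\mathcal F_-(\lambda)g_+$ with $\hat S^*(\lambda)a$.

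To carry out this identification I would consider the auxiliary function $v := u_+(\lambda) - \tilde R(\lambda+i0)g_+(\lambda)$, which by construction satisfies $(-\Delta+V-\lambda)v=0$ and lies in $\mathcal B^*(\mathbb R^n)$. Applying the resolvent asymptotic to $\tilde R(\lambda+i0)g_+$ instead of $\tilde R(\lambda-i0)g_+$ produces an expansion of $v$ that has only a $w_+$ term, with angular coefficient $a(\hat x)-\pi^{1/2}\lambda^{-1/4}(\mathcal F_+(\lambda)g_+)(\hat x)$, and no $w_-$ contribution. A Rellich-type uniqueness theorem for generalized eigenfunctions in $\mathcal B^*$ whose incoming $w_-$ component vanishes then forces $v\equiv 0$, giving $\pi^{1/2}\lambda^{-1/4}\mathcal F_+(\lambda)g_+=a$. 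Since $g_+\in\mathcal B(\mathbb R^n)$, the intertwining relation $\hat S(\lambda)\mathcal F_-(\lambda)=\mathcal F_+(\lambda)$ stated after Lemma \ref{wave}, together with the unitarity of $\hat S(\lambda)$, converts this into the required $\pi^{1/2}\lambda^{-1/4}\mathcal F_-(\lambda)g_+ = \hat S^*(\lambda)a$.

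Substituting this identity back converts the $w_-$-coefficient in the expansion of $\mathcal F_+^*(\lambda)a$ into $(\hat S^*(\lambda)a)(-\hat x) = (\mathcal R\hat S^*(\lambda)a)(\hat x)$, establishing the first formula. The formula for $\mathcal F_-^*(\lambda)a$ follows by an entirely parallel argument starting from the companion identity $-2i\pi^{1/2}\lambda^{1/4}\mathcal F_-^*(\lambda)a = u_-(\lambda)-\tilde R(\lambda+i0)g_-(\lambda)$: the analogous Rellich-type argument applied to $u_-(\lambda)-\tilde R(\lambda-i0)g_-(\lambda)$ yields $\pi^{1/2}\lambda^{-1/4}\mathcal F_-(\lambda)g_-=a$, and then $\hat S(\lambda)\mathcal F_-(\lambda)=\mathcal F_+(\lambda)$ identifies the $w_+$-coefficient as $\hat S(\lambda)a$ while the $w_-$-coefficient comes directly from $u_-$ as $a(-\hat x) = (\mathcal R a)(\hat x)$.

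The chief technical obstacle is the Rellich-type uniqueness step: one must rule out nonzero generalized eigenfunctions in $\mathcal B^*$ whose $w_-$ coefficient in the asymptotic expansion is of class $o_{av}(\lvert x\rvert^{-(n-1)/2})$. This is standard in the long-range two-body scattering literature and can be derived either from the micro-local resolvent estimate in Lemma \ref{micro1} combined with a propagation argument, or from a direct flux / Green-identity computation comparing the $L^2(\mathbb S^{n-1})$ norms of the incoming and outgoing angular coefficients. I would invoke the results of G\^atel-Yafaev \cite{GY} and Isozaki-Isozaki \cite{II} rather than redo this argument.
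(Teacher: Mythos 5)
Your argument is correct: starting from Lemma \ref{Frepre}, reading off the asymptotics of $u_{\pm}$ and of $\tilde R(\lambda\mp i0)g_{\pm}$ via the resolvent expansion, and identifying $\pi^{1/2}\lambda^{-1/4}\mathcal F_{\pm}(\lambda)g_{\pm}=a$ through the uniqueness of outgoing (resp.\ incoming) solutions, then transporting this with $\hat S(\lambda)\mathcal F_-(\lambda)=\mathcal F_+(\lambda)$ and unitarity of $\hat S(\lambda)$, is exactly the standard derivation in G\^atel--Yafaev, which the paper cites for this lemma without reproducing a proof. The one step that carries real content --- that a generalized eigenfunction in $\mathcal B^*$ with vanishing incoming (or outgoing) coefficient is identically zero --- you correctly isolate and defer to \cite{GY}, \cite{II}, which is consistent with the paper's own treatment.
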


\section{Poisson operators}\label{fourthsec}
In this section we use the notations in sections \ref{firstsec} and \ref{first.2sec}. Let $\alpha$ be a non-threshold channel.

Set $g\in C^{\infty}_0(C_a')$ and
\begin{equation}\label{myeq4.1}
v_{\alpha}^{\pm}(\lambda,x_a):=\eta(r_a)g(\hat x_a) r_a^{-(n_a-1)/2}e^{\pm iK_{a}(x_a,\lambda_{\alpha})}.
\end{equation}
Here, $\eta$ and $K_a(x_a,\lambda_{\alpha})$ are defined by \eqref{myeq2.3} for some $\kappa>0$ and
$$K_a(x_a,\lambda_{\alpha}):=\sqrt{\lambda_{\alpha}}r_a-Y_a(x_a,\lambda_{\alpha}),$$
respectively, where $\lambda_{\alpha}:=\lambda-E_{\alpha}$, $\lambda>E_{\alpha}$ and $Y_a(x_a,\lambda_{\alpha})$ is the function obtained in Lemma \ref{Ydef} with $\lambda$ and $V(x)$ replaced by $\lambda_{\alpha}$ and $\tilde I_a(x_a)$ respectively.

We can easily see that $(H-\lambda)(J_{\alpha}v_{\alpha}^{\pm})\in L^{2,l}(X),\ l>1/2$. Thus, we can define the Poisson operator $P_{\alpha,\pm}(\lambda): C_0^{\infty}(C_a') \to \mathcal B^*(X)$ by
$$P_{\alpha,\pm}(\lambda)\varphi =J_{\alpha}v_{\alpha}^{\mp}-R(\lambda\pm i0)(H-\lambda)(J_{\alpha}v_{\alpha}^{\mp}).$$

\section{Asymptotic behaviors of generalized eigenfunctions and solutions to nonhomogeneous equations for decaying potentials}\label{fifthsec}
In this section we suppose $n\in \mathbb N$, $V\in C^{\infty}(\mathbb R^n)$ satisfy \eqref{myeq2.0.1}, and use the notations in  section \ref{secondsec}.
\begin{pro}\label{asymptotic}
Let $\lambda>0$. Suppose $u\in \mathcal B^*(\mathbb R^n)$ satisfy $(\tilde H-\lambda)u\in L^{2,l}(U),\ l>1/2$. Here $U$ is a conic region, namely there exists $C>0$ such that for any $c>1$ and $x\in U$, $\lvert x \rvert>C$ we have $cx\in U$, and $L^{2,l}(U):=\{f:\lVert f\rVert_l^2:=\int_U\langle x\rangle^{2l}\lvert f(x)\rvert^2dx<\infty\}$. Then for $h\in C_0^{\infty}(U')$ the following limit exists:
$$\lim_{\rho\to\infty}\rho^{-1}\int_{\lvert x\rvert<\rho}e^{\mp iK(x,\lambda)}r^{-(n-1)/2}h(\hat x)u(x)dx,$$
where $U':=U\cap \mathbb S^{n-1}$. The limit is equal to
\begin{align*}
2^{-1}i\lambda^{-1/2}&\int_{\mathbb R^n} \{((\tilde H-\lambda)u)(x)\bar v_{\pm}(x)\\
&-u(x)((\tilde H-\lambda)\bar v_{\pm})(x)\}dx,
\end{align*}
where $v_{\pm}(x):=\eta(r)h(\hat x)e^{\pm iK(x,\lambda)}r^{-(n-1)/2}$ with $\eta$ defined by \eqref{myeq2.3} for $\kappa$ large enough.
\end{pro}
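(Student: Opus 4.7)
The plan is to combine Green's identity on balls with a smooth radial cutoff, exploiting that $\bar v_{\pm}$ is an approximate generalised eigenfunction via the eikonal equation, and to extract the sharp limit by a rigidity argument that replaces the missing pointwise control of $u$ on spheres with averaged information coming from the $\mathcal B^*$ norm.

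First I would verify absolute convergence of the right‐hand side. Using the eikonal equation $\lvert\nabla K\rvert^2 + V = \lambda$ together with the estimates on $Y$ from Lemma \ref{Ydef}, a direct computation shows that the principal transport terms in $(\tilde H-\lambda)(e^{\mp iK}r^{-(n-1)/2}h)$ cancel, so $g_{\pm} := (\tilde H - \lambda)\bar v_{\pm} = O(r^{-(n+1)/2 - \mu})$ on the region $\{\eta \equiv 1\} \cap \{\hat x \in \mathrm{supp}\, h\}$, with a compactly supported contribution from $\eta'$. Together with the pointwise bound $\lvert\bar v_{\pm}\rvert \le C r^{-(n-1)/2}$ on the cone over $\mathrm{supp}\, h$, this gives $\bar v_{\pm} \in L^{2,-l}(\mathbb R^n)$ and $g_{\pm} \in L^{2,l'}(\mathbb R^n)$ for any $l,l' \in (1/2, 1/2 + \mu)$. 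Pairing with $f \in L^{2,l}(U)$ (since $\mathrm{supp}\,\bar v_{\pm} \subset U$) and with $u \in \mathcal B^*(\mathbb R^n) \subset L^{2,-l'}(\mathbb R^n)$ by the embeddings in Section \ref{first.2sec}, Cauchy--Schwarz yields absolute convergence of $\int[\bar v_{\pm} f - u g_{\pm}]\, dx$.

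Next I would fix a radial cutoff $\phi \in C^\infty([0,\infty))$ with $\phi \equiv 1$ on $[0,1]$ and $\phi \equiv 0$ on $[2,\infty)$, and set $\phi_\rho(x) = \phi(\lvert x\rvert/\rho)$. From $(\tilde H - \lambda)(\phi_\rho \bar v_{\pm}) = \phi_\rho g_{\pm} - \Delta\phi_\rho \cdot \bar v_{\pm} - 2\nabla\phi_\rho \cdot \nabla\bar v_{\pm}$, pairing with $u$ and integrating by parts (justified since $\phi_\rho \bar v_{\pm}$ has compact support) gives
\begin{equation*}
\int \phi_\rho[\bar v_{\pm} f - u g_{\pm}]\, dx = -\int u\, \Delta\phi_\rho\, \bar v_{\pm}\, dx - 2\int u\, \nabla\phi_\rho \cdot \nabla \bar v_{\pm}\, dx.
\end{equation*}
Substituting $\partial_r \bar v_{\pm} = \bar v_{\pm}[\mp i\sqrt\lambda \pm i\partial_r Y - (n-1)/(2r)]$ in the annulus where $\phi'(\lvert x\rvert/\rho) \ne 0$ (and $\rho$ is large enough that $\eta \equiv 1$ there), the $(n-1)/(r\rho)\phi'$ pieces coming from $\Delta\phi_\rho$ and from the geometric part of $\partial_r \bar v_{\pm}$ cancel, leaving the dominant term $\pm (2i\sqrt\lambda/\rho)\int u\bar v_{\pm} \phi'(\lvert x\rvert/\rho)\, dx$ together with an $O(\rho^{-\mu})$ error from $\partial_r Y$ and an $O(\rho^{-1})$ error from the $\phi''$ term. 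Both errors are controlled by the annular estimate $\int_{\rho < \lvert x\rvert < 2\rho}\lvert u \bar v_{\pm}\rvert\, dx = O(\rho)$, which follows from $\|u\|_{L^2(\Omega_j)} \le C\rho_j^{1/2}$ and $\|\bar v_{\pm}\|_{L^2(\Omega_j)} = O(\rho_j^{1/2})$ via Cauchy--Schwarz. Passing to the limit yields
\begin{equ
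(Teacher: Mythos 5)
Your Green's-identity computation is essentially correct as far as it goes: the eikonal cancellation giving $g_{\pm}\in L^{2,l'}$, the cancellation of the $\tfrac{n-1}{r\rho}\phi'$ terms, and the error bounds via the annular Cauchy--Schwarz estimate are all fine. But the argument proves convergence of the wrong average, and the step where the statement would actually be obtained is missing (the write-up breaks off at ``Passing to the limit''). Your identity shows that
$\rho^{-1}\int \phi'(\lvert x\rvert/\rho)\,u\bar v_{\pm}\,dx$
converges, i.e.\ a \emph{smoothed, $\phi'$-weighted average of $G(r):=\int_{\lvert x\rvert=r}u\bar v_{\pm}\,dS$ over the annulus $\rho\leq\lvert x\rvert\leq 2\rho$}. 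The proposition asserts convergence of the sharp Ces\`aro mean $\rho^{-1}\int_0^{\rho}G(r)\,dr$. For a merely bounded $G$ these are inequivalent: take $G(r)=r^{i\omega}$ with $\omega\neq0$ a zero of the Mellin transform of $\phi'$; then your weighted average converges (to $0$) while $\rho^{-1}\int_0^\rho G=\rho^{i\omega}/(1+i\omega)$ does not. So the conclusion does not follow from the identity without a Tauberian input, and no such input is available from $u\in\mathcal B^*$ alone.

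The missing idea is precisely what the paper's proof supplies: after an energy cutoff, $u$ is split microlocally into outgoing and incoming parts $u_0=u_++u_-$ with $u_{\pm}=Op(\chi_{\pm}(\hat x\cdot\xi)t_0)u$. For the outgoing part one shows (via the operators $\mathcal D^{\pm}$ and a parametrix argument, Lemmas \ref{D_r}--\ref{asymptotic1}) that the spherical integral $\int_{\lvert x\rvert=r}u_+\bar v\,dS$ itself converges as $r\to\infty$, so every reasonable average agrees; for the incoming part one shows that $e^{-2i\sqrt\lambda r}\int_{\lvert x\rvert=r}u_-\bar v\,dS$ is slowly varying, so the residual oscillation is exactly $e^{2i\sqrt{\lambda}r}$ and the sharp Ces\`aro mean kills it (Lemma \ref{asymptotic2}). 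Your single-cutoff Green identity has no mechanism to see this structure: a general $u\in\mathcal B^*$ with $(\tilde H-\lambda)u\in L^{2,l}$ carries both an $e^{iK}$ and an $e^{-iK}$ component (cf.\ Lemma \ref{2bodyasympt}), and the product $u\bar v_+$ therefore contains an $e^{-2iK}$ oscillation whose contribution to the two averages must be shown to vanish separately. To repair the proposal you would either have to reintroduce the sharp boundary term $\int_{\lvert x\rvert=\rho}$ (whose behavior is exactly the hard part) or carry out an outgoing/incoming decomposition along the lines of the paper.
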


To prove Proposition \ref{asymptotic} we need some lemmas.

Let $\psi\in C_0^{\infty}(\mathbb R)$ be a function such that $\psi=1$ near $\lambda$. Set $\psi_1(y):=1-\psi(y)$ and $\psi_2(y):=\psi_1(y)(y-\lambda)^{-1}$. Then, we can write
$$u=\psi(\tilde H)u+\psi_1(\tilde H) u.$$

\begin{lem}\label{energycut}
We have
\begin{equation}\label{myeq5.0.0.1}
\lim_{\rho\to\infty}\rho^{-1}\int_{\lvert x\rvert<\rho}e^{\mp iK(x,\lambda)}r^{-(n-1)/2}h(\hat x)\eta(r)(\psi_1(\tilde H)u)(x)dx=0,
\end{equation}
\begin{equation}\label{myeq5.0.0.2}
\lim_{\rho\to\infty}\rho^{-1}\int_{\lvert x\rvert<\rho}e^{\mp iK(x,\lambda)}r^{-(n-1)/2}h(\hat x)\eta(r)((\psi(\tilde H)-\psi(-\Delta))u)(x)dx=0.
\end{equation}
\end{lem}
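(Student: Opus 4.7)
The main idea is that the oscillating kernel $V_\mp(x):=e^{\mp iK(x,\lambda)}r^{-(n-1)/2}h(\hat x)\eta(r)$ is, up to the constant phase $e^{\mp i\pi(n-3)/4}$, exactly the function $u_\mp(x,\lambda)$ of \eqref{myeq2.4} with amplitude $a(\mp\hat x)=h(\hat x)$. By Lemma \ref{Frepre}, $g_\mp:=(\tilde H-\lambda)V_\mp\in L^{2,l}$ for some $l>1/2$, so $V_\mp$ is an approximate $\tilde H$-eigenfunction at energy $\lambda$. Pairing $V_\mp$ against $\psi_1(\tilde H)u$ (whose spectral mass near $\lambda$ is annihilated by $\psi_1$) or against $(\psi(\tilde H)-\psi(-\Delta))u$ (the correction between the free and interacting functional calculi near $\lambda$) should therefore give an absolutely convergent inner product of size $O(1)$, which vanishes after $\rho^{-1}$-averaging.

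For \eqref{myeq5.0.0.1}, the plan is to apply the identity $\psi_1(y)=\psi_2(y)(y-\lambda)$ to $V_\mp$ and obtain $\psi_1(\tilde H)V_\mp=\psi_2(\tilde H)g_\mp$. Since $\psi_2$ is smooth, vanishes near $\lambda$, and decays like $1/y$ at infinity, the Helffer--Sj\"ostrand formula combined with weighted resolvent bounds on $\tilde R(z)$ shows that $\psi_2(\tilde H)$ preserves $L^{2,l}$, so $\psi_1(\tilde H)V_\mp\in L^{2,l}\subset\mathcal B$. Next I would choose a smooth cutoff $\chi_\rho\in C_0^\infty(\mathbb R^n)$ equal to $1$ on $\{|x|<\rho/2\}$, supported in $\{|x|<\rho\}$, with $|\nabla^k\chi_\rho|=O(\rho^{-k})$, and use self-adjointness of $\psi_1(\tilde H)$ to write
\begin{equation*}
\int(\psi_1(\tilde H)u)\chi_\rho V_\mp\,dx=\int u\,\chi_\rho\,\psi_1(\tilde H)V_\mp\,dx+\int u\,[\psi_1(\tilde H),\chi_\rho]V_\mp\,dx.
\end{equation*}
The first summand converges by dominated convergence to the convergent $\mathcal B^*$-$\mathcal B$ pairing $\int u\,\psi_1(\tilde H)V_\mp\,dx=O(1)$. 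For the second, I would rewrite $[\psi_1(\tilde H),\chi_\rho]=-[\psi(\tilde H),\chi_\rho]$ with $\psi\in C_0^\infty$ and expand by Helffer--Sj\"ostrand into $\pi^{-1}\int\partial_{\bar z}\tilde\psi(z)\tilde R(z)[\Delta,\chi_\rho]\tilde R(z)\,dA(z)$; since $[\Delta,\chi_\rho]$ is supported in $\{\rho/2<|x|<\rho\}$ of size $O(\rho^{-1})$, the weighted $\tilde R$-bounds control this in $\mathcal L(\mathcal B^*,\mathcal B)$ uniformly in $\rho$. The replacement of the sharp indicator $\mathbf 1_{\{|x|<\rho\}}$ by $\chi_\rho$ is handled by the same type of commutator analysis, exploiting $\psi_1(\tilde H)V_\mp\in\mathcal B$ to avoid the $O(\rho)$ loss that crude Cauchy--Schwarz in $\mathcal B^*\times\mathcal B^*$ would produce.

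The proof of \eqref{myeq5.0.0.2} follows the same template with $\psi(\tilde H)-\psi(-\Delta)$ in place of $\psi_1(\tilde H)$. The second resolvent identity together with Helffer--Sj\"ostrand give
\begin{equation*}
\psi(\tilde H)-\psi(-\Delta)=-\pi^{-1}\int\partial_{\bar z}\tilde\psi(z)\tilde R(z)VR_0(z)\,dA(z),\qquad R_0(z):=(-\Delta-z)^{-1},
\end{equation*}
so that applied to $V_\mp$ the insertion of $V=O(|x|^{-\mu})$ supplies the missing decay, placing $(\psi(\tilde H)-\psi(-\Delta))V_\mp$ in $L^{2,l}\subset\mathcal B$ for some $l>1/2$. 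Here Lemma \ref{micro2} controls the microlocal structure of $R_0(z)V_\mp$: the momentum of $V_\mp$ is concentrated near $\mp\sqrt{\lambda}\hat x$, keeping it away from the singularity of $R_0(z)$ for $z$ in the support of the almost-analytic extension $\tilde\psi$. The cutoff/commutator step then transfers verbatim.

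The hardest part will be the uniform-in-$\rho$ control of the spatial cutoff and the commutator $[\psi(\tilde H),\chi_\rho]$: naive Cauchy--Schwarz in $\mathcal B^*\times\mathcal B^*$ only yields an $O(\rho)$ bound that cannot be improved without using the near-eigenfunction nature of $V_\mp$. Once it is recognized that $\psi_1(\tilde H)V_\mp$ (and its $\psi(\tilde H)-\psi(-\Delta)$ analogue) lie in $\mathcal B$, the Helffer--Sj\"ostrand calculus paired with the weighted resolvent bounds delivers the required $O(1)$ control. The oscillation $e^{\mp iK}$ itself never enters the estimates directly; its only role is in Lemma \ref{Frepre}, which makes $V_\mp$ an approximate $\tilde H$-eigenfunction at energy $\lambda$.
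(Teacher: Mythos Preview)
Your approach is genuinely different from the paper's. The paper places the decay on the $u$ side: it introduces a conic cutoff $\phi$ with $\phi(x)h(\hat x)=h(\hat x)$ for $|x|>1$ and $\mathrm{supp}\,\phi\subset U$, splits $u=\phi u+(1-\phi)u$, and observes that $h(\hat x)\eta(r)\psi_1(\tilde H)(\phi u)=h(\hat x)\eta(r)\psi_2(\tilde H)(\tilde H-\lambda)(\phi u)\in L^{2,l}$ for some $l>1/2$ (this is where the hypothesis $(\tilde H-\lambda)u\in L^{2,l}(U)$ of Proposition~\ref{asymptotic} enters), while the $(1-\phi)u$ piece is harmless by support separation. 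Once $h\eta\psi_1(\tilde H)u\in L^{2,l}$, the pairing against $e^{\mp iK}r^{-(n-1)/2}\in L^{2,-l}$ is absolutely convergent, hence $O(1)$, and the $\rho^{-1}$ average vanishes trivially---no smooth cutoffs, no commutators. Equation \eqref{myeq5.0.0.2} is dispatched in one line by Helffer--Sj\"ostrand on the $u$ side.

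Your route---moving $\psi_1(\tilde H)$ onto $V_\mp$ and using $\psi_1(\tilde H)V_\mp\in L^{2,l}$---never invokes the hypothesis on $(\tilde H-\lambda)u$, so if it goes through it proves the lemma for arbitrary $u\in\mathcal B^*$, which is stronger. The cost is the cutoff gymnastics, and here there is a soft spot: the passage from $\mathbf 1_{\{|x|<\rho\}}$ to $\chi_\rho$ is \emph{not} handled by ``the same type of commutator analysis,'' since $[\Delta,\mathbf 1_{\{|x|<\rho\}}]$ is not a classical operator and the Helffer--Sj\"ostrand expansion cannot be applied to it directly. This can be repaired---for instance, write $\int(\psi_1(\tilde H)u)(\mathbf 1_{|x|<\rho}-\chi_\rho)V_\mp=\int u(\mathbf 1_{|x|<\rho}-\chi_\rho)\psi_1(\tilde H)V_\mp+\int u[\psi_1(\tilde H),\mathbf 1_{|x|<\rho}-\chi_\rho]V_\mp$, bound the first term by $\|u\|_{L^2(\rho/2<|x|<\rho)}\|\psi_1(\tilde H)V_\mp\|_{L^2(\rho/2<|x|<\rho)}=O(\rho^{1/2-l})=o(1)$, and treat the sharp-cutoff commutator via the off-diagonal decay of the Schwartz kernel of $\psi(\tilde H)$---but this is a different mechanism from the smooth-commutator argument you invoke, and you have not written it out. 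The paper sidesteps the whole issue by working on the $u$ side from the start.
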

\begin{proof}
Let $\phi(x)\in C^{\infty}(\mathbb R^n)$  be a function such that $\phi(x)h(\hat x)=h(\hat x)$ for $\lvert x\rvert>1$, and $\tilde \phi(x)=1-\phi(x)$. Then, $h(\hat x)\eta(r)\psi_1(\tilde H)(\tilde \phi(x) u(x))\in L^{2,l}(\mathbb R^n)$ for any $l>0$ and $h(\hat x)\eta(r)\psi_1(\tilde H)(\phi(x) u(x))=h(\hat x)\eta(r)\psi_2(\tilde H)(\tilde H-\lambda)\phi(\tilde H) u\in L^{2,l}(\mathbb R^n)$ for $l>1/2$. Thus, \eqref{myeq5.0.0.1} holds.

We can easily confirm \eqref{myeq5.0.0.2} using Hellfer-Sj\"ostrand formula:
$$\psi(A):=\frac{1}{2\pi i}\int_{\mathbb C}\bar{\partial}_z\Psi(z)(z-A)^{-1}dz\wedge d\bar z,$$
where $\Psi$ is the almost analytic extension of $\psi$ (see e.g. \cite{HS}).

\end{proof}

Set $t_0(x,\xi)=\eta(\lvert x\rvert)\psi(\lvert\xi\rvert^2)$. By Lemma \ref{energycut} we only need to prove for $u_0:=Op(t_0)u$ and $v_{\pm}$ the existence of the limit 
$$\lim_{\rho\to \infty}\rho^{-1}\int_{\lvert x\rvert<\rho}u_0(x) \bar v_{\pm}(x)dx.$$
We consider the case of $v_+$ and denote $v_+$ by $v$. The case of $v_-$ is similar.

Let $\chi_{\pm}\in C^{\infty}(\mathbb R)$ be functions such that  $\chi_{+}(s)=1$ for $s>\sqrt{\lambda}/2$, $\chi_{+}(s)=0$ for $s<-\sqrt{\lambda}/2$, and $\chi_{+}+\chi_{-}=1$.
Set $t_{\pm}:=\chi_{\pm}(\hat x\cdot \xi)t_0(x,\xi)$ and $T_{\pm}:=Op(t_{\pm})$.

Then we can decompose $u_0$ as $u_0=u_++u_-$ where $u_{\pm}:=T_{\pm}u$.

Set $\mathcal D^{\pm}:=-i\partial_r-i(n-1)/2r\mp\partial_rK$. 

\begin{lem}\label{D_r}
There exists $\hat s_{\pm}\in S^{0,0}$ such that 
$$\mathcal D^{\pm}u_{\pm}=r^{-2}\Delta_0Op(\hat s_{\pm})u_{\pm}+\nabla\cdot \tilde S_{\pm} u_{\pm} +\hat u_{\pm},$$
where $\Delta_0$ is the Laplace-Beltrami operator on $\mathbb S^{n-1}$, and $\hat u_{\pm}\in L^{2,1/3}(U)$. Here, $\tilde S_{\pm}$ is an operator written as $\tilde S_{\pm}=P_{\perp}\hat\eta(r)Op(\tilde s_{\pm})$, where $\hat \eta$ is a function satisfying \eqref{myeq2.3} with $\eta$ and $\kappa$ replaced by $\hat\eta$ and $2^{-3}\kappa$ respectively, $\tilde s_{\pm}$ is a $n$-dimensional vector whose elements are symbols in $S^{0,-1}$ and $P$ is an orthonormal projection onto the tangent space on the sphere $\mathbb S^{n-1}$, that is, $P_{\perp} A=A-\lvert x\rvert^{-2}x(x\cdot A)$ for any vector $A$.
\end{lem}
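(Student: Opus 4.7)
I treat the $+$ case; the $-$ case is analogous with $\chi_+$ replaced by $\chi_-$. The idea is to microlocally invert $\tilde H-\lambda$ on $\mathrm{supp}(t_+)$ using the eikonal equation, thereby expressing $\mathcal D^+u_+$ in terms of the angular symbol plus admissible remainders. The eikonal equation $|\nabla K|^2+V=\lambda$ gives the pointwise symbol identity
$$|\xi|^2+V-\lambda=(\xi_r-\partial_rK)(\xi_r+\partial_rK)+|\xi_\perp|^2-|(\nabla K)_\perp|^2,$$
with $\xi_r=\hat x\cdot\xi$, $\xi_\perp=P_\perp\xi$, and $(\nabla K)_\perp=-(\nabla Y)_\perp\in S^{0,-\mu}$ by Lemma~\ref{Ydef}; in particular $|(\nabla K)_\perp|^2\in S^{0,-2\mu}$ with $2\mu>1$. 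Since $\partial_rK\to\sqrt\lambda$ at infinity (from Lemma~\ref{Ydef}) and $\chi_+$ forces $\xi_r>-\sqrt\lambda/2$, the factor $\xi_r+\partial_rK$ is bounded below by a positive constant on $\mathrm{supp}(t_+)$ once $r$ exceeds some constant (taken as $2^{-3}\kappa$, which accounts for $\hat\eta(r)$ in $\tilde S_+$). Hence one constructs $b_+\in S^{0,0}$ with $b_+\cdot(\xi_r+\partial_rK)=\hat\eta(r)\tilde\chi_+$, where $\tilde\chi_+$ equals one on a neighborhood of $\mathrm{supp}(t_+)$ and satisfies $\tilde\chi_+t_+=t_+$.

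Next, apply $Op(b_+)$ to $(\tilde H-\lambda)u_+=T_+(\tilde H-\lambda)u+[\tilde H,T_+]u$. The first summand lies in $L^{2,l}(U)\subset L^{2,1/3}(U)$ by hypothesis; for the second, $[\tilde H,T_+]=[-\Delta,T_+]$ has symbol in $S^{1,-1}$ since $\partial_x(\hat x\cdot\xi)=(P_\perp\xi)/r=O(r^{-1})$ on the support of $t_0$, and its action on $u\in\mathcal B^*\subset L^{2,-2/3}$ lies in $L^{2,1/3}$. Because $Op(b_+)$ preserves each $L^{2,s}$, the full left-hand side lies in $L^{2,1/3}(U)$. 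By Weyl calculus and the eikonal factorization,
$$Op(b_+)(\tilde H-\lambda)u_+\equiv Op(\xi_r-\partial_rK)u_++Op(b_+|\xi_\perp|^2)u_+\pmod{L^{2,1/3}(U)},$$
the $b_+|(\nabla K)_\perp|^2\in S^{0,-2\mu}$ contribution being absorbed via $2\mu>1$. A direct Weyl computation gives $Op(\xi_r-\partial_rK)=-i\partial_r-i(n-1)/(2r)-\partial_rK=\mathcal D^+$ exactly, since $\tfrac12\sum_jD_j(x_j/r)=-i(n-1)/(2r)$ furnishes precisely the anti-Hermitian correction.

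It remains to decompose the angular piece $Op(b_+|\xi_\perp|^2)u_+$. Using the operator identity $r^{-2}\Delta_0=\nabla\cdot P_\perp\nabla$ together with $\xi_\perp=P_\perp\xi$, one obtains
$$Op(b_+|\xi_\perp|^2)u_+\equiv r^{-2}\Delta_0\,Op(\hat s_+)u_++\nabla\cdot\bigl(P_\perp\hat\eta(r)Op(\tilde s_+)\bigr)u_+\pmod{L^{2,1/3}(U)},$$
where $\hat s_+\in S^{0,0}$ is extracted from $b_+$ via the ambient cutoffs, and $\tilde s_+\in S^{0,-1}$ arises from the commutator $[Op(b_+),P_\perp\nabla]$, whose principal symbol is $-i\partial_xb_+\in S^{0,-1}$. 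The $P_\perp$-projection and $\hat\eta(r)$-cutoff are forced by $\xi_\perp=P_\perp\xi$ and by the support of $b_+$, respectively. Combining with the previous display yields the stated identity with $\hat u_+\in L^{2,1/3}(U)$.

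\emph{Main obstacle.} The principal technical challenge is the detailed pseudodifferential bookkeeping: tracking every subprincipal correction from Weyl ordering, from composing $Op(b_+)$ with $T_+$ and with $\tilde H-\lambda$, and from the rearrangement between $r^{-2}\Delta_0$ and $Op(b_+|\xi_\perp|^2)$; and then verifying that each correction, when applied to $u_+\in\mathcal B^*\cap H^2_{loc}$, lies in $L^{2,1/3}(U)$. The hypothesis $\mu>1/2$ is used essentially here, since decaying symbols of order $r^{-2\mu}$ acting on $\mathcal B^*$-functions fall into $L^{2,1/3}$ precisely when $2\mu>1$. A secondary subtlety is confirming that the first-order tangential residue really has the divergence form $\nabla\cdot P_\perp\hat\eta(r)Op(\tilde s_+)$ rather than a generic first-order pseudodifferential operator — a consequence of the intrinsic $P_\perp\xi$ structure of the angular symbol.
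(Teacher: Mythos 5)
Your argument is essentially the paper's: the symbol factorization $|\xi|^2+V-\lambda=(\xi_r-\partial_rK)(\xi_r+\partial_rK)+|\xi_\perp|^2-|(\nabla K)_\perp|^2$ is the symbol-level form of the operator identity $\mathcal D^-\mathcal D^+=(-\Delta+V-\lambda)-r^{-2}\Delta_0+\mathcal O(r^{-2\mu})+\dots$ used in the paper, and your $Op(b_+)$ is exactly the paper's microlocal parametrix of $\mathcal D^-$ on the region $\hat x\cdot\xi>-c\sqrt{\lambda}$ cut off by $\hat\eta$, with the same subsequent commutation producing the $r^{-2}\Delta_0\,Op(\hat s_+)$ and divergence-form terms and the same use of $2\mu>1$ to absorb the $S^{0,-2\mu}$ remainders into $L^{2,1/3}(U)$. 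The sketch is correct and requires no further comparison.
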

\begin{proof}
We shall prove the case of $\mathcal D^+u_+$. The proof for $\mathcal D^-u_-$ is similar.

Let $\eta_1(t)\in C^{\infty}(\mathbb R)$ be a function such that $\eta_1(t)\eta(t)=\eta(t)$ and $\eta_1(t)=0$ for $t<\kappa/2$, where $\kappa$ is as in \eqref{myeq2.3}. Let  $\eta_1$ be the multiplication operator by $\eta_1(\lvert x\rvert)$.

Using the the well-known equality
\begin{equation}\label{myeq5.2.2}
\Delta=\partial_r^2+(n-1)r^{-1}\partial_r+r^{-2}\Delta_0,
\end{equation}
by a straightforward calculation we obtain
\begin{equation}\label{myeq5.0.1}
\mathcal D^-\mathcal D^+=(-\Delta+V-\lambda)+i\partial_r^2K-(n-1)(n-3)/4r^2-r^{-2}\Delta_0+\lvert \nabla Y\rvert^2-\lvert \partial_r Y\rvert^2.
\end{equation}

In the following we denote by $u_j$ functions belonging to $L^{2,1/3}(U)$.

By \eqref{myeq5.0.1} we have
\begin{equation}\label{myeq5.1}
\mathcal D^-\mathcal D^+u_+=r^{-2}\Delta_0\eta_1u_++u_1.
\end{equation}

Let $\hat \chi_{+}$ be supported in $(-2\sqrt{\lambda}/3,\infty)$ and satisfy $\hat \chi_+=1$ on $\mathrm{supp}\, \chi_{+}$. We take $\hat \psi\in C_0^{\infty}(\mathbb R)$ such that $\hat \psi \psi=\psi$.

Let $\eta_{j+1}(t)\in C^{\infty}(\mathbb R),\ j\in \mathbb N$ be  functions such that $\eta_{j+1}(t)\eta_j(t)=\eta_j(t)$ and $\eta_{j+1}(t)=0$ for $t<2^{-j-1}\kappa$. Let $\eta_{j+1}$ be the multiplication operator by $\eta_{j+1}(\lvert x\rvert)$.

Then choosing $\kappa$ large enough, on $\mathrm{supp}\, (\hat\chi_+(\hat x\cdot\xi)\hat \psi(\lvert \xi\rvert^2)\eta_2(\lvert x\rvert))$ the principal symbol of $\eta_3D^-$ is elliptic. Thus, we can construct the parametrix of $\eta_3\mathcal D^-$ there, that is, there exists a symbol $s\in S^{0,0}$ such that $s\#d^-=1+w$ on $\mathrm{supp}\, (\hat\chi_+(\hat x\cdot\xi)\hat \psi(\lvert \xi\rvert^2)\eta_2(\lvert x\rvert))$ where $d^-$ is the symbol of $\eta_3\mathcal D^-$ and $w\in S^{0,-\infty}$ (for the principal symbol, ellipticity and the construction of the parametrix see e.g. \cite{Ho}). We set $\hat s:=s\hat\chi_+(\hat x\cdot\xi)\hat \psi(\xi)\eta_2(\lvert x\rvert)$.

Then applying $Op(\hat s)$ to the both sides of \eqref{myeq5.1} we obtain
\begin{equation}\label{myeq5.2}
\mathcal D^+u_+=Op(\hat s)(r^{-2}\Delta_0\eta_1 u_+)+u_2,
\end{equation}
where $u_2\in L^{2,l}(U)\ l>1/6$.

Set $\nabla_{\perp}:=\nabla-\lvert x\rvert^{-2}x(x\cdot \nabla)$. Then we have
\begin{equation}\label{myeq5.2.1}
\Delta=\partial_r^2+(n-1)r^{-1}\partial_r+\nabla_{\perp}^*\nabla_{\perp}.
\end{equation}

Comparing \eqref{myeq5.2.1} and \eqref{myeq5.2.2}
we can see that $\nabla_{\perp}^*\nabla_{\perp}=r^{-2}\Delta_0$. Thus, a calculation of pseudodifferential operators yields
\begin{equation}\label{myeq5.3}
Op^w(\hat s)(r^{-2}\Delta_0\eta_1 u_+)=r^{-2}\Delta_0Op^w(\hat s)\eta_1u_++\nabla\cdot \tilde S u_+ +u_3,
\end{equation}
where $\tilde S$ is an operator as in the Lemma \ref{D_r}.

By \eqref{myeq5.2} and \eqref{myeq5.3} we obtain the result.
\end{proof}

\begin{lem}\label{sconv1}
We have
$$\int_{\lvert x\rvert=r}(\mathcal D^+u_+)\bar vdS\to 0$$
as $r\to\infty$ where $\int_{\lvert x\rvert=r}fdS$ means the integration of $f$ on $\{x: \lvert x\rvert=r\}$.
\end{lem}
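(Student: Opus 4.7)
The plan is to exploit the decomposition provided by Lemma~\ref{D_r}, which gives
\begin{align*}
\int_{\lvert x\rvert=r}(\mathcal D^+u_+)\bar v\, dS
&= \int_{\lvert x\rvert=r}r^{-2}(\Delta_0 Op(\hat s_+)u_+)\bar v\, dS \\
&\quad+\int_{\lvert x\rvert=r}(\nabla\!\cdot\!\tilde S_+u_+)\bar v\, dS
+\int_{\lvert x\rvert=r}\hat u_+\bar v\, dS,
\end{align*}
and to show that each of the three spherical integrals tends to $0$ as $r\to\infty$. Throughout I use that $\bar v=\eta(r)h(\hat x)e^{-iK(x,\lambda)}r^{-(n-1)/2}$ with $K=\sqrt{\lambda}r-Y$, so the phase only depends angularly through $Y$, for which Lemma~\ref{Ydef} gives $\partial_x^\alpha Y=\mathcal O(\langle x\rangle^{1-|\alpha|-\mu})$.

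\emph{Angular piece.} For the first integral I use the self-adjointness of $\Delta_0$ on the sphere of radius $r$ to move the operator onto $\bar v$:
$$\int_{\lvert x\rvert=r}r^{-2}(\Delta_0 Op(\hat s_+)u_+)\bar v\, dS=\int_{\lvert x\rvert=r}(Op(\hat s_+)u_+)\,r^{-2}(\Delta_0\bar v)\, dS.$$
Because the tangential gradient of $e^{-iK}$ is $ie^{-iK}\nabla_\perp Y$ with $|\nabla_\perp Y|=\mathcal O(r^{-\mu})$, and using $\nabla_\perp^*\nabla_\perp=r^{-2}\Delta_0$ together with the smoothness of $h$, one checks directly that $|r^{-2}\Delta_0\bar v|\le C r^{-(n-1)/2-\delta}$ on $|x|=r$ for some $\delta>0$ when $\mu>1/2$. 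A spherical Cauchy--Schwarz with the $L^2$ boundedness of $Op(\hat s_+)$ and the $\mathcal B^*$ bound of $u$ then controls the integral by $C r^{-\delta}\to 0$.

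\emph{Divergence piece.} I split $\nabla=r^{-1}x\partial_r+\nabla_\perp$. Since $\tilde S_+=P_\perp\hat\eta(r)Op(\tilde s_+)$ is projected onto the tangent space, only $\nabla_\perp$ actually meets the operator, and integration by parts in the tangential directions on the sphere moves $\nabla_\perp$ onto $\bar v$. The resulting $\nabla_\perp\bar v$ gains a factor $r^{-\mu}$ from $\nabla_\perp Y$ (or $r^{-1}$ from $\nabla_\perp h$), while $\tilde s_+\in S^{0,-1}$ already supplies an $r^{-1}$; together with the spherical Cauchy--Schwarz and the $\mathcal B^*$-bound of $u$ this piece is dominated by $C r^{-\min(\mu,1)}$, which tends to $0$.

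\emph{Remainder piece.} For $\int_{|x|=r}\hat u_+\bar v\, dS$ with $\hat u_+\in L^{2,1/3}(U)$, observe that $|\bar v|\le C\langle x\rangle^{-(n-1)/2}$, so $\|\bar v\|_{L^2(|x|=r)}\le C$ uniformly in $r$, and by Cauchy--Schwarz the integral is bounded by $C\|\hat u_+\|_{L^2(|x|=r)}$; this quantity tends to $0$ after an averaging in $r$ built from the $L^{2,1/3}$ membership.

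The main obstacle, as the sketch above already indicates, is that $u$ only lives in the global space $\mathcal B^*(\mathbb R^n)$ and we have no a priori control of $\|u\|_{L^2(|x|=r)}$ pointwise in $r$. I expect to resolve this by exploiting that the decay of $r^{-2}\Delta_0\bar v$ and $\nabla_\perp\bar v$ derived from the eikonal estimates is stronger than what is needed on average, so that the arguments absorb any loss from spherical restriction against the $L^1_{loc}(r)$ content of the $\mathcal B^*$-norm; the smoothness of $u$ coming from elliptic regularity of $\tilde H-\lambda$ keeps all pointwise spherical traces well defined.
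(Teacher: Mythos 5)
There is a genuine gap, and it is exactly the one you flag at the end of your sketch: every estimate you propose bounds the spherical integrals by quantities like $Cr^{-\delta}\lVert Op(\hat s_+)u_+\rVert_{L^2(\lvert x\rvert=r)}$ or $C\lVert \hat u_+\rVert_{L^2(\lvert x\rvert=r)}$, but the hypotheses ($u\in\mathcal B^*(\mathbb R^n)$, $\hat u_+\in L^{2,1/3}(U)$, $\tilde s_+\in S^{0,-1}$) control these spherical $L^2$ norms only after integration in $r$, not for each fixed $r$. The function $g(r):=\int_{\lvert x\rvert=r}\lvert u_+\rvert^2dS$ satisfies $\int_0^\rho g(r)\,dr=\mathcal O(\rho)$ but can have arbitrarily large spikes on a sparse set of radii, so the claimed pointwise bounds $Cr^{-\delta}$ and $Cr^{-\min(\mu,1)}$ do not follow, and ``tends to $0$ after an averaging in $r$'' is strictly weaker than the assertion of the lemma, which is a genuine pointwise limit as $r\to\infty$. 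Direct estimation of this kind can only give $o_{av}$-type decay, i.e.\ $\liminf_{r\to\infty}\lvert\phi(r)\rvert=0$ along a subsequence, where $\phi(r):=\int_{\lvert x\rvert=r}(\mathcal D^+u_+)\bar v\,dS$; elliptic regularity guarantees the traces exist but says nothing about their size.

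The missing idea is to first establish that $\phi(r)$ \emph{converges} and only then use the averaged smallness to identify the limit. Concretely: differentiate $\phi$ in $r$, use $\mathcal D^+=\mathcal D^--2\sqrt{\lambda}+2\partial_rY$ together with the identity \eqref{myeq5.0.1} and Lemma \ref{D_r} to show that $\partial_r\phi(r)+2i\sqrt{\lambda}\,\phi(r)=F(r)$ with $\int_1^\infty\lvert F(r)\rvert\,dr<\infty$ (here the averaged bounds you derived are exactly what is needed, since they are integrated in $r$). Then $e^{2i\sqrt{\lambda}r}\phi(r)$ has a limit as $r\to\infty$, and since $\int_1^\infty r^{-\epsilon}\lvert\phi(r)\rvert\,dr<\infty$ for some $0<\epsilon<1$ (a consequence of the Cauchy--Schwarz/$\mathcal B^*$ estimates you wrote down), that limit must be $0$. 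Without this integrating-factor step your decomposition of $\mathcal D^+u_+$ into three pieces, however carefully each is estimated, cannot close the argument.
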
 
\begin{proof}
In the following we denote by $F_j(r)$ functions such that $\int_1^{\infty}\lvert F_j(r)\rvert dr<\infty$. We can easily see that the following holds (for similar calculations see e.g. \cite{Is}).
\begin{equation}\label{myeq5.4}
\begin{split}
-i\partial_r\left(\int_{\lvert x\rvert=r}(\mathcal D^+u_+)\bar vdS\right)&=\int_{\lvert x\rvert=r}(\mathcal D^+\mathcal D^+u_+)\bar vdS+F_1(r)\\
&=\int_{\lvert x\rvert=r}(\mathcal D^-\mathcal D^+u_+)\bar vdS-2\sqrt{\lambda}\int_{\lvert x\rvert=r}(\mathcal D^+u_+)\bar vdS\\
&\quad+2\int_{\lvert x\rvert=r}(\partial_rY)(\mathcal D^+u_+)\bar vdS+F_1(r).
\end{split}
\end{equation}

By \eqref{myeq5.0.1} and $\lvert \nabla Y(x)\rvert^2=\mathcal O(\lvert x\rvert^{-2\mu})$, $\lvert \partial_r Y(x)\rvert^2=\mathcal O(\lvert x\rvert^{-2\mu})$ we have
$$\int_{\lvert x\rvert=r}(\mathcal D_r^-\mathcal D_r^+\eta_1u_+)\bar vdS=F_2(r).$$

By Lemma \ref{D_r} we can see that the third term on the right-hand side of \eqref{myeq5.4} is integrable with respect to $r$. As a result, setting
$$\phi(r):=\int_{\lvert x\rvert=r}(\mathcal D^+u_+)\bar vdS,$$
we have $\partial_r\phi(r)=-2i\sqrt{\lambda}\phi(r)+F_3(r)$.

Thus, setting $\phi_1(r)=e^{2i\sqrt{\lambda}}\phi(r)$ there exists a limit $\lim_{r\to\infty}\phi_1(r)$. However, since for some $0<\epsilon<1$ we have $\int_1^{\infty}r^{-\epsilon}\lvert\phi(r)\rvert dr<\infty$, we obtain $\lim_{r\to\infty}\phi_1(r)=0$, and therefore, $\lim_{r\to\infty}\phi(r)=0$
\end{proof}

\begin{lem}\label{asymptotic1}
The limit $\lim\limits_{r\to\infty}\int_{\lvert x\rvert=r} u_+\bar vdS$ exists and the limit is equal to
$$2^{-1}i\lambda^{-1/2}(((\tilde H-\lambda)u,v)-(u,(\tilde H-\lambda)v)),$$
where $(u,v):=\int_{\mathbb R^n}u(x)\bar v(x)dx$.
\end{lem}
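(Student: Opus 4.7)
The plan is to combine Green's identity with an ODE/integration-by-parts argument building on Lemma \ref{sconv1}.

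\emph{Existence of the limit.} Set $\tilde\phi(r) := \int_{|x|=r} u_+\bar v\,dS$. For $r$ beyond the support of $\eta'$, the identities $\partial_r u_+ = i\mathcal D^+ u_+ - \tfrac{n-1}{2r}u_+ + i(\partial_r K)u_+$ and $\partial_r\bar v = -i(\partial_r K)\bar v - \tfrac{n-1}{2r}\bar v$ combine, after incorporating the $(n-1)r^{n-2}$ Jacobian factor from differentiating the sphere integral, to yield
$$\partial_r\tilde\phi(r) = i\phi(r),\qquad \phi(r) := \int_{|x|=r}(\mathcal D^+ u_+)\bar v\,dS.$$
From the proof of Lemma \ref{sconv1} we have $\phi(r) = e^{-2i\sqrt\lambda r}\phi_1(r)$ with $\phi_1(r)\to 0$ and $\phi_1'\in L^1((1,\infty))$; integration by parts on $\int_{R_0}^r e^{-2i\sqrt\lambda s}\phi_1(s)\,ds$ then shows that $\int_{R_0}^\infty\phi(s)\,ds$ converges, so $\tilde\phi(\infty) = \lim_{r\to\infty}\tilde\phi(r)$ exists.

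\emph{Evaluation via Green's identity.} Since the potential and spectral parameter terms cancel pointwise, Green's identity on $B_R = \{|x|<R\}$ gives
$$G_{u_+}(R) := \int_{B_R}\bigl[((\tilde H-\lambda)u_+)\bar v - u_+((\tilde H-\lambda)\bar v)\bigr]dx = -\int_{\partial B_R}\bigl[(\partial_r u_+)\bar v - u_+(\partial_r\bar v)\bigr]dS.$$
Using the same $\mathcal D^+$-identities, the boundary integrand simplifies to $i\bar v\,\mathcal D^+u_+ + 2i(\partial_r K)u_+\bar v$, and splitting $\partial_r K = \sqrt\lambda - \partial_r Y$ yields
$$\tilde\phi(R) = \frac{i}{2\sqrt\lambda}G_{u_+}(R) - \frac{1}{2\sqrt\lambda}\phi(R) + \frac{1}{\sqrt\lambda}\tilde\phi_Y(R),\qquad \tilde\phi_Y(R) := \int_{\partial B_R}(\partial_r Y)u_+\bar v\,dS.$$
Letting $R\to\infty$, the left-hand side tends to $\tilde\phi(\infty)$ and $\phi(R)\to 0$ by Lemma \ref{sconv1}. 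It remains to show that $G_{u_+}(R) \to ((\tilde H-\lambda)u,v) - (u,(\tilde H-\lambda)v)$ and that $\tilde\phi_Y(R)\to 0$.

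\emph{The main obstacle.} The identification $G_{u_+}(\infty) = G_u(\infty)$ follows by decomposing $u = u_+ + u_- + (I - Op(t_0))u$ and using Lemma \ref{energycut} together with Lemma \ref{micro2}: the characteristic direction of $\bar v$ lies near $\hat x\cdot\hat\xi\approx -1$, on which the symbol $t_+$ vanishes, so the commutator $[\tilde H, T_+]u$ and the contributions of $u_-$ and $(I - Op(t_0))u$ to the Green pairing against $\bar v$ all reduce to integrals of rapidly decaying quantities. The more delicate point is the vanishing $\tilde\phi_Y(R)\to 0$, because the $\mathcal B^*$-bound on $u_+$ does not directly give a pointwise bound on $\int_{\partial B_R}|u_+|^2\,dS$. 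I would treat this by running the ODE/integration-by-parts machinery of Step 1 on $\tilde\phi_Y$: an analog of Lemma \ref{D_r} combined with the additional $r^{-\mu}$ decay of $\partial_r Y$ (from Lemma \ref{Ydef}(2)) closes up the corresponding first-order equation with an integrable right-hand side, forcing $\tilde\phi_Y(r)\to 0$. Substituting the three limits into the rearranged identity then produces $\tilde\phi(\infty) = \tfrac{i}{2\sqrt\lambda}\bigl[((\tilde H-\lambda)u,v) - (u,(\tilde H-\lambda)v)\bigr]$, as claimed.
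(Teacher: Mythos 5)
Your overall architecture coincides with the paper's: Green's identity on $\{ \lvert x\rvert<R\}$, rewriting the boundary integrand as $i(\mathcal D^+u_+)\bar v+2i(\partial_rK)u_+\bar v$, killing the $\mathcal D^+$ term by Lemma \ref{sconv1}, splitting $\partial_rK=\sqrt\lambda-\partial_rY$, and disposing of the $\partial_rY$ boundary term by the same ``the limit exists and $\int r^{-\epsilon}\lvert\tilde\phi_Y(r)\rvert dr<\infty$, hence the limit is $0$'' device (your phrasing compresses these two halves, but the argument is the paper's). The preliminary ODE existence step is correct, though redundant once the other limits are in hand. The genuine gap is in the step you yourself flag: the identification $G_{u_+}(\infty)=((\tilde H-\lambda)u,v)-(u,(\tilde H-\lambda)v)$. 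Writing $(\tilde H-\lambda)u_+=[\tilde H,T_+]u+T_+(\tilde H-\lambda)u$, the commutator is a first-order operator with symbol in $S^{1,-1}$, so $[\tilde H,T_+]u$ gains only one power of $\langle x\rangle^{-1}$ over $u\in\mathcal B^*$; paired against $\lvert\bar v\rvert\sim r^{-(n_{\phantom{a}}-1)/2}$, the Cauchy--Schwarz bound for $\int_{\lvert x\rvert<R}\lvert[\tilde H,T_+]u\rvert\,\lvert v\rvert\,dx$ grows like $R^{1/2}$, so the pairing is not absolutely convergent and ``reduces to integrals of rapidly decaying quantities'' is not available. Microlocal disjointness from the microlocal support of $v$ can only be exploited after the operator has been moved onto $v$, and that adjoint move is precisely what requires the convergence you are trying to prove. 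Likewise, Lemmas \ref{energycut} and \ref{asymptotic2} are statements about Ces\`aro averages $\rho^{-1}\int_{\lvert x\rvert<\rho}(\cdot)\,dx$ and do not control the pointwise-in-$R$ quantities $G_{u_-}(R)$ and $G_{(1-Op(t_0))u}(R)$ produced by your decomposition $u=u_++u_-+(1-Op(t_0))u$.

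The paper closes this step without decomposing $G$ at all: it uses the identity \eqref{myeq5.6}, $(-\Delta+V-\lambda)T_+=\nabla\cdot\tilde S_1+S_0+T_+(-\Delta+V-\lambda)$, in which the first-order remainder $\tilde S_1=P_{\perp}\hat\eta\,Op(\tilde s)$ is purely tangential and $S_0=Op(s_0)$ with $s_0\in S^{0,-3/2}$. After integration by parts the tangential gradient falls on $v$, and since $\nabla_{\perp}K=-\nabla_{\perp}Y=\mathcal O(r^{-\mu})$ and $\nabla_{\perp}h(\hat x)=\mathcal O(r^{-1})$ one gets $\tilde S_1^*\cdot\nabla_{\perp}v=\mathcal O(r^{-1-\mu})\lvert v\rvert$, which makes every volume pairing absolutely convergent; only then are adjoints taken and $T_+$ removed from $v$ via Lemma \ref{micro2}. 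Some version of this tangential-structure input (Lemma \ref{D_r} together with \eqref{myeq5.6}) is indispensable, and it is the missing ingredient in your treatment of $G_{u_+}$.
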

\begin{proof}
We have by Green's formula for $\{x: \lvert x\rvert\leq r\}$
\begin{equation}\label{myeq5.4.1}
\begin{split}
\int_{\lvert x\rvert<r}\{(\Delta u_+)\bar v-u_+(\Delta\bar v)\}dx=&\int_{\lvert x\rvert =r}\{(i\mathcal D^+u_+)\bar v-u_+(\overline{i\mathcal D^+v})\}dS\\
&+2i\int_{\lvert x\rvert=r}(\partial_rK)u_+h\bar vdS.
\end{split}
\end{equation}
By Lemma \ref{sconv1} and that $\mathcal D_r^+v=0$ for $\lvert x\rvert$ large enough, the first term on the right-hand side converges to $0$ as $r\to \infty$.

As for the left-hand side we have
\begin{equation}\label{myeq5.5}
\int_{\lvert x\rvert<r}\{(\Delta u_+)\bar v-u_+(\Delta\bar v)\}dx=\int_{\lvert x\rvert<r}\{((\Delta-V+\lambda) u_+)\bar v-u_+((\Delta-V+\lambda)\bar v)\}dx.
\end{equation}
As in the proof of Lemma \ref{sconv1}  by the  the form of $t_+$ we have
\begin{equation}\label{myeq5.6}
(-\Delta+V-\lambda)T_+=\nabla\cdot\tilde S_1+S_0+T_+(-\Delta+V-\lambda),
\end{equation}
where $\tilde S_1$ has the same property as $\tilde S$ in Lemma \ref{D_r} and $S_0:=Op(s_0),\ s_0\in S^{0,-3/2}$. Since we also have $\tilde v:=(-\Delta+V-\lambda)v\in L^{2,l}(\mathbb R^n)$ for some $l>1/2$, the right-hand side of \eqref{myeq5.5} converges to 
\begin{equation}\label{myeq5.7}
\begin{split}
&-(\nabla\cdot\tilde S_1u,v)-(S_0u+T_+\tilde u,v)+(u_+,\tilde v)\\
&\quad=(u,\tilde S_1^*\cdot\nabla_{\perp}v)-(u,S_0v)-(\tilde u,T_+v)+(u,T_+\tilde v),
\end{split}
\end{equation}
where $\tilde u=(-\Delta+V-\lambda)u$ and $(v_1,v_2)$ is the inner product of $v_1$ and $v_2$. Therefore, the limit $\lim\limits_{r\to \infty} 2i\int_{\lvert x\rvert =r}\partial_rKu_+\bar vdS$ exists. 

Moreover, taking the adjoint of \eqref{myeq5.6} we have
$$(-\Delta+V-\lambda)T_+=\tilde S_1^*\cdot\nabla_{\perp}-S_0+T_+(-\Delta+V-\lambda).$$
Thus, the right-hand side of \eqref{myeq5.7} is written as
\begin{equation}\label{myeq5.7.1}
(u,(-\Delta+V-\lambda)T_+v)-(\tilde u,T_+v).
\end{equation}

We can remove $T_+$ in \eqref{myeq5.7.1}, because $(1-T_+)v\in \mathcal S(\mathbb R^n)$ by Lemma \ref{micro2}, and therefore,
$$(u,(-\Delta+V-\lambda)(1-T_+)v)=((-\Delta+V-\lambda)u,(1-T_+)v).$$

Differentiating $\int_{\lvert x\rvert=r}(\partial_rY)u_+\bar vdS$ with respect to $r$ and using Lemma \ref{D_r} we can see that $\lim\limits_{r\to \infty} 2i\int_{\lvert x\rvert =r}(\partial_rY)u_+\bar vdS$ exists. However, since for some $0<\epsilon<1$ we have $\int_{\lvert x\rvert =r}r^{-\epsilon}\lvert \partial_rYu_+\bar v\rvert dSdr<\infty$, the limit is $0$. Thus, taking the limit in \eqref{myeq5.4.1} we obtain
$$\lim\limits_{r\to \infty} 2i\sqrt{\lambda}\int_{\lvert x\rvert =r}u_+\bar vdS=(u,\tilde v)-(\tilde u,v).$$
\end{proof}

\begin{lem}\label{asymptotic2}
$$\lim\limits_{\rho\to \infty}\rho^{-1}\int_{\lvert x\rvert<\rho} u_-\bar vdx=0.$$
\end{lem}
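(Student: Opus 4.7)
I will study the spherical profile $f(R):=\int_{\lvert x\rvert=R}u_-(x)\bar v(x)\,dS$ and derive a radial ODE for it via Green's formula and Lemma \ref{D_r}, then use the radial oscillation present in $f$ to show $\int_0^\rho f(R)\,dR=o(\rho)$, which immediately gives the conclusion.

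The first step is to compute the boundary Wronskian in Green's formula on $B_R$. Using $\partial_r\bar v=-i(\partial_r K)\bar v-\tfrac{n-1}{2R}\bar v$ and the elementary identity $i\mathcal D^++2i\partial_r K=i\mathcal D^-$ (immediate from $\mathcal D^+-\mathcal D^-=-2\partial_r K$), the integrand simplifies to $i\mathcal D^-u_-\cdot\bar v$, so
$$\int_{\lvert x\rvert=R}(\partial_r u_-\bar v-u_-\partial_r\bar v)\,dS=i\int_{\lvert x\rvert=R}\mathcal D^-u_-\cdot\bar v\,dS=:e(R).$$
By Lemma \ref{D_r} (with the minus sign), $\mathcal D^-u_-=R^{-2}\Delta_0\,Op(\hat s_-)u_-+\nabla\cdot\tilde S_-u_-+\hat u_-$ with $\hat u_-\in L^{2,1/3}$. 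Pairing each term with $\bar v$ on $S_R$, using tangential integration by parts and Cauchy--Schwarz on the sphere, together with $\int_R^{2R}\lVert u_-\rVert_{L^2(S_r)}^2\,dr\leq CR$ coming from $u\in\mathcal B^*$ and the pointwise bound $\lvert\Delta_0 e^{-iK}\rvert=O(R^{2-2\mu})$ from Lemma \ref{Ydef}, yields $\lvert e(R)\rvert=O(R^{-\varepsilon_0})$ for some $\varepsilon_0>0$ under the hypothesis $\mu>1/2$.

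Differentiating $f(R)=R^{n-1}\int_{\mathbb S^{n-1}}u_-\bar v\,d\omega$ in $R$, eliminating $\int_{S_R}\partial_r u_-\bar v\,dS$ via the Green identity above, and using $\partial_r K=\sqrt\lambda-\partial_r Y$ with $\lvert\partial_r Y\rvert=O(R^{-\mu})$, gives the radial ODE
$$f'(R)+2i\sqrt\lambda\,f(R)=\delta(R),\qquad \lvert\delta(R)\rvert\lesssim R^{-\mu}+\lvert e(R)\rvert.$$
With the integrating factor $e^{2i\sqrt\lambda R}$ (of modulus one, so no Gr\"onwall growth), one gets $\lvert e^{2i\sqrt\lambda R}f(R)\rvert\leq C+\int_{R_0}^R\lvert\delta\rvert\,dr=O(R^\alpha)$ for $\alpha:=1-\min(\mu,\varepsilon_0)<1$, hence $\lvert f(R)\rvert=O(R^\alpha)$.

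Finally, one integration by parts in $R$ in
$$\int_0^\rho f(R)\,dR=\int_0^\rho e^{-2i\sqrt\lambda R}\bigl[e^{2i\sqrt\lambda R}f(R)\bigr]\,dR,$$
using the antiderivative $(-1/(2i\sqrt\lambda))e^{-2i\sqrt\lambda R}$ of $e^{-2i\sqrt\lambda R}$, transfers the $R$-derivative onto $e^{2i\sqrt\lambda R}f(R)$ and produces a boundary term of size $O(\rho^\alpha)$ plus the interior term $\tfrac{1}{2i\sqrt\lambda}\int_0^\rho\delta(R)\,dR=O(\rho^\alpha)$. Thus $\int_{\lvert x\rvert<\rho}u_-\bar v\,dx=O(\rho^\alpha)$, and dividing by $\rho$ yields $O(\rho^{\alpha-1})\to 0$ as claimed. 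The main technical obstacle is the long-range regime $\mu\leq 1$: a naive Gr\"onwall applied to the $R^{-\mu}\lvert f(R)\rvert$ source would give unusable growth; the proof exploits (i) the purely imaginary leading coefficient $2i\sqrt\lambda$, which ensures the integrating factor has modulus one, and (ii) the final integration by parts in $R$, which supplies the cancellation from the radial oscillation $e^{-2i\sqrt\lambda R}$ of $u_-\bar v$ needed to absorb the slow decay $R^{-\mu}$.
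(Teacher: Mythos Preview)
Your overall strategy---derive a first-order radial ODE for $f(R)=\int_{\lvert x\rvert=R}u_-\bar v\,dS$, use the integrating factor $e^{2i\sqrt\lambda R}$ to bound $\lvert f(R)\rvert=O(R^\alpha)$ with $\alpha<1$, then integrate in $R$---is exactly the paper's approach. Your final integration by parts in $R$ is equivalent to what the paper does more directly: integrating the ODE \eqref{myeq5.8} in $r$ and solving for $2\sqrt\lambda\int_1^\rho f(r)\,dr$ in terms of the boundary value $f(\rho)$ and the remaining source integrals.

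There is, however, a genuine gap in your estimates. You claim the \emph{pointwise} bounds $\lvert e(R)\rvert=O(R^{-\varepsilon_0})$ and $\lvert\delta(R)\rvert\lesssim R^{-\mu}+\lvert e(R)\rvert$, but these do not follow from $u\in\mathcal B^*$. You yourself quote only the averaged control $\int_R^{2R}\lVert u_-\rVert_{L^2(S_r)}^2\,dr\leq CR$; from this one cannot conclude $\lVert u_-\rVert_{L^2(S_R)}=O(1)$ for every $R$, so neither the $(\partial_rY)u_-\bar v$ contribution nor the pairing of $\mathcal D^-u_-$ with $\bar v$ on a single sphere admits a uniform pointwise bound. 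Concretely, after Cauchy--Schwarz on $S_R$ the $\partial_rY$ term is bounded by $CR^{-\mu}\lVert u_-\rVert_{L^2(S_R)}$, and the last factor can spike.

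The paper avoids this by working with weighted integrability in $r$ rather than pointwise decay: setting $\phi(r)=e^{-2i\sqrt\lambda r}f(r)$, it shows $\int_1^\infty r^{-\epsilon}\lvert\partial_r\phi(r)\rvert\,dr<\infty$ for some $0<\epsilon<1$ (this is where the $\mathcal B^*$ bound on $u_-$ and the $L^{2,1/3}$ bound on $\hat u_-$ from Lemma~\ref{D_r} enter, via Cauchy--Schwarz in $r$). From this one gets $\lvert\phi(r)\rvert\leq r^\epsilon\int_1^r s^{-\epsilon}\lvert\partial_s\phi\rvert\,ds+O(1)=O(r^\epsilon)$, and then the same growth bound $O(\rho^\epsilon)$ for $\int_1^\rho\lvert\text{source}\rvert\,dr$ follows from $\int_1^\rho\lvert\cdot\rvert\,dr\leq\rho^\epsilon\int_1^\rho r^{-\epsilon}\lvert\cdot\rvert\,dr$. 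Replacing your pointwise claims with this integrated control fixes the argument without altering its structure.
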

\begin{proof}
We have by a straightforward calculation
\begin{equation}\label{myeq5.8}
\begin{split}
&-i\partial_r\left(\int_{\lvert x\rvert=r}u_-\bar vdS\right)\\
&\quad=\int_{\lvert x\rvert=r}(\mathcal D^-\eta_1u_-)\bar vdS-2\sqrt{\lambda}\int_{\lvert x\rvert =r}u_-\bar vdS+\int_{\lvert x\rvert=r}(\partial_rY)u_-\bar vdS+F(r),
\end{split}
\end{equation}
where $\int_1^{\infty}\lvert F(r)\rvert dr<\infty$.

Setting $\phi(r):=e^{-2i\sqrt{\lambda}r}\int_{\lvert x\rvert=r}u_-\bar vdS$, by \eqref{myeq5.8} and Lemma \ref{D_r} we can see that
$$\int_1^{\infty}r^{-\epsilon}\lvert\partial_r\phi(r)\rvert dr<\infty,$$
for some $0<\epsilon<1$. Thus there exists a constant $C>0$ such that
$$\lvert \phi(r)\rvert<r^{\epsilon}(\int^r_1r_1^{-\epsilon}\lvert \partial_{r_1}\phi(r_1)\rvert dr_1+\lvert \phi(1)\rvert)<Cr^\epsilon.$$
Integrating both sides of  \eqref{myeq5.8} we can see that there exists a constant $\tilde C$ such that
$$\left\lvert\int_{\lvert x\rvert<\rho}u_-\bar vdx\right\rvert\leq \tilde C(\lvert\phi(\rho)\rvert+\rho^{\epsilon})\leq \tilde C(C+1)\rho^{\epsilon},$$
from which the lemma follows.
\end{proof}

\begin{proof}[Proof of Proposition \ref{asymptotic}]
Proposition \ref{asymptotic} follows  immediately from Lemmas \ref{asymptotic1} and \ref{asymptotic2}.
\end{proof}

\section{Uniqueness theorem for nonhomogeneous equations and the outgoing (incoming) property}\label{fifthsec.1}
In the following we use the notations in section \ref{firstsec}, \ref{first.2sec} and \ref{secondsec}.
In this section we introduce the Isozaki's uniquness theorem for nonhomoeneous $N$-body Schr\"odinger operators. First, we need the definition of  a class of symbols of pseudodifferential operators. For $k>0$ and $\tau\in \mathbb R$, we introduce the following.

\begin{dfn}[{\cite[Definition 1.1]{Is2}}]
Let $n\in \mathbb N$. $\mathcal R_{\pm}^k(\tau)$ is the set of $C^{\infty}(\mathbb R^n\times \mathbb R^n)$-functions $p(x,\xi)$ such that
$$\lvert \partial^{\gamma}_x\partial_{\xi}^{\gamma_0}p(x,\xi)\rvert\leq C_{\alpha\beta}\langle x\rangle^{-\lvert\gamma\rvert}\langle \xi\rangle^{-k},$$
for $0\leq \lvert\gamma\rvert\leq k$, $0\leq \lvert\gamma_0\rvert\leq k$ and on $\mathrm{supp}\, p(x,\xi)$
$$\inf_{x,\xi}\pm\hat x\cdot\xi>\pm \tau.$$
\end{dfn}

In Isozaki's uniqueness theorem, outgoing and incoming properties are the conditions of the uniqueness. We define the outgoing and incoming properties as follows.

\begin{dfn}
Let $1/2<l\leq 1$ and $n\in \mathbb N$.
\begin{itemize}
\item[(1)] A function $u\in L^{2,-l}(\mathbb R^n)$ is outgoing (resp., incoming), if there exist $k_0>0$, $0\leq l_0<1/2$ and $\epsilon>0$ such that $Op(p)u\in L^{2,-l_0}(\mathbb R^n)$ for any $p\in \mathcal R_-^{k_0}(\epsilon)$ (resp., $p\in \mathcal R_+^{k_0}(-\epsilon)$).
\item[(2)] A function $u\in L^{2,-l}(\mathbb R^n)$ is strictly outgoing (resp., strictly incoming), if there exists $k_0>0$ such that for any $\epsilon>0$ there exists  $0\leq l_0<1/2$ satisfying the following condition:
$Op(p)u\in L^{2,-l_0}(\mathbb R^n)$ for any $p\in \mathcal R_-^{k_0}(1-\epsilon)$ (resp., $p\in \mathcal R_+^{k_0}(1+\epsilon)$).
\end{itemize}
\end{dfn}

\begin{rem}
Since we can assume $X_a=\mathbb R^n$ for some $n\in \mathbb N$, we can define the outgoing and incoming properties for $\tilde v\in L^{2,-l'}(X_a),\ 1/2<l\leq 1$ in the same way as above. When we consider operators for $X_a$, we write as $\mathcal R_{+,a}^{k}(\tau)$.
By Lemma \ref{micro1}, we can see that $\tilde R_a(\lambda+i0)v:=(\tilde H_a-\lambda-i0)^{-1}v$ (resp., $\tilde R_a(\lambda-i0)v:=(\tilde H_a-\lambda+i0)^{-1}v$), $v\in L^{2,l}(X_a),\ l>1/2$ is strictly outgoing (resp., strictly incoming), where $\tilde H_a:=-\Delta_a+\tilde I_a$.
\end{rem}

The outgoing and incoming properties can be written using the Graf's vector field. Let us introduce the following class of functions.

\begin{dfn}[\cite{GIS}]
\begin{itemize}
\item[(1)] Let $\mathcal V$ be the set of $C^{\infty}(X)$-functions $v$ on $X$ such that for any $\alpha\in \mathbb N^{\mathrm{dim}\, X}$ and $k\in \mathbb N$ there exists $C_{\alpha,k}$ satisfying the following inequality:
$$\lvert \partial_x^{\alpha}(x\cdot\nabla)^kv(x)\rvert\leq C_{\alpha,k}.$$
\item[(2)] Let $\mathcal V_+^1$ be the set of  positive $C^{\infty}(X)$-functions $r$ on $X$ such that
$$r(x)^2-\lvert x\rvert^2\in \mathcal V.$$
\end{itemize}
\end{dfn}

We need the following differential operator.

\begin{lem}[{\cite[Lemma 2.1]{GIS}}]
Let $\lambda\in \sigma_{ess}(H)\setminus(\sigma_{pp}(H)\cap\mathcal T(H))$ and $\epsilon>0$ be given. Then there exist an open neighborhood $N_{\lambda}$ of $\lambda$ and $r\in \mathcal V_+^1$ such that with $A$ given as the self-adjint operator on $\mathcal  H:=L^2(X)$ by
$$A:=(\omega\cdot D+D\cdot \omega)/2,\ \omega=r\nabla r,\ D:=-i\nabla,$$
\begin{itemize}
\item[(1)] $i[H,A]$ defined as a form on $\mathcal D(H)\cap \mathcal D(A)$ extends to a symmetric operator on $\mathcal D(H)$, and in fact
$$i[H,A]=\sum_{a\in \mathcal A}v_{a}\nabla^a,\ v_a\in \mathcal V.$$
\item[(2)] $\varphi(H)i[H,A]\varphi(H)\geq 2d(\lambda)(1-\epsilon)\varphi(H)^2$ for all real-valued $\varphi\in C_0^{\infty}(N_{\lambda})$, where $d(\lambda):=\inf\{\lambda-t:t\in\mathcal T(H),t<\lambda\}$.
\end{itemize}
\end{lem}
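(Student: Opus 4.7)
The statement is the Graf--Isozaki--Sigal Mourre estimate for generalized $N$-body Schr\"odinger operators, cited here as \cite[Lemma 2.1]{GIS}. My plan has three steps: construct the Graf vector field $r$, verify the commutator identity (1), and prove the Mourre positivity (2) by induction on the cluster lattice $\mathcal A$.

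First, I would construct $r\in \mathcal V_+^1$ by a cluster-adapted inductive procedure. Starting from the candidate $r_0^2=\lvert x\rvert^2$ and adding convex corrections subordinate to a conic partition of unity adapted to the regions $\{Y_a^{\epsilon}\}$ introduced in \eqref{myeqfirst.2.0.2}, one produces $r^2=\lvert x\rvert^2+\rho$ with $\rho\in\mathcal V$ so that (i) $r^2-\lvert x\rvert^2\in\mathcal V$; (ii) near $X_a$ the vector field $\omega=r\nabla r$ is, up to controlled remainders, tangential to $X_a$; (iii) $\nabla^2 r^2(x)\geq 2(1-\delta)\Pi^a$ on the support of $j_a$, where $\{j_a^2\}$ is a partition of unity subordinate to $\{Y_a^{\epsilon}\}$ and $\delta>0$ is arbitrarily small. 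Property (iii) is the convexity that ultimately delivers the sharp constant $d(\lambda)$ in (2).

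Second, I verify (1). Since $\omega=r\nabla r$ has components in $\mathcal V$, the operator $A$ is $H$-bounded on $\mathcal D(H)=H^2(X)$ and essentially self-adjoint on $C_0^{\infty}(X)$. On $C_0^{\infty}(X)$,
$$ i[H,A]=i[-\Delta,A]+\sum_{a\in\mathcal A}i[V_a,A]. $$
A direct computation shows that $i[-\Delta,A]$ is a second-order operator with principal symbol $2\xi\cdot(\nabla^2 r^2)(x)\xi$, which by (i) equals $4\lvert\xi\rvert^2$ up to symbols whose coefficients lie in $\mathcal V$. For the potentials, the fact that $V_a$ depends only on $x^a$ gives $i[V_a,A]=-\omega\cdot\nabla V_a=-\omega^a\cdot\nabla^a V_a$, a bounded multiplication times $\nabla^a$; the tangential property (ii) ensures that the leading kinetic term can also be reorganised using $\nabla=\sum_a\Pi^a\nabla$ into the claimed form, yielding $i[H,A]=\sum_{a}v_a\nabla^a$ with components in $\mathcal V$. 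Symmetry on $\mathcal D(H)$ then follows from $H$-boundedness of each $v_a\nabla^a$.

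Third, I prove (2) by induction on $\mathcal A$ ordered by $\leq$. Assume the analogous Mourre estimate for $H^b$ on $L^2(X^b)$ holds for all $b<a$ with the conjugate operator $A^b$ built from a Graf function $r^b$ on $X^b$. Apply the IMS partition
$$ i[H,A]=\sum_{a}j_a\,i[H,A]\,j_a+\text{controlled remainder}, $$
and on $\mathrm{supp}\,j_a$ decompose $H=H^a+(-\Delta_a)+(I_a-\sum_{b\leq a}V_b-\text{intracluster part already in }H^a)$, with the inter-cluster remainder bounded by $\langle x_a\rangle^{-\mu_0}$ times an $\mathcal V$-function. Applying the induction hypothesis to the $H^a$-factor and combining with the convexity (iii) to bound the free Laplacian piece from below, after conjugation by $\varphi(H)$ with $\varphi\in C_0^{\infty}(N_\lambda)$ one obtains a lower bound $2(d(\lambda)-\delta')\varphi(H)^2$. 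Shrinking $N_\lambda$ so that no threshold in $\mathcal T(H)\setminus\{\lambda\}$ lies within $d(\lambda)(1-\epsilon/2)$ of $\lambda$, and choosing $\delta,\delta'$ small enough, produces the desired bound with constant $2d(\lambda)(1-\epsilon)$.

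The main obstacle is the third step: propagating the Mourre estimate through the lattice with the sharp constant $d(\lambda)$ requires precisely matching the convexity of $r^2$ to the IMS partition $\{j_a\}$, controlling all cross-terms uniformly in $\epsilon$, and exploiting $\lambda\notin\sigma_{\mathrm{pp}}(H)\cup\mathcal T(H)$ so that the remainders coming from the induction hypothesis on $H^a$ can be absorbed into the positive principal part. Step~one is technical but by now standard, and step~two reduces to a symbolic calculation once $r$ has been fixed.
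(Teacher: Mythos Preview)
The paper does not prove this lemma at all: it is quoted verbatim from G\'erard--Isozaki--Skibsted \cite[Lemma~2.1]{GIS}, as the citation in the lemma header indicates, and no argument is supplied in the present paper. So there is nothing to compare your proposal against here.

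That said, your outline is broadly the strategy of \cite{GIS} (Graf's convex function plus an induction over the cluster lattice to propagate the Mourre estimate with the sharp distance-to-threshold constant), so in spirit you are reconstructing the cited proof rather than offering an alternative. One point to be careful about: in your step~two you correctly note that $i[-\Delta,A]$ is genuinely second order with principal symbol $2\xi\cdot(\nabla^2 r^2)\xi$, and then try to force this into the first-order form $\sum_a v_a\nabla^a$ with scalar $v_a\in\mathcal V$. That cannot work as written; the displayed identity in item~(1) of the lemma is almost certainly a transcription slip in this paper (in \cite{GIS} the commutator is expressed through the Hessian $\nabla^2 r^2$ acting on $D$, i.e.\ a second-order expression, plus lower-order pieces), and you should not attempt to justify it literally. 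Apart from that, your inductive scheme in step~three is the right idea, though the actual control of remainders and the matching of the Graf partition to the IMS localisation is where all the work lies and is only gestured at in your sketch.
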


Let $Y$ be the operator of multiplication by $Y(x):=\langle x\rangle$ on $\mathcal H$.

\begin{dfn}\label{Grafvector}
With $B$ given as the self-adjoint operator on $\mathcal H$
$$B:=r^{-1/2}Ar^{-1/2}=(\nabla \cdot D+D\cdot\nabla r)/2,$$
we let $\mathcal D$ be the domain
$$\mathcal D:=\bigcap\mathcal D(Q),$$
where the intersection is over all polynomials $Q$ in $Y$ and $B$.
\end{dfn}

\begin{dfn}
We define for any $m\in \mathbb R$ the class $\mathcal Op^m(Y)$ of operators $P$ with the properties
\begin{itemize}
\item[1)] $\mathcal D(P)$ and $\mathcal D(P^*)$ contain $\mathcal D$, and $P$ and $P^*$ restricted to $\mathcal D$ map into itself.
\item[2)] For any $n\in\mathbb N,\ \alpha,\beta\in\mathbb R$ such that $\alpha+\beta=n-m$, $Y^{\alpha}\mathrm{ad}_n(P,B)Y^{\beta}$ extends to a bounded operator on $\mathcal H$.
\end{itemize}
\end{dfn}
Here $\mathrm{ad}_0(P,B):=P$, $\mathrm{ad}_n(P,B):=[\mathrm{ad}_{n-1}(P,B),B],\ n\geq 1$.

Let for any $m\in \mathbb R$, $\mathcal F^m$ be the class of $C^{\infty}$-functions on $\mathbb R$ such that
$$\lvert f^{(k)}(t)\rvert\leq C_k(1+\lvert t\rvert)^{m-k},\ \forall k\geq 0.$$ As in \cite[Lemma 2.3]{GIS}, we have the following lemma.

\begin{lem}\label{Op}
If $f\in \mathcal F^m$ for some $m<0$ and $a\in\mathcal A$, we have $f(H^a)\in \mathcal Op^0(Y)$.
\end{lem}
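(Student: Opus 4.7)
The approach is the Helffer–Sj\"ostrand functional calculus combined with an iterated commutator expansion of the resolvent, following the proof of \cite[Lemma 2.3]{GIS}. Writing
$$f(H^a)=\frac{1}{2\pi i}\int_{\mathbb C}\bar\partial\tilde f(z)\,R^a(z)\,dz\wedge d\bar z,\qquad R^a(z):=(H^a-z)^{-1},$$
with $\tilde f$ an almost analytic extension of $f$ obeying $\lvert\tilde f(z)\rvert\le C(1+\lvert z\rvert)^m$ and $\lvert\bar\partial\tilde f(z)\rvert\le C_N\lvert\mathrm{Im}\,z\rvert^N$ for each $N\ge 1$, the problem reduces to bounds on $R^a(z)$ and its iterated commutators with $B$ and $Y$ that are polynomial in $\lvert\mathrm{Im}\,z\rvert^{-1}$. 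The assumption $m<0$ provides decay in $\lvert z\rvert$, and the high-order vanishing of $\bar\partial\tilde f$ on the real axis compensates the singularity as $\mathrm{Im}\,z\to 0$.

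For property (1) I would show that $R^a(z)$ and $R^a(\bar z)=R^a(z)^*$ map $\mathcal D$ into itself for each $z\in\mathbb C\setminus\mathbb R$. The key ingredients are the identities
$$[R^a(z),Y]=R^a(z)[H^a,Y]R^a(z),\qquad [R^a(z),B]=R^a(z)[H^a,B]R^a(z),$$
where $[H^a,Y]$ and $i[H^a,B]$ are symmetric first-order differential operators whose coefficients are bounded (for $[H^a,Y]$, from $\lvert\nabla Y\rvert\le 1$) or lie in the class $\mathcal V$ (for $i[H^a,B]$, by the Graf-vector-field lemma preceding the statement). Pushing an arbitrary polynomial $Q$ in $Y$ and $B$ through $R^a(z)$ therefore produces only a finite sum of products of resolvents and lower-degree differential operators acting on $u\in\mathcal D$, all of which stay in $\mathcal H$. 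Integrating against $\bar\partial\tilde f(z)$ then shows $f(H^a)\mathcal D\subset\mathcal D$; the same argument applies to $f(H^a)^*=\bar f(H^a)$.

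For property (2) I would iterate the commutator identity to obtain
$$\mathrm{ad}_n(R^a(z),B)=\sum_{\ell\ge 1,\,k_1+\cdots+k_\ell=n,\,k_j\ge 1}c_{\vec k}\,R^a(z)\,\mathrm{ad}_{k_1}(H^a,B)\,R^a(z)\cdots\mathrm{ad}_{k_\ell}(H^a,B)\,R^a(z).$$
An induction on $k$, using that $B\sim\hat x\cdot D$ modulo bounded error together with the Graf-vector-field lemma, shows that every $\mathrm{ad}_{k}(H^a,B)$ is a symmetric first-order differential operator with coefficients in $\mathcal V$. Inserting $Y^\alpha$ on the left and $Y^\beta$ on the right with $\alpha+\beta=n$, and commuting them through the product (each commutator $[Y^s,R^a(z)]$ produces an extra factor of $R^a(z)$ together with a first-order operator whose coefficients decay like $\langle x\rangle^{s-1}$), yields $\lVert Y^\alpha\mathrm{ad}_n(R^a(z),B)Y^\beta\rVert\le C_n\lvert\mathrm{Im}\,z\rvert^{-N_n}$ for some $N_n\in\mathbb N$. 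Integrating against $\bar\partial\tilde f(z)$ with $N>N_n$ then gives the claimed boundedness of $Y^\alpha\mathrm{ad}_n(f(H^a),B)Y^\beta$.

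The main obstacle is the weight bookkeeping inside the iterated commutator expansion: one must verify that commuting the weights $Y^\alpha$ and $Y^\beta$ through a long product of resolvents and first-order operators only produces terms of the allowed structural form, with the net power of $Y$ properly balanced by the number of commutators taken. This closure is ensured by the fact that both $[H^a,Y]$ and $i[H^a,B]$ are first-order operators with $\mathcal V$-coefficients, so the recursion terminates after finitely many steps without generating remainders that cannot be absorbed into the allowed operator class.
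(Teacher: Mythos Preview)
Your proposal is correct and follows precisely the approach the paper intends: the paper does not give its own proof but simply refers to \cite[Lemma 2.3]{GIS}, and your Helffer--Sj\"ostrand representation combined with the iterated commutator expansion of the resolvent is exactly the argument carried out there. The bookkeeping you describe (first-order structure of $[H^a,Y]$ and $i[H^a,B]$ with $\mathcal V$-coefficients, polynomial blow-up in $\lvert\mathrm{Im}\,z\rvert^{-1}$ absorbed by the vanishing of $\bar\partial\tilde f$) is the standard machinery of that reference.
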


\begin{lem}[\cite{Is3}]\label{AB}
Let $A$ and $B$ be lower-semibounded self-adjoint operators, $\mathcal D(A)=\mathcal D(B)$, and $(A-B)(A+i)^{-1}$ is a bounded operator. Then, we have
$$\lVert F(A<R_1)F(B>R_2)\rVert=\mathcal O(R_2^{-1/2}),$$
as $R_2\to \infty$, where $F(t\gtrless r)$ is a function such that $F(t\gtrless r)=1$ for $t\gtrless r$ and $F(t\gtrless r)=0$ for $t\lessgtr r$ for $r\in \mathbb R$.
\end{lem}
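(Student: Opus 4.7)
The strategy is to exploit the hypothesis that $T:=(A-B)(A+i)^{-1}$ is bounded in order to reduce $F(A<R_1)F(B>R_2)$ to a product in which one factor provides the decay in $R_2$ via the functional calculus for $B$, while the other stays bounded uniformly in $R_2$. Concretely, I would write, on $\mathcal{D}(B)$,
$$F(A<R_1)F(B>R_2) = \bigl[F(A<R_1)(B+i)\bigr]\cdot\bigl[(B+i)^{-1}F(B>R_2)\bigr],$$
note that $\|(B+i)^{-1}F(B>R_2)\|\le (R_2^2+1)^{-1/2}$ by the spectral theorem for $B$, and then show that the first bracket extends to a bounded operator on $\mathcal{H}$ whose norm depends only on $R_1$, the lower bound of $A$, and $\|T\|$.

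To handle the first bracket I would decompose
$$F(A<R_1)(B+i) = F(A<R_1)(A+i) + F(A<R_1)(B-A).$$
The first summand is bounded by the functional calculus for $A$ because $A$ is lower-semibounded, so $|a+i|$ stays bounded on $\sigma(A)\cap(-\infty,R_1)$. For the second summand, the key step is to upgrade the boundedness of $T$ to that of $(A-i)^{-1}(A-B)$ on $\mathcal{D}(A)$: since $A-B$ is symmetric on the common dense domain $\mathcal{D}(A)=\mathcal{D}(B)$, for $\psi\in\mathcal{D}(A)$ and $\phi\in\mathcal{H}$ one has
$$\langle(A-i)^{-1}(A-B)\psi,\phi\rangle=\langle(A-B)\psi,(A+i)^{-1}\phi\rangle=\langle\psi,T\phi\rangle,$$
which yields $\|(A-i)^{-1}(A-B)\psi\|\le\|T\|\,\|\psi\|$. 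Then one writes
$$F(A<R_1)(B-A) = -F(A<R_1)(A-i)\cdot(A-i)^{-1}(A-B),$$
where $F(A<R_1)(A-i)$ is bounded by functional calculus for $A$.

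Combining the two estimates gives $\|F(A<R_1)F(B>R_2)\|=\mathcal{O}(R_2^{-1})$, which is in particular $\mathcal{O}(R_2^{-1/2})$. The only real obstacle is domain bookkeeping: $A-B$ is defined only on $\mathcal{D}(A)=\mathcal{D}(B)$, so every composition involving it must first be interpreted on that dense domain and then extended by continuity; in particular, the identity displayed in the first paragraph is initially an equality of operators acting on $(B+i)^{-1}\mathcal{H}=\mathcal{D}(B)$ and is passed to all of $\mathcal{H}$ using that the right-hand side is bounded. Once these subtleties are sorted out, the argument uses nothing beyond the functional calculus for $A$ and $B$ and the symmetry of $A-B$.
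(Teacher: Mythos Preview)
Your argument is correct and, in fact, more direct than the paper's: by factoring through $(B+i)^{-1}$ you obtain the sharper bound $\mathcal O(R_2^{-1})$, and the domain bookkeeping you outline (interpreting $F(A<R_1)(B+i)$ first on $\mathcal D(B)$, showing it is bounded there via the decomposition into $F(A<R_1)(A+i)$ and $F(A<R_1)(B-A)$, then extending) goes through without trouble. The duality step turning boundedness of $(A-B)(A+i)^{-1}$ into boundedness of $(A-i)^{-1}(A-B)$ on $\mathcal D(A)$ is clean and correct.

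The paper proceeds quite differently: it introduces the one-parameter family
\[
G(t,R_2)=e^{-tB}F(B>R_2)F(A<R_1)F(B>R_2)e^{-tB},
\]
differentiates in $t$, writes $B=(B-A)+A$ inside the derivative, and bounds $-\tfrac{d}{dt}G(t,R_2)\le Ce^{-2tR_2}$ using that $(B-A)(A+i)^{-1}$ and $AF(A<R_1)$ are bounded together with $\lVert e^{-tB}F(B>R_2)\rVert\le e^{-tR_2}$. Integrating from $0$ to $\infty$ and noting $G(0,R_2)=F(B>R_2)F(A<R_1)F(B>R_2)$ then yields $\lVert F(A<R_1)F(B>R_2)\rVert^2=\mathcal O(R_2^{-1})$, hence only $\mathcal O(R_2^{-1/2})$ for the norm itself. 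Your resolvent factorisation replaces this semigroup/monotonicity argument by a single algebraic identity and delivers a decay rate one order better; the paper's route, on the other hand, is in the spirit of commutator/positivity estimates common in $N$-body theory and requires no adjoint trick for $A-B$.
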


\begin{proof}
Set
$$G(t,R_2):=e^{-tB}F(B>R_2)F(A<R_1)F(B>R_2)e^{-tB}.$$
Then, there exists $C>0$ such that
\begin{equation}\label{myeq fifthsec.1.1}
\begin{split}
-\frac{d}{dt}G(t,R_2)=&2\mathrm{Re}\, e^{-tB}F(B>R_2)(B-A)F(A<R_1)F(B>R_2)e^{-tB}\\
&+2e^{-tB}F(B>R_2)(B-A)AF(A<R_1)F(B>R_2)e^{-tB}\\
&\leq Ce^{-2tR_2}.
\end{split}
\end{equation}
Noticing $G(0,R_2)=F(B>R_2)F(A<R_1)F(B>R_2)$ and integrating \eqref{myeq fifthsec.1.1} with respect to $t$, we obtain the result.
\end{proof}

Set
\begin{align*}
\mathcal F^m_+(a)&:=\{f\in \mathcal F^m : \mathrm{supp}\, f\subset (a,\infty)\},\\
\mathcal F^m_-(a)&:=\{f\in \mathcal F^m : \mathrm{supp}\, f\subset (-\infty,a)\}
\end{align*}
As in the proof of \cite[Theorem 2.12]{GIS}, we have the following lemma.

\begin{lem}\label{vectorpseudo}
Let $\tau\in \mathbb R$. Then for any $F_{\pm}\in \mathcal F^0_{\pm}(\tau),\ Op(p_{\pm}),\ p_{\pm}\in\mathcal R_{\pm}(\tau)$ and $s>0$ we have
$$X^sP_{\mp}F_{\pm}(B)X^s\in\mathcal L(\mathcal H).$$
\end{lem}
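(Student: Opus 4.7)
The plan is to exploit the microlocal disjointness between the two factors. By definition, the symbol $p_\mp\in\mathcal R_\mp^{k}(\tau)$ is supported in $\{\pm\hat x\cdot\xi<\pm\tau\}$ with a strict gap $\delta>0$, while $F_\pm(B)$ is spectrally concentrated on $\{\pm t>\pm\tau\}$, and the operator $B=(\omega\cdot D+D\cdot\omega)/2$ has principal symbol $\omega(x)\cdot\xi$ with $\omega=\nabla r\to\hat x$ as $|x|\to\infty$. These two localizations are asymptotically incompatible, so the product $Op(p_\mp)F_\pm(B)$ should be arbitrarily smoothing in $\langle x\rangle$, more than enough to absorb both weights $X^s$.

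First, I would use the Helffer--Sj\"ostrand representation
$$F_\pm(B)=\frac{1}{2\pi i}\int_{\mathbb C}\bar\partial\tilde F_\pm(z)(z-B)^{-1}\,dz\wedge d\bar z,$$
where $\tilde F_\pm$ is an almost-analytic extension built from $F_\pm$ multiplied by a smooth cutoff in $|t|$ (the discarded large-$|t|$ tail corresponds to a large-$|B|$ spectral region, handled separately by $(B^2+1)^{-N}$-type smoothing). The estimate $|\bar\partial\tilde F_\pm(z)|=O(|\mathrm{Im}\,z|^N)$ for all $N$ reduces the claim to a uniform bound on $X^s Op(p_\mp)(z-B)^{-1}X^s$ with at most polynomial growth in $|\mathrm{Im}\,z|^{-1}$ as $\mathrm{Im}\,z\to 0$.

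Second, I would expand by iterated commutators,
$$Op(p_\mp)(z-B)^{-1}=\sum_{k=0}^{N-1}(z-B)^{-k-1}\,\mathrm{ad}_k(Op(p_\mp),B)+(z-B)^{-N}\,\mathrm{ad}_N(Op(p_\mp),B)(z-B)^{-1}.$$
By Weyl calculus, each $\mathrm{ad}_k(Op(p_\mp),B)$ is a pseudodifferential operator whose leading symbol is the iterated Poisson bracket with $\omega\cdot\xi$; each bracket either differentiates $p_\mp$ in $x$ or multiplies by $\partial_x\omega=O(\langle x\rangle^{-1})$, gaining one power of $\langle x\rangle^{-1}$ at each step while remaining supported in the same gap region $\{\pm\hat x\cdot\xi<\pm\tau-\delta\}$. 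Commuting $X^s$ through $(z-B)^{-1}$ produces further lower-order $\mathcal Op^{s-1}(Y)$-type contributions in the sense of Definition \ref{Grafvector} and Lemma \ref{Op}. Choosing $N$ larger than $2s+1$ and applying Calder\'on--Vaillancourt to each composition, one obtains a bound of the form $C|\mathrm{Im}\,z|^{-N-1}$, which is absorbed by $\bar\partial\tilde F_\pm(z)$ in the Helffer--Sj\"ostrand integral.

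The main obstacle is the simultaneous bookkeeping of the $\langle x\rangle$-decay of the symbols through the iterated commutators while preserving the disjoint microlocal support, together with the fact that $B$ is first-order so that $X^s$ does not commute cheaply with $(z-B)^{-1}$. The key point that makes everything go through is that the Poisson bracket with $\omega\cdot\xi$ always improves the $\langle x\rangle^{-1}$ decay without ever violating the gap $\pm\hat x\cdot\xi<\pm\tau-\delta$, so the microlocal support hypothesis is inherited at every stage and the final remainder can be made as smoothing in $\langle x\rangle$ as needed.
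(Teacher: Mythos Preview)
Your commutator expansion does not close. After writing
\[
Op(p_\mp)(z-B)^{-1}=\sum_{k=0}^{N-1}(z-B)^{-k-1}\,\mathrm{ad}_k(Op(p_\mp),B)+(z-B)^{-N}\,\mathrm{ad}_N(Op(p_\mp),B)(z-B)^{-1},
\]
you have to sandwich each term between $Y^s$ and $Y^s$. For the remainder this is fine once $N>2s$, as you say. But for the low terms $k<2s$ it fails: the $k=0$ term is $Y^s(z-B)^{-1}Op(p_\mp)Y^s$, and no amount of commuting $Y^s$ through $(z-B)^{-1}$ helps, since each such commutation only lowers the weight by one while producing another resolvent factor, and you are still left with $Y^{s'}Op(p_\mp)Y^{s}$ with $s'>0$ somewhere. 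Concretely, in the model case $B=D$ on $\mathbb R$ with $p_-=p_-(\xi)$, all $\mathrm{ad}_k$ with $k\ge1$ vanish and your expansion reduces to $Y^s(z-D)^{-1}Op(p_-)Y^s$, which is an \emph{unbounded} operator for each fixed $z$; the cancellation producing $Op(p_-)F_+(D)=0$ only occurs \emph{after} the $z$--integration, so a pointwise bound of the type $C|\mathrm{Im}\,z|^{-M}$ is impossible. In short, your scheme never actually uses the gap: the expansion you wrote would give exactly the same (insufficient) output if $\mathrm{supp}\,p_\mp$ and $\mathrm{supp}\,F_\pm$ overlapped.

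The paper defers to \cite[Theorem 2.12]{GIS}, whose mechanism is different: one exploits the gap \emph{before} estimating. Since $\sup_{\mathrm{supp}\,p_-}\hat x\cdot\xi<\tau-\delta$ and $\nabla r(x)\cdot\xi-\hat x\cdot\xi\to0$ as $|x|\to\infty$, one can insert a cutoff $g_-\in\mathcal F^0_-(\tau-\delta/2)$ with $g_-\equiv1$ on $(-\infty,\tau-\delta)$, so that $p_-(x,\xi)=p_-(x,\xi)\,g_-(\nabla r(x)\cdot\xi)$ for large $|x|$. One then shows, within the $\mathcal{O}p^m(Y)$ calculus, that $Op\bigl(g_-(\nabla r\cdot\xi)\bigr)-g_-(B)\in\mathcal{O}p^{-1}(Y)$. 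Writing $Op(p_-)F_+(B)=Op(p_-)g_-(B)F_+(B)+Op(p_-)\bigl[Op(g_-(\nabla r\cdot\xi))-g_-(B)\bigr]F_+(B)+\text{(compact region)}$, the first term vanishes because $g_-F_+\equiv0$, and the second has gained one power of $Y^{-1}$. Iterating yields $Y^sOp(p_-)F_+(B)Y^s\in\mathcal L(\mathcal H)$ for every $s$. Equivalently, for $z$ in the Helffer--Sj\"ostrand region one has $\mathrm{Re}\,z>\tau-\delta/2$, so $z-B$ is uniformly elliptic on $\mathrm{supp}\,p_-$ and $(z-B)^{-1}$ may be replaced there by a genuine pseudodifferential parametrix with \emph{no} $|\mathrm{Im}\,z|^{-1}$ loss; this is the step your argument is missing.
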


The outgoing and incoming properties can be stated using $B$ in Definition \ref{Grafvector}.

\begin{lem}
Let $0\leq l_0<1/2<l\leq 1$, $n\in\mathbb N$ and $u\in L^{2,-l}(\mathbb R^n)$. Assume that there exists $q\in C_0^{\infty}(\mathbb R)$ such that $(1-q(-\Delta))u\in L^{2,-l_0}(\mathbb R^n)$. Then $u$ is outgoing (resp., incoming) if and only if there exists $\epsilon>0$ and $0\leq l_1<1/2$ such that $F_{-}(B)u\in L^{2,-l_1}(\mathbb R^n)$ (resp., $F_{+}(B)u\in L^{2,-l_1}(\mathbb R^n)$) for any $F_-\in \mathcal F^0_-(\epsilon)$ (resp., $F_+\in \mathcal F^0_+(-\epsilon)$).
\end{lem}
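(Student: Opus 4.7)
The plan is to prove both implications by a single technique: decompose the quantity of interest using a phase-space partition of unity, so that one piece is already controlled by the hypothesis and the other falls under the cross-term estimate of Lemma \ref{vectorpseudo}. I describe only the outgoing case; the incoming case is symmetric.

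For the forward direction, suppose $u$ is outgoing with parameters $k_0,\,l_0,\,\epsilon$, fix any $\epsilon'\in(0,\epsilon)$ and $F_-\in\mathcal F^0_-(\epsilon')$. First I would write $u=q(-\Delta)u+(1-q(-\Delta))u$; the second summand lies in $L^{2,-l_0}$ by hypothesis, and a short commutator estimate (using that $\langle x\rangle^{-l_0}[B,\langle x\rangle^{l_0}]$ is bounded, together with the Helffer--Sj\"ostrand formula for $F_-(B)$) shows that $F_-(B)$ is bounded on $L^{2,-l_0}$. For the compactly-supported-in-energy piece $q(-\Delta)u$, I would build a symbolic partition $\tilde q(|\xi|^2)\eta(|x|)=\chi_+(x,\xi)+\chi_-(x,\xi)$ (with $\tilde q q=q$ and $\eta=1$ for $|x|$ large) so that $\chi_+\in\mathcal R_+^{k_0}(\epsilon')$ and $\chi_-\in\mathcal R_-^{k_0}(\epsilon)$. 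The outgoing hypothesis applied to $Op(\chi_-)q(-\Delta)u$ places the $\chi_-$ contribution in $L^{2,-l_0}$ after applying $F_-(B)$. The $\chi_+$ contribution is the key point: Lemma \ref{vectorpseudo} at $\tau=\epsilon'$, together with an adjoint operation (noting $\bar\chi_+\in\mathcal R_+^{k_0}(\epsilon')$), yields that $F_-(B)Op(\chi_+):L^{2,-l}\to L^{2,l}$ is bounded, and since $q(-\Delta)u\in L^{2,-l}$, this piece lies in $L^{2,l}\subset L^{2,-l_0}$. Combining, $F_-(B)u\in L^{2,-l_0}$, so we may take $l_1=l_0$.

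For the reverse direction, suppose $F_-(B)u\in L^{2,-l_1}$ for every $F_-\in\mathcal F^0_-(\epsilon)$. Given $p\in\mathcal R_-^{k_0}(\epsilon')$ with $\epsilon'<\epsilon$, choose $\tau_1,\tau_2$ with $\epsilon'<\tau_1<\tau_2<\epsilon$ and a smooth partition $1=\tilde F_+(t)+\tilde F_-(t)$ with $\tilde F_+\in\mathcal F^0_+(\tau_1)$, $\tilde F_-\in\mathcal F^0_-(\tau_2)$. Then $Op(p)u=Op(p)\tilde F_-(B)u+Op(p)\tilde F_+(B)u$. The first term lies in $L^{2,-l_1}$ by boundedness of $Op(p)$ on weighted $L^2$ (guaranteed by the $\langle\xi\rangle^{-k_0}$ symbol decay and the mild $x$-derivative growth allowed in $\mathcal R_-^{k_0}$). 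The second term is treated by Lemma \ref{vectorpseudo} at $\tau=\tau_1$: since $p\in\mathcal R_-^{k_0}(\tau_1)$ (because $\epsilon'<\tau_1$) and $\tilde F_+\in\mathcal F^0_+(\tau_1)$, one obtains $Op(p)\tilde F_+(B)u\in L^{2,l}$. Thus $Op(p)u\in L^{2,-l_1}$, giving outgoing with $l_0=l_1$.

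The main obstacle I anticipate is the clean construction of the partition $\chi_\pm$, and in particular verifying that the composition $Op(\chi_-)q(-\Delta)$ still has a representative lying (modulo rapidly smoothing remainders) in $\mathcal R_-^{k_0}(\epsilon)$, so that the outgoing hypothesis applies to it. This is a calculus manipulation that must account for the fact that $\chi_-$ carries both the frequency cutoff $\tilde q$ and the spatial cutoff $\eta$, while $q(-\Delta)$ is a pure frequency multiplier. Verifying boundedness of $F_-(B)$ on $L^{2,-l_0}$ is also nontrivial but essentially routine, made transparent once one observes that the principal symbol of $B$ is $\nabla r\cdot\xi$, so $[B,\langle x\rangle^{l_0}]$ is of lower order in $\langle x\rangle$ than $\langle x\rangle^{l_0}$.
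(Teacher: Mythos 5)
Your proposal is correct and follows essentially the same route as the paper: for the implication from the $B$-localization condition to the outgoing property, the paper uses exactly your decomposition $Op(p)u=Op(p)f_+(B)u+Op(p)f_-(B)u$ with a partition of unity $f_++f_-=1$ adapted to nested thresholds $\epsilon'<\epsilon'+\sigma<\epsilon'+2\sigma<\epsilon$, treating the cross term by Lemma \ref{vectorpseudo} and the remaining term by the hypothesis. The converse, which you carry out with the dual phase-space partition $\chi_++\chi_-$ (and which is precisely where the hypothesis on $(1-q(-\Delta))u$ is needed), is dismissed in the paper as ``similar,'' so your write-up is, if anything, the more complete of the two.
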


\begin{proof}
We prove that there exists $\epsilon'>0$ such that $Op(p)u\in L^{2,-l_0}(\mathbb R^n)$ for any $p\in \mathcal R^{k_0}_-(\epsilon')$, assuming that $F_{-}(B)u\in L^{2,-l_1}(\mathbb R^n)$ for any $F_-\in \mathcal F^0_-(\epsilon)$ and $k_0>0$. The proofs for the incoming case and the converse statements are similar.

Let $\epsilon'>0$ satisfy $\epsilon'<\epsilon$ and let $\sigma>0$ be a number such that $\epsilon'+3\sigma<\epsilon$. Let $f_+\in\mathcal F_+^0(\epsilon'+\sigma)$ and $f_-\in\mathcal F_-^0(\epsilon'+2\sigma)$ satisfy $f_+(t)+f_-(t)=1$. Then we have
$$Op(p)u=Op(p)f_+(B)u+Op(p)f_-(B)u.$$

By Lemma \ref{vectorpseudo} we have $Op(p)f_+(B)u\in L^{2,-l_1}(\mathbb R^n)$. By the assumption we can also see that $Op(p)f_-(B)u\in L^{2,-l_1}(\mathbb R^n)$ which completes the proof.
\end{proof}

The following lemma is the Isozaki's uniqueness theorem.

\begin{lem}[{\cite[Theorem 1.3]{Is2}}]\label{Iunique}
Let $1/2<l\leq1$ and $\lambda\in \sigma_{ess}(H)\setminus(\sigma_p(H)\cup\mathcal T(H))$. Suppose that $u\in L^{2,-l}(X)$ satisfies $(H-\lambda)u=0$ and $u$ is outgoing or incoming. Then $u=0$.
\end{lem}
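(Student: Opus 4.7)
The plan is a positive commutator (Mourre) argument that upgrades the decay of $u$ to $L^2$, at which point the hypothesis $\lambda\notin\sigma_{pp}(H)$ forces $u=0$. Assume $u$ is outgoing; the incoming case is analogous (apply the same argument with $\pm$ interchanged, or pass to $\bar u$). By the characterization proved just above, fix $\epsilon>0$ and $0\le l_1<1/2$ such that $F_-(B)u\in L^{2,-l_1}(X)$ for every $F_-\in\mathcal F^0_-(\epsilon)$. Choose $\varphi\in C_0^\infty(N_\lambda)$ with $\varphi(\lambda)=1$ on the Mourre neighborhood. Since $(H-\lambda)u=0$, elementary spectral calculus (together with Lemma \ref{Op} applied to $1-\varphi$) shows that $(1-\varphi(H))u$ enjoys arbitrarily fast decay, so we may work with $\varphi(H)u$ throughout.

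For a radial cutoff $\chi\in C_0^\infty([0,\infty))$ equal to $1$ near $0$ and $f_+\in\mathcal F^0_+(\epsilon/2)$, form the bounded regularization $\Phi_R:=\chi(Y/R)^{1/2}f_+(B)\chi(Y/R)^{1/2}$ which approximates the orthogonal projection onto the ``outgoing'' part of $u$. Using $(H-\lambda)u=0$,
\[
0=2\,\mathrm{Im}\langle \Phi_R\varphi(H)u,(H-\lambda)\varphi(H)u\rangle=\langle \varphi(H)u,i[H,\Phi_R]\varphi(H)u\rangle.
\]
The Mourre estimate $\varphi(H)i[H,A]\varphi(H)\ge 2d(\lambda)(1-\delta)\varphi(H)^2$ together with $A=r^{1/2}Br^{1/2}$ produces a positive main term bounded below by a constant multiple of $\|\chi(Y/R)^{1/2}f_+(B)^{1/2}\varphi(H)u\|^2$, modulo three error families: (i) the boundary commutator $[H,\chi(Y/R)]$ supported in $|x|\sim R$; (ii) the commutator $[H,f_+(B)]$ whose $B$-microsupport lies in $(-\infty,\epsilon)$ and is therefore tamed by the outgoing hypothesis via Lemma \ref{vectorpseudo}; and (iii) the long-range contribution, whose symbol decay is $\mathcal O(\langle x\rangle^{-\mu_0})$ with $\mu_0>1/2$ (cf.\ the modified potentials introduced in Section \ref{first.2sec}).

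Letting $R\to\infty$, the $|x|\sim R$ boundary term vanishes because $u\in L^{2,-l}$ with $l\le 1$, while families (ii) and (iii) are absorbed into the main term by Cauchy--Schwarz and the outgoing bound, leaving a strict positive inequality which forces $f_+(B)^{1/2}\varphi(H)u\in L^{2,s_0}(X)$ for some $s_0>0$. Combined with the outgoing control of $F_-(B)u$ and with the decay already established for $(1-\varphi(H))u$, this upgrades $u$ itself to $L^{2,s_0}(X)$. Iterating the same computation with the improved weight (each iteration gains a factor $\mu_0-1/2>0$ from the long-range excess) yields $u\in L^{2,s}$ for every $s>0$; in particular $u\in L^2$, so $u$ is an $L^2$-eigenfunction of $H$ at $\lambda$, and $\lambda\notin\sigma_{pp}(H)$ forces $u=0$.

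The main obstacle is making the commutator identity and its bootstrap rigorous when $u$ lives only in a weighted $L^2$-space and $A$, $B$ are unbounded. This requires verifying that each factor $\chi(Y/R)$, $f_+(B)$, $\varphi(H)$ belongs to the class $\mathcal Op^m(Y)$ so that Lemma \ref{Op} gives pseudodifferential-calculus control of nested commutators, and that every error term arising from $[H,\cdot]$ either lives in the $B<\epsilon$ microregion where the outgoing hypothesis applies (via Lemma \ref{vectorpseudo}, whose role is precisely to convert $B$-microlocalization into spatial microlocalization $p\in\mathcal R^k_-$) or carries the $\langle x\rangle^{-\mu_0}$ decay of the long-range part. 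The condition $\mu>1/2$ enters precisely to close the bootstrap, which is also why it is imposed throughout the paper.
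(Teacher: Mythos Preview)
The paper does not prove this lemma at all: it is quoted verbatim as Isozaki's uniqueness theorem \cite[Theorem 1.3]{Is2} and used as a black box. So there is no ``paper's own proof'' to compare against; what you have produced is an attempted reconstruction of the argument in \cite{Is2}.

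Your outline is in the right spirit --- the proof in \cite{Is2} is indeed a Mourre-type bootstrap that upgrades the a priori $L^{2,-l}$ decay to $L^2$ and then invokes $\lambda\notin\sigma_{pp}(H)$ --- but as written it is only a sketch, and a few points deserve care. First, the claim that $(1-\varphi(H))u$ has ``arbitrarily fast decay'' is really the statement that $(1-\varphi(H))u=0$: write $1-\varphi(t)=\psi_2(t)(t-\lambda)$ with $\psi_2$ bounded and apply Lemma~\ref{Op}; you should say this explicitly rather than invoke ``elementary spectral calculus,'' since $u\notin L^2$. Second, the commutator identity $0=\langle\varphi(H)u,i[H,\Phi_R]\varphi(H)u\rangle$ and the subsequent expansion require justifying each pairing (all factors unbounded, $u$ only in a negative weighted space); in \cite{Is2} this is handled by a careful double-regularization and the $\mathcal{Op}^m(Y)$ calculus, and you have gestured at this in your final paragraph but not carried it out. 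Third, the step ``families (ii) and (iii) are absorbed into the main term by Cauchy--Schwarz'' hides the real work: one needs to show that the $[H,f_+(B)]$ contribution is genuinely controlled by the outgoing hypothesis with a \emph{gain} in weight, and your claimed gain of $\mu_0-1/2$ per iteration is asserted rather than derived. Finally, note that Isozaki's original theorem does not require $\mu>1/2$; that restriction enters elsewhere in the present paper, so your last sentence mislocates its role.
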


The following Lemma is useful to confirm the outgoing and incoming properties.

\begin{lem}\label{outdirect}
Let $0\leq l_0<1/2<l\leq 1$ and $u_{\alpha}(x^a)$ be an eigenfunction of $H^a$ corresponding to a non-threshold channel $\alpha$. If $v\in L^{2,-l}(X_a)$ is outgoing (resp., incoming) and there exists $q\in C_0^{\infty}(\mathbb R)$ such that $(1-q(-\Delta_a))v\in L^{2,-l_0}(X_a)$. Then $(J_{\alpha}v)(x)$ is outgoing.
\end{lem}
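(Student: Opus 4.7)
The goal is to verify the definition of ``outgoing'' for $J_\alpha v$ directly: produce $k_0,\epsilon>0$ and $0\le l_0'<1/2$ such that $Op(p)(J_\alpha v)\in L^{2,-l_0'}(X)$ for every $p\in\mathcal R_-^{k_0}(\epsilon)$. Writing $J_\alpha v=u_\alpha(x^a)v(x_a)$ with $u_\alpha\in\mathcal S(X^a)$ (since $\alpha$ is non-threshold), the plan is to split any such symbol by a smooth cutoff in the ratio $|x^a|/\langle x\rangle$ and then handle one piece by the Schwartz decay of $u_\alpha$ and the other by the outgoing hypothesis on $v$.

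First take $\chi\in C^\infty([0,\infty))$ with $\chi=1$ on $[0,1/4]$ and $\chi=0$ on $[1/2,\infty)$, set $\tilde\chi(x):=\chi(|x^a|/\langle x\rangle)\in S^{0,0}$, and decompose $p=p_1+p_2$ with $p_1:=\tilde\chi\, p$ and $p_2:=(1-\tilde\chi)p$. On $\operatorname{supp} p_2$ we have $|x^a|\gtrsim\langle x\rangle$, so by the Schwartz decay of $u_\alpha$ the Weyl symbol of the composition $Op(p_2)\circ M_{u_\alpha}$ (where $M_{u_\alpha}$ is multiplication by $u_\alpha(x^a)$) belongs to $S^{-k_0,-N}$ for every $N$. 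Hence $Op(p_2)(u_\alpha v)=[Op(p_2)\circ M_{u_\alpha}]v$ maps $L^{2,-l}(X)\to L^{2,-l_0'}(X)$ for $l_0'$ as small as desired, and applying this to $u_\alpha v\in L^{2,-l}(X)$ handles this piece.

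For $p_1$, the support satisfies $|x_a|\ge\sqrt{15}\langle x\rangle/4$, and the condition $\hat x\cdot\xi<\epsilon$ reduces, after an auxiliary cutoff separating the region where $|\xi^a|$ is not small (again absorbed by Schwartz decay, this time of $\hat u_\alpha$), to $\hat x_a\cdot\xi_a<2\epsilon$. This lets us regard $p_1(\,\cdot\,,\,\cdot\,,\xi^a,\,\cdot\,)$ as a family of symbols $q_{x^a,\xi^a}$ in $\mathcal R^{k_0}_{-,a}(2\epsilon)$ on $X_a$, with seminorm bounds uniform in the parameters. Using the Fourier representation $u_\alpha(x^a)=(2\pi)^{-n^a/2}\int e^{ix^a\cdot\xi^a}\hat u_\alpha(\xi^a)\,d\xi^a$ together with Weyl-calculus manipulations yields
\[
Op(p_1)(u_\alpha v)(x)=(2\pi)^{-n^a/2}\int e^{ix^a\cdot\xi^a}\hat u_\alpha(\xi^a)\bigl[Op_a(q_{x^a,\xi^a})v\bigr](x_a)\,d\xi^a+R,
\]
where the remainder $R\in L^{2,-l_0'}(X)$ arises from a commutator-error analysis. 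The outgoing hypothesis on $v$ gives a uniform-in-parameters bound $\|Op_a(q_{x^a,\xi^a})v\|_{L^{2,-l_0}(X_a)}\le C\langle\xi^a\rangle^M$; combining with the rapid decay of $\hat u_\alpha$ and Minkowski's inequality in $\xi^a$ produces the desired bound on $Op(p_1)(u_\alpha v)$.

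The main obstacle is the rigorous construction and uniform-in-parameters control of the lifted symbols $q_{x^a,\xi^a}$, together with the commutator-error analysis in the Weyl calculus. The key geometric input is that on $\operatorname{supp} p_1$ the discrepancy between $\hat x\cdot\xi$ and $\hat x_a\cdot\xi_a$ is controlled by $|x^a|/\langle x\rangle$ (made small by the choice of $\chi$) and by $|\xi^a|/|\xi|$ (controlled by the auxiliary frequency cutoff plus Schwartz decay of $\hat u_\alpha$); preserving the conic support condition $\hat x_a\cdot\xi_a<2\epsilon$ is what allows the outgoing property of $v$ to be invoked. The incoming case follows by the analogous argument using $\mathcal R^{k_0}_+$ symbols.
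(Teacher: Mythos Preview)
Your geometric picture---localize in $|x^a|/\langle x\rangle$, absorb the ``far from $X_a$'' piece via Schwartz decay of $u_\alpha$, and on the remainder reduce $\hat x\cdot\xi<\epsilon$ to $\hat x_a\cdot\xi_a<2\epsilon$---matches the paper's, but your direct symbolic implementation leaves a real gap. The step ``cut off where $|\xi^a|$ is not small and absorb by Schwartz decay of $\hat u_\alpha$'' does not work as stated: rapid decay of $\hat u_\alpha(\xi^a)$ only makes your $\xi^a$-integral converge; it does not manufacture the spatial weight gain from $L^{2,-l}(X)$ to $L^{2,-l_0'}(X)$. On that piece the lifted symbols $q_{x^a,\xi^a}$ need not satisfy $\hat x_a\cdot\xi_a<2\epsilon$, so the outgoing hypothesis on $v$ is unavailable and nothing else supplies decay in $x$. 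Relatedly, you never invoke the hypothesis $(1-q(-\Delta_a))v\in L^{2,-l_0}$, which is a warning sign, and your representation formula itself is exact only when $p_1$ is independent of $x^a$---which it is not, since the support condition involves $\hat x$.

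The paper closes this gap by a different route: it passes to the equivalent characterization of ``outgoing'' via the operator $B$ built from Graf's vector field, and uses functional calculus to force compact $\xi$-support \emph{before} any reduction to $X_a$. Concretely, Lemma~\ref{AB} gives $u_\alpha=(1-K_C)^{-1}f_0(-\Delta^a/C)u_\alpha$, bounding $|\xi^a|$ from the start; $(1-K_C)^{-1}\in\mathcal{Op}^0(Y)$ is commuted through $F_-(B)$ via Lemma~\ref{Op}; and the hypothesis on $q(-\Delta_a)v$ bounds $|\xi_a|$. With $\xi$ now confined to a compact set, a partition in $\hat x\cdot\xi$ disposes of the outgoing-cone piece by Lemma~\ref{vectorpseudo}, and only the incoming-cone piece requires the $|x^a|/|x|$ localization and the final factorization through an $X_a$-symbol $\tilde p_-\in\mathcal R_{-,a}^{k_0}$. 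The machinery of Lemmas~\ref{Op}--\ref{vectorpseudo} is precisely what your direct approach would need to replace.
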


\begin{proof}
We prove only the outgoing case. The incoming case is proved in the same way. 

Let $k_0>0$, $\epsilon>0$, $0\leq l_1<1/2$ and $Op(p)v\in L^{2,-l_1}(X_a)$ for any $p\in \mathcal R_{+,a}^{k_0}(\epsilon)$. Let $\epsilon'>0$ a number such that $\epsilon'<\epsilon$, and $\sigma>0$ a number such that $\epsilon'+4\sigma<\epsilon$. We shall prove that $F_-(B)(J_{\alpha}v)\in L^{2,-l_2}(X)$ for any $F_-\in \mathcal F_-(\epsilon')$, where $l_2:=\max\{l_0,l_1\}$. Let $\varphi\in C_0^{\infty}(\mathbb R)$ be  a function such that $\varphi(t)=1$ near $E_{\alpha}$, where $E_{\alpha}$ is the eigenvalue corresponding to the channel $\alpha$. Then we have $\varphi(H^a)u_{\alpha}=u_{\alpha}$. Let $\psi\in C_0^{\infty}(\mathbb R)$ be a function such that $\psi\varphi=\varphi$. Let $f\in C^{\infty}(\mathbb R)$ satisfy $f(t)=1$ for $t>2$ and $f(t)=0$ for $t<1$.

By Lemma \ref{AB}, for $C>0$ large enough $K_{C}:=f(-\Delta^a/C)\psi(H^a)$ on $L^2(X^a)$ satisfies $\lVert K_C\rVert<1/2$. Setting $f_0:=1-f$ we have $\varphi(H^a)=(K_C+f_0(-\Delta^a/C))\varphi(H^a)$, and therefore, $\varphi(H^a)=(1-K_C)^{-1}f_0(-\Delta^a/C)\varphi(H^a)$. Thus we obtain
$$u_{\alpha}=(1-K_C)^{-1}f_0(-\Delta^a/C)u_{\alpha}.$$

We denote by $w_j$ functions such that $w_j\in L^{2,-l_2}(X)$.

Since by Lemma \ref{Op} we have $K_C\in \mathcal Op^0(Y)$, we have $(1-K_C)^{-1}\in \mathcal Op^0(Y)$. Therefore, there exist $F_{j,-}'\in \mathcal F_-(\epsilon'),\ j=1,2$ such that
\begin{equation}\label{myeq fifthsec.1.2}
F_-(B)(1-K_R)^{-1}f_0(-\Delta^a/R)(J_{\alpha}v)=\sum_{j=1}^2K'_jF_{j,-}'(B)f_0(-\Delta^a/R)(J_{\alpha}v)+w_1.
\end{equation}

Let us prove that the first term in the right-hand side of \eqref{myeq fifthsec.1.2} belongs to $L^{2,-l_2}(X)$. By the assumption we can see that
$$u_{\alpha}v=q(-\Delta_a)(J_{\alpha}v)+w_2,$$
where $q\in C_0^{\infty}(\mathbb R)$ is as in the assumption.

We can easily see that there exist $p_+\in \mathcal R^{k_0}_+(\epsilon'+\sigma)$ and $p_-\in \mathcal R^{k_0}_-(\epsilon'+2\sigma)$ such that the following holds: there exits $C'>0$ such that $\lvert \xi\rvert<C'$ on $\mathrm{supp}\, p_{\pm}$, and
$$f_0(-\Delta^a/C)q(-\Delta_a)(J_{\alpha}v)=Op(p_+)(J_{\alpha}v)+Op(p_-)(J_{\alpha}v)+w_3.$$

By Lemma \ref{vectorpseudo} we have $F_{j,-}'(B)Op(p_+)(J_{\alpha}v)\in  L^{2,-l_2}(X)$. Thus, we only need to prove $Op(p_-)(J_{\alpha}v)\in  L^{2,-l_2}(X)$, in order to prove that the right-hand side of \eqref{myeq fifthsec.1.2} belongs to $L^{2,-l_2}(X)$. 

Let $\chi\in C^{\infty}(X)$ be a homogeneous function of degree $0$ for $\lvert x\rvert >1$ such that the following holds: there exists $\delta>0$ such that $\chi(x)=1$ for $x\in \{x: \lvert x^a\rvert>2\delta\lvert x\rvert,\ \lvert x\rvert>1\}$ and $\chi(x)=0$ for $x\in \{x: \lvert x^a\rvert<\delta\lvert x\rvert, \ \lvert x\rvert>1\}$. Then it is easy to see that
$$Op(p_-)(J_{\alpha}v)=Op(p_-(x,\xi)\chi(x))(J_{\alpha}v)+w_4.$$

Choosing $\delta$ sufficiently small, we can assume $(x^a\cdot\xi^a)/\lvert x\rvert<\sigma$ on
$$\mathrm{supp}\,  (p(x,\xi)\chi(x)).$$
Thus, there exists $\tilde p_-(x_a,\xi_a)\in \mathcal R_-^{k_0}(\epsilon'+3\sigma)$ such that
$$Op(p_-(x,\xi)\chi(x))(J_{\alpha}v)=Op(p_-(x,\xi)\chi(x))Op(\tilde p_-)(J_{\alpha}v)+w_5.$$

By the assumption that $Op(p)v\in L^{2,-l_2}(X_a)$ for any $p\in \mathcal R_{+,a}^{k_0}(\epsilon)$, we have $Op(\tilde p_-)v\in L^{2,-l_2}(X_a)$, and therefore, $Op(\tilde p_-)(J_{\alpha}v)\in L^{2,-l_2}(X)$. Thus, we obtain $Op(p_-)(J_{\alpha}v)\in  L^{2,-l_2}(X)$ which completes the proof.
\end{proof}

\section{Scattering matrices and generalized Fourier transforms}\label{fifth.1sec}
In the following we use the notations in section \ref{firstsec}, \ref{first.2sec} and \ref{secondsec}.
In this section we define the scattering matrices and the generalized Fourier transforms.

\begin{lem}\label{B^*}
Let $\alpha$ be a non-threshold channel and $u\in \mathcal B^*(X)$. Then, there exists a constant $C$ such that
$$\lVert \pi_{\alpha}u\rVert_{\mathcal B^*(X_a)}\leq C\lVert u\rVert_{\mathcal B^*(X)}.$$

\end{lem}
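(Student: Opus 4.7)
The plan is to use the equivalent characterization
$$\lVert v\rVert_{\mathcal B^*}^2 \sim \sup_{\rho>1}\rho^{-1}\int_{|y|<\rho}|v(y)|^2\,dy,$$
stated in section \ref{first.2sec}, and to exploit the fact that for a non-threshold channel $\alpha$ the eigenfunction $u_\alpha$ is Schwartz (as noted in section \ref{firstsec}, citing Froese--Herbst). So I would fix $\rho>1$ and estimate
$\int_{|x_a|<\rho}|(\pi_\alpha u)(x_a)|^2\,dx_a$ in terms of $\sup_{R>1}R^{-1}\int_{|x|<R}|u(x)|^2\,dx$.

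First I would apply Cauchy--Schwarz pointwise in $x_a$, using $|u_\alpha(x^a)|^{1/2}\cdot|u_\alpha(x^a)|^{1/2}$ as the factorization and using the integrability of $u_\alpha$:
$$|(\pi_\alpha u)(x_a)|^2\le\Bigl(\int_{X^a}|u_\alpha(x^a)|\,dx^a\Bigr)\int_{X^a}|u(x^a,x_a)|^2|u_\alpha(x^a)|\,dx^a\le C\int_{X^a}|u(x^a,x_a)|^2|u_\alpha(x^a)|\,dx^a.$$
Next I would decompose $X^a$ into dyadic shells $A_0=\{|x^a|<1\}$, $A_k=\{2^{k-1}<|x^a|<2^k\}$ ($k\ge1$), and use the Schwartz decay $|u_\alpha(x^a)|\le C_N\langle x^a\rangle^{-N}$ (with $N$ chosen large, say $N\ge 2$) to bound $|u_\alpha|$ by $C_N 2^{-Nk}$ on $A_k$. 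After Fubini this gives
$$\int_{|x_a|<\rho}|(\pi_\alpha u)(x_a)|^2\,dx_a\le C_N\sum_{k=0}^\infty 2^{-Nk}\int_{\{|x_a|<\rho\}\times A_k}|u(x^a,x_a)|^2\,dx^a\,dx_a.$$

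Since $|x|^2=|x_a|^2+|x^a|^2$, on the set $\{|x_a|<\rho\}\times A_k$ we have $|x|<\sqrt 2\max(\rho,2^k)=:R_k$, and the equivalent $\mathcal B^*$-norm furnishes
$\int_{|x|<R_k}|u|^2\le C R_k\lVert u\rVert_{\mathcal B^*(X)}^2$. Substituting,
$$\int_{|x_a|<\rho}|(\pi_\alpha u)(x_a)|^2\,dx_a\le C\lVert u\rVert_{\mathcal B^*(X)}^2\sum_{k\ge 0}2^{-Nk}\max(\rho,2^k).$$
Splitting the sum into $2^k\le\rho$ (contributing $\le\rho\sum_k 2^{-Nk}\le C\rho$) and $2^k>\rho$ (contributing $\sum_{k:2^k>\rho}2^{-(N-1)k}\le C$ for $N\ge 2$ and $\rho>1$), the right-hand side is $\le C\rho\lVert u\rVert_{\mathcal B^*(X)}^2$.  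Dividing by $\rho$, taking $\sup_{\rho>1}$, and invoking the equivalent-norm inequality in the reverse direction yields the desired estimate.

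The only subtle point is ensuring that the tail shells $2^k>\rho$ are summable, which is why the Schwartz decay of $u_\alpha$ (not mere $L^2$ integrability) is used; this is precisely where the non-threshold assumption on $\alpha$ enters. Everything else is Fubini, Cauchy--Schwarz, and the equivalent-norm inequality, so this should be a straightforward and short proof.
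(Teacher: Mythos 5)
Your proof is correct and follows essentially the same route as the paper's: Cauchy--Schwarz in $x^a$ exploiting the decay of $u_\alpha$, a dyadic decomposition in $x^a$, the inclusion of $\{\lvert x_a\rvert<\rho\}\times\{\lvert x^a\rvert\sim 2^k\}$ in a ball of $X$ of comparable radius, and the $\sup_{\rho>1}\rho^{-1}\int_{\lvert x\rvert<\rho}$ characterization of the $\mathcal B^*$-norm. The only cosmetic difference is that the paper inserts the weight $\langle x^a\rangle^{-2}$ and takes a supremum over the shells, whereas you use the full Schwartz decay to sum the shells directly; both hinge on the Froese--Herbst decay of $u_\alpha$ for non-threshold channels.
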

\begin{proof}
There exist $C_1,C_2,C_3>0$ such that
\begin{align*}
\lVert \pi_{\alpha}u\rVert_{\mathcal B^*(X_a)}^2&\leq C_1 \sup_{\rho>1}\rho^{-1}\int_{\lvert x_a\rvert<\rho}\int \langle x^a\rangle^{-2}\lvert u(x)\rvert^2dx^adx_a\\
&\leq C_2 \sup_{\rho>1}\rho^{-1}\sup_{j\geq 0} \rho_j^{-1}\int_{\lvert x_a\rvert<\rho}\int_{x^a\in\Omega_j} \lvert u(x)\rvert^2dx^adx_a\\
&\leq C_2\sup_{\rho>1,j\geq 0}\rho^{-1}\rho_j^{-1}\int_{\substack{\lvert x_a\rvert<\rho\\
x^a\in \Omega_j}}\lvert u(x)\rvert^2dx\\
&\leq C_2\sup_{\rho>1,j\geq 0}(\rho+\rho_j)^{-1}\int_{\lvert x\rvert <2(\rho+\rho_j)}\lvert u(x)\rvert^2dx\leq C_3\lVert u\rVert_{\mathcal B^*(X)}^2.
\end{align*}
\end{proof}

Let $\lambda>0$, $u\in \mathcal B^*(X)$, $h\in C_0^{\infty}(C_a')$ and $\alpha$ be a non-threshold channel. Then, by Lemma \ref{B^*} there exist $C,C'>0$ such that
\begin{equation}\label{myeq5.8.0.1}
\begin{split}
&\left \lvert\lim_{\rho\to\infty}\rho^{-1}\int_{\lvert x_a\rvert<\rho}e^{\mp iK_a(x_a,\lambda_{\alpha})}r_a^{-(n_a-1)/2}h(\hat x_a)(\pi_{\alpha}u)(x_a)dx_a\right\rvert\\
&\qquad\leq C\lVert \pi_{\alpha}u\rVert_{\mathcal B^*(X_a)}\lVert h\rVert_{L^2(C_a')}\\
&\qquad\leq C'\lVert u\rVert_{\mathcal B^*(X)}\lVert h\rVert_{L^2(C_a')}.
\end{split}
\end{equation}
By Proposition \ref{asymptotic} and \eqref{myeq5.8.0.1} we can define a distribution $Q^{\pm}_{\alpha}(u)$ on $C_a'$ by
\begin{align*}
(&Q^{\pm}_{\alpha}(u))(h)\\
&=\lim_{\rho\to\infty}\rho^{-1}\int_{\lvert x_a\rvert<\rho}e^{\mp iK_a(x_a,\lambda_{\alpha})}r_a^{-(n_a-1)/2}h(\hat x_a) (\pi_{\alpha}u)(x_a)dx_a,
\end{align*}
for any $h\in C_0^{\infty}(C_a')$ and by Riesz theorem we can see that $Q_{\alpha}^{\pm}(u)\in L^2(C_a')$. Since the Lebesgue measure of $C_{a,\mathrm{sing}}$ is $0$, we can extend $Q_{\alpha}^{\pm}(u)$ to $C_a$ so that $Q_{\alpha}^{\pm}(u)\in L^2(C_a)$.

Set
$$(\hat v_{\alpha}^{\pm}(\lambda))(x_a)=\hat v_{\alpha,\pm}(\lambda,x_a):=\eta(r_a)\bar h(\hat x_a)r_a^{-(n_a-1)/2}e^{\pm iK_a(x_a,\lambda_{\alpha})},$$
with $\eta$ as in \eqref{myeq2.3} for some $\kappa>0$.
Applying Proposition \ref{asymptotic} we have
\begin{equation}\label{myeq5.8.1}
\begin{split}
\lim_{\rho\to\infty}\rho^{-1}&\int_{\lvert x_a\rvert<\rho}e^{\mp iK_a(x_a,\lambda_{\alpha})}r_a^{-(n_a-1)/2}h(\hat x_a)(\pi_{\alpha}u)(x_a)dx_a\\
&=2^{-1}i\lambda_{\alpha}^{-1/2}\{((\tilde H_a-\lambda_{\alpha})(\pi_{\alpha}u),v_{\alpha,\pm})_a\\
&\quad-(\pi_{\alpha}u,(\tilde H_a-\lambda_{\alpha})v_{\alpha,\pm})_a\},
\end{split}
\end{equation}
where $\tilde H_a:=-\Delta_a+\tilde I_a$ and $(v_1,v_2)_a:=\int_{X_a}v_1(x_a)\bar v_2(x_a)dx_a$.

Let $u\in \mathcal B^*(X)$ be a generalized eigenfunction of $H$ with an eigenvalue $\lambda$. Then by \eqref{myeq5.8.1} we have
\begin{equation}\label{myeq5.9}
\begin{split}
\lim_{\rho\to\infty}\rho^{-1}&\int_{\lvert x_a\rvert<\rho}\int_{X^a}e^{\mp iK_a(x_a,\lambda_{\alpha})}r_a^{-(n_a-1)/2}h(\hat x_a)(\pi_{\alpha}u)(x_a)dx_a\\
&=2^{-1}i\lambda_{\alpha}^{-1/2}\{((\tilde H_a-\lambda_{\alpha})(\pi_{\alpha}u),v_{\alpha,\pm})_a-(\pi_{\alpha}u,(\tilde H_a-\lambda_{\alpha})v_{\alpha,\pm})_a\}\\
&=2^{-1}i\lambda_{\alpha}^{-1/2}\{(\pi_{\alpha}((\tilde I_a-I_a)u),v_{\alpha,\pm})_a-(\pi_{\alpha}u,(\tilde H_a-\lambda_{\alpha})v_{\alpha,\pm})_a\}\\
&=-2^{-1}i\lambda_{\alpha}^{-1/2}(u,(H-\lambda)J_{\alpha}v_{\alpha,\pm}),
\end{split}
\end{equation}
where $(v_1,v_2):=\int_{X}v_1(x)\bar v_2(x)dx$, and we used that $(H-\lambda)u=0$ in the second equality.

Let $\alpha$ and $\beta$ be non-threshold channels. Now we define the scattering matrix as the map
$$\Sigma_{\beta\alpha}(\lambda):C_0^{\infty}(C_a')\to L^2(C_b),$$
given by
$$\Sigma_{\beta\alpha}(\lambda)g:=Q^+_{\beta}(P_{\alpha,+}(\lambda)g).$$

Next we shall define the generalized Fourier transforms.  Let $f\in L^{2,l}(X)$ for some $l>1/2$. In the similar way as \eqref{myeq5.9} we obtain
\begin{equation}\label{myeq5.10}
\begin{split}
\lim_{\rho\to\infty}\rho^{-1}&\int_{\lvert x_a\rvert<\rho}e^{\mp iK_a(x_a,\lambda_{\alpha})}r_a^{-(n_a-1)/2}h(\hat x_a)(\pi_{\alpha}R(\lambda\pm i0)f)(x_a)dx_a\\
&=2^{-1}i\lambda_{\alpha}^{-1/2}\{(f,J_{\alpha}v_{\alpha,\pm})-(R(\lambda\pm i0)f,(H-\lambda)J_{\alpha}v_{\alpha,\pm})\},
\end{split}
\end{equation}

For a non-threshold channel $\alpha$ we define the generalized Fourier transform as the map
$$\mathcal G_{\alpha}^{\pm}(\lambda): L^{2,l}(X)\to L^2(C_a),$$
given by
$$\mathcal G_{\alpha}^{+}(\lambda)f=D_{\alpha}^+(\lambda)Q_{\alpha}^{\pm}(R(\lambda\pm i0)f),$$
and 
$$\mathcal G_{\alpha}^{+}(\lambda)f=D_{\alpha}^-(\lambda)\mathcal RQ_{\alpha}^{\pm}(R(\lambda\pm i0)f),$$
where $(\mathcal Rg)(\hat x_a):=g(-\hat x_a)$ and
$$D_{\alpha}^{\pm}(\lambda):=\pi^{-1/2}\lambda_{\alpha}^{1/4}e^{\pm i\pi(n_a-3)/4}.$$

\section{Asymptotic behaviors of functions in the range of the resolvent and Poisson operators}\label{fifth.2sec}
In the following we use the notations in section \ref{firstsec}, \ref{first.2sec}, \ref{secondsec}, \ref{fourthsec}, \ref{fifthsec.1} and \ref{fifth.1sec}.
In this section we study the $Q_{\beta}^{\pm}(u)$, where $u=P_{\alpha,\pm}g$ for $g\in C_0^{\infty}(C_a')$ or $u=R(\lambda+i0)f$ for $f\in L^{2,l}(X)$, $l>1/2$.

Set for $\epsilon>0$, $Y_{a,\epsilon}:=Y_a\cap\{x \in X: \lvert x^a\rvert<\epsilon\lvert x\rvert\}$, where $Y_a$ is defined as in \eqref {myeqfirst.2.0.1.0}. Let $J_{a,\epsilon}\in C^{\infty}(X)$ be homogeneous of degree zero outside $\{x\in X: \lvert x\rvert=1\}$ and $\mathrm{supp}\, J_{a,\epsilon}\in Y_{a,\epsilon}$. Set also $d(\lambda):=\inf\{\lambda-t: t\in\mathcal T(H),\ t<\lambda\}$.
We need the following lemma.
\begin{lem}[{\cite[Theorem 3.5]{GIS}}]\label{conicmicro}
Let $\lambda\in (\inf\sigma(H^a),\infty)\setminus\mathcal T(H)$, where $\sigma(A)$ is the spectra of $A$. Then, for any $\tau<\sqrt{d(\lambda)}$ (resp., $\tau>-\sqrt{d(\lambda)}$) there exist $\epsilon>0$, $C>0$ and a neighborhood $N_{\lambda}$ of $\lambda$ such that the following holds: for any $m\in \mathbb N$, $s'>s>m-1/2$, $p_-\in \mathcal R_{-,a}(\tau)$ (resp., $p_+\in \mathcal R_{+,a}(\tau)$) and $J_{a,\epsilon}$ we have
$$\lVert Y^{s-m}Op(p_-)J_{a,\epsilon}R(z)^mY^{-s'}\rVert_{\mathcal L(\mathcal H)}\leq C,$$
$$(resp., \lVert Y^{s-m}Op(p_+)J_{a,\epsilon}R(z)^mY^{-s'}\rVert_{\mathcal L(\mathcal H)}\leq C,)$$
uniformly in $\mathrm{Re}\, z\in N_{\lambda}$ and $\mathrm{Im}\, z>0$ (resp., $\mathrm{Im}\, z<0$).
\end{lem}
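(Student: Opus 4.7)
The plan is to combine Mourre theory for the $N$-body operator $H$ with a positive commutator (propagation) argument adapted to the conic region $Y_{a,\epsilon}$ and to the microlocal support of $p_\pm$. I will treat the outgoing case $p_-\in\mathcal R_{-,a}(\tau)$ with $\tau<\sqrt{d(\lambda)}$ and $\operatorname{Im}z>0$; the incoming case is symmetric.

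First, the base case $m=1$. The key geometric fact is that on $\operatorname{supp}J_{a,\epsilon}\subset Y_{a,\epsilon}$ we have $|x^b|\gtrsim |x|$ for every $b\nleq a$, so by Assumption the intercluster potentials $I_a$ decay like $|x|^{-\mu}$ and $H$ coincides with $-\Delta+H^a$ modulo a term that is $\mathcal O(|x|^{-\mu})$. Since $\lambda\notin\mathcal T(H)$, one has a Mourre estimate on a neighborhood $N_\lambda$ of $\lambda$ with the Graf conjugate operator $A$ (equivalently $B$ of Definition \ref{Grafvector}) and Mourre constant essentially $2d(\lambda)$. The microlocal refinement comes from observing that on $\operatorname{supp}p_-$ we have $\hat x\cdot\xi>\tau$, and since the classical principal symbol of $i[H,A]$ is $2|\xi|^2-2V_{\mathrm{eff}}$, on the joint support of $J_{a,\epsilon}$ and $p_-$ we have $\hat x\cdot\xi$ bounded below by $\tau$ with $\tau^2<d(\lambda)$; this gives a strictly positive commutator after multiplication by an appropriate energy cutoff. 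A standard differential inequality argument for the weighted quantity $\langle \varphi(H)J_{a,\epsilon}Op(p_-)f,Y^{2\sigma}\varphi(H)J_{a,\epsilon}Op(p_-)f\rangle$ then yields the bound $\|Y^{s}Op(p_-)J_{a,\epsilon}R(z)Y^{-s'}\|\leq C$ for $s'>s>1/2$, uniformly for $\operatorname{Re}z\in N_\lambda$ and $\operatorname{Im}z>0$.

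Second, to iterate to $m\geq 2$, I would use induction together with resolvent expansions. Write
\[
Op(p_-)J_{a,\epsilon}R(z)^m=Op(p_-)J_{a,\epsilon}R(z)\cdot R(z)^{m-1},
\]
and commute $Op(p_-)J_{a,\epsilon}$ through the first resolvent using $[Op(p_-)J_{a,\epsilon},H]$, whose symbol lies in $\mathcal R_{-,a}^{k_0-1}$ and is of order $-1$ in $\langle x\rangle$, producing a factor $Y^{-1}$. Applying the $m=1$ estimate to the leading term and the inductive hypothesis (with weight $Y^{s-m+1}$) to the error gives the desired weighted bound with $Y^{s-m}$ on the left.

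The principal obstacle is constructing a propagation observable that simultaneously respects the $N$-body cluster structure (which forces the $J_{a,\epsilon}$ cutoff to avoid collision directions $X_b$, $b\nleq a$) and is monotonic along the Hamiltonian flow generated by the principal symbol of $H$ in the microlocal region $\{\hat x\cdot\xi<-\tau\}\cap\{|x^a|<\epsilon|x|\}$. The Graf vector field is precisely designed so that $B$ serves as such a conjugate operator globally, and the sign conditions in $\mathcal R_{-,a}(\tau)$ match those needed to apply Lemma \ref{vectorpseudo}; controlling boundary terms uniformly as $\operatorname{Im}z\downarrow 0$ and ensuring the commutator errors genuinely decay by the claimed power of $Y$ is the technically delicate part, and is what forces the specific constraints $\tau<\sqrt{d(\lambda)}$ and $s'>s>m-1/2$.
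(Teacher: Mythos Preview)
The paper gives no proof of this lemma: it is simply quoted as \cite[Theorem 3.5]{GIS}. There is therefore nothing in the present paper to compare your proposal against.

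As to the proposal itself, the overall architecture---Mourre estimate with the Graf conjugate operator, a positive-commutator propagation argument adapted to the cone $Y_{a,\epsilon}$, and induction on $m$---is indeed the strategy of \cite{GIS}. Two points deserve correction, however. First, the support condition for $p_-\in\mathcal R_{-,a}^k(\tau)$ is $\hat x_a\cdot\xi_a<\tau$ (from $\inf(-\hat x_a\cdot\xi_a)>-\tau$), not $\hat x\cdot\xi>\tau$; and note that the symbol lives on $X_a\times X_a$, so the relevant pairing is $\hat x_a\cdot\xi_a$, not the full-space $\hat x\cdot\xi$. This sign matters: the outgoing estimate says that the resolvent $R(\lambda+i0)$ is smoothing in the \emph{incoming} region $\hat x_a\cdot\xi_a<\tau$ with $\tau<\sqrt{d(\lambda)}$. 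Second, your inductive step via the single commutator $[Op(p_-)J_{a,\epsilon},H]$ is too optimistic: that commutator does not by itself yield a factor of $Y^{-1}$ together with a symbol still in $\mathcal R_{-,a}$ localized strictly inside the region where the $m-1$ estimate applies. The actual argument in \cite{GIS} runs the multiple-commutator machinery through the $\mathcal{Op}^m(Y)$ calculus and the functional calculus of $B$ (the operator of Definition~\ref{Grafvector}), which is what makes the induction close; a bare symbolic commutator expansion does not suffice in the $N$-body setting because the potential is not a classical symbol on all of $X$.
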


\begin{pro}\label{resolventasympt}
Let $f\in L^{2,l}(X)$, $l>1/2$ and $\lambda\in\sigma_{ess}(H)\setminus(\sigma_{pp}(H)\cap\mathcal T(H))$. Then, for any non-threshold channel $\alpha$ such that $\lambda>E_{\alpha}$, we have $Q_{\alpha}^-(R(\lambda+i0)f)=0$ and $Q_{\alpha}^+(R(\lambda-i0)f)=0$.
\end{pro}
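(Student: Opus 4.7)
The plan is to derive an analog of (5.10) for $Q_\alpha^-(R(\lambda+i0)f)(h)$---the ``crossed'' sign combination not directly covered by (5.10)---and then show that the two resulting terms cancel, via the adjoint $R(\lambda+i0)^*=R(\lambda-i0)$ combined with Isozaki's uniqueness theorem (Lemma \ref{Iunique}). Set $v(x_a):=\eta(r_a)\bar h(\hat x_a)r_a^{-(n_a-1)/2}e^{-iK_a(x_a,\lambda_\alpha)}$; the phase $e^{-iK_a}$ matches the one paired by $Q_\alpha^-$. Applying Proposition \ref{asymptotic} (in its obvious $X_a$-version, with $\tilde H_a$ and $\lambda_\alpha$ in place of $\tilde H$ and $\lambda$) to $\pi_\alpha R(\lambda+i0)f\in\mathcal B^*(X_a)$ with test function $v$, and using the commutation identity $(H-\lambda)J_\alpha v=J_\alpha(\tilde H_a-\lambda_\alpha)v+(I_a-\tilde I_a)J_\alpha v$ exactly as in the derivation of (5.9)--(5.10)---so that the $(I_a-\tilde I_a)$-contributions cancel---yields
$$Q_\alpha^-(R(\lambda+i0)f)(h)=\frac{i}{2\sqrt{\lambda_\alpha}}\bigl[(f,J_\alpha v)-(R(\lambda+i0)f,(H-\lambda)J_\alpha v)\bigr].$$

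A direct check using $h\in C_0^\infty(C_a')$, the Schwartz decay of $u_\alpha$, and the eikonal identity for $K_a$ shows $(H-\lambda)J_\alpha v\in L^{2,l}(X)$ for some $l>1/2$, so by the adjoint relation the second inner product equals $(f,R(\lambda-i0)(H-\lambda)J_\alpha v)$. Set $w:=J_\alpha v-R(\lambda-i0)(H-\lambda)J_\alpha v\in\mathcal B^*(X)$; then $(H-\lambda)w=0$, and I claim $w$ is incoming. Indeed, the phase $-K_a\approx-\sqrt{\lambda_\alpha}r_a$ gives $v$ momentum approximately $-\sqrt{\lambda_\alpha}\hat x_a$, and a stationary-phase / integration-by-parts argument essentially identical to the proof of Lemma \ref{micro2} (with signs swapped) shows $Op(p_+)v$ decays rapidly for every $p_+\in\mathcal R_{+,a}^{k_0}(-\epsilon)$; so $v$ is incoming in $\mathcal B^*(X_a)$, and by Lemma \ref{outdirect} combined with $u_\alpha$ being Schwartz, $J_\alpha v$ is incoming in $\mathcal B^*(X)$. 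On the other hand, $R(\lambda-i0)g$ is incoming for $g\in L^{2,l}(X)$, $l>1/2$, by standard microlocal resolvent estimates (cf.\ Lemma \ref{conicmicro}). So $w$ is incoming, and Lemma \ref{Iunique} forces $w=0$, i.e.\ $R(\lambda-i0)(H-\lambda)J_\alpha v=J_\alpha v$. Substituting gives $Q_\alpha^-(R(\lambda+i0)f)(h)=0$. The identity $Q_\alpha^+(R(\lambda-i0)f)=0$ follows by the symmetric argument with $v$ replaced by its outgoing counterpart $\eta(r_a)\bar h(\hat x_a)r_a^{-(n_a-1)/2}e^{+iK_a(x_a,\lambda_\alpha)}$, $R(\lambda\pm i0)$ swapped, and the outgoing versions of Lemma \ref{outdirect} and Lemma \ref{Iunique} invoked.

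The main obstacle will be confirming the incoming-ness of $v$ in $\mathcal B^*(X_a)$ in the precise sense required. The essential estimate is that on $\mathrm{supp}\,(p_+(y,\xi)\bar h(\hat y))$ with $p_+\in\mathcal R_{+,a}^{k_0}(-\epsilon)$, the lower bound $|\xi+\sqrt{\lambda_\alpha}\hat y|\gtrsim 1+|\xi|$ holds once $\epsilon$ is chosen small (because the condition $\hat y\cdot\xi>-\epsilon$ rules out $\xi$ close to $-\sqrt{\lambda_\alpha}\hat y$), placing us in the setting where the Lemma \ref{micro2} integration-by-parts succeeds. The lower-order correction from $Y_a$ in $K_a=\sqrt{\lambda_\alpha}r_a-Y_a$ contributes only $O(\langle y\rangle^{-\mu})$ to $\nabla_yK_a$ by Lemma \ref{Ydef}(2) and is absorbed harmlessly since $\mu>1/2$. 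Once this microlocal estimate is in place, the remainder of the argument is a direct application of tools already in the paper.
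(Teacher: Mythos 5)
Your proof is correct, but it takes a genuinely different route from the paper's. The paper argues directly on the resolvent: it splits $\pi_{\alpha}R(\lambda+i0)f$ microlocally into an outgoing piece $u_{+}=\check T_{\alpha}^{+}(\pi_{\alpha}u)$ and an incoming piece $u_{-}=\check T_{\alpha}^{-}(\pi_{\alpha}u)$ (plus negligible remainders), kills the contribution of $u_{+}$ to $Q_{\alpha}^{-}$ by the non-stationary-phase mechanism of Lemma \ref{asymptotic2}, and then shows that $u_{-}$ itself lies in $L^{2,-l}(X_a)$ with $l<1/2$ --- hence contributes nothing to any $Q$ --- by combining the conic microlocal resolvent estimate of Lemma \ref{conicmicro} in a small conic neighborhood of $X_a$ with the exponential decay of $u_{\alpha}$, which handles the complementary region. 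You instead push everything through the Green's-formula identity of Proposition \ref{asymptotic} and \eqref{myeq5.10} in the crossed-sign version and reduce the claim to the vanishing of the incoming generalized eigenfunction $J_{\alpha}v-R(\lambda-i0)(H-\lambda)J_{\alpha}v$ via Lemma \ref{outdirect} and Lemma \ref{Iunique}. Your derivation of the crossed identity is sound (the $(\tilde I_a-I_a)$ cancellation is insensitive to the correlation of signs between the phase and the resolvent), and the vanishing device is exactly the one the paper itself uses in the proof of Theorem \ref{sme} (the identities $F_1^{\pm}-R(\lambda\mp i0)(H-\lambda)F_1^{\pm}=0$), so no circularity is introduced. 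The trade-off: your argument is shorter and more structural, but it relies on the global incoming property of $R(\lambda-i0)$ on $L^{2,l}(X)$, $l>1/2$ --- i.e.\ the existence half of Isozaki's radiation-condition theorem in \cite{Is2}, which Lemma \ref{conicmicro} alone does not supply --- whereas the paper's proof needs only the localized conic estimates here and, as a by-product, yields the stronger conclusion that the incoming microlocal component of $\pi_{\alpha}R(\lambda+i0)f$ is $o_{av}(r_a^{-(n_a-1)/2})$, not merely that its pairing against $e^{iK_a}$ vanishes.
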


\begin{proof}
We shall prove only the case of $Q_{\alpha}^-(R(\lambda+i0)f)$. The case of $Q_{\alpha}^+(R(\lambda-i0)f)$ is similar.

Set $u:=R(\lambda+i0)f$. For some $l'$ and any $\epsilon>0$ we have $(\tilde H_a-\lambda_{\alpha})(\pi_{\alpha}u)\in L^{2,l'}(Z_a^{\epsilon})$, where $Z_a^{\epsilon}$ is defined as \eqref{myeqfirst.2.0.1.1}.
Let $\check \chi_{\pm,\lambda_{\alpha}}$ be functions such that 
$\check \chi_{-}(s)=1$ for  $s<\sqrt{d(\lambda)}/3$, $\check \chi_{-}(s)=0$ for $s>2\sqrt{d(\lambda)}/3$, and $\check \chi_{+}+\check \chi_{-}=1$, We define $\check t_{\alpha}^{\pm}$ by $\check t_{\alpha}^{\pm}:=\check \chi_{\pm}(\hat x_a\cdot\xi_a)\eta(r_a)  \psi_{\alpha}(\lvert \xi_a\rvert^2)$, where $\psi_{\alpha}=1$ near $\lambda_{\alpha}$ and $\eta$ is as in \eqref{myeq2.3} for some $\kappa>0$.

Define $u_{\pm}:=\check T_{\alpha}^{\pm}(\pi_{\alpha}u)$. Then, as in the proof of Proposition \ref{asymptotic} we obtain
\begin{align*}
\lim_{\rho\to\infty}\rho^{-1}\int_{\lvert x_a\rvert<\rho}&e^{ iK_a(x_a,\lambda_{\alpha})}r_a^{-(n_a-1)/2}h(\hat x_a)\\
&\cdot(u(x_a)-u_+(x_a)-u_-(x_a))dx_a=0,\ h\in C_0^{\infty}(C_a')
\end{align*}
and
$$\lim_{\rho\to\infty}\rho^{-1}\int_{\lvert x_a\rvert<\rho}e^{ iK_a(x_a,\lambda_{\alpha})}r_a^{-(n_a-1)/2}h(\hat x_a)u_+(x_a)dx_a=0,\ h\in C_0^{\infty}(C_a').$$

As for $u_-$, for any $\epsilon_1>0$ let $\chi_1,\chi_2\in C^{\infty}(X)$ be functions such that $\chi_1(x)=1$ on $\{x\in X: \lvert x\rvert>1,\ \lvert x^a\rvert<\epsilon_1\lvert x_a\rvert\}$, $\chi_1(x)=0$ on $\{x\in X: \lvert x\rvert>1,\ \lvert x^a\rvert>2\epsilon_1\lvert x_a\rvert\}$, and $\chi_1+\chi_2=1$. We also denote by $\chi_j$ the operator of multiplication  by $\chi_j(x)$.

Then, we have
$$u_-=\pi_{\alpha}(\chi_1\check T_{\alpha}^-u)+\pi_{\alpha}(\chi_2\check T_{\alpha}^-u).$$

By the exponential decay of $u_{\alpha}$ we can see that $\pi_{\alpha}(\chi_2\check T_{\alpha}^-)u\in L^{2,\tilde l}(X_a)$ for any $\tilde l>0$.

As for $\pi_{\alpha}(\chi_1\check T_{\alpha}^-u)$ by Lemma \ref{conicmicro} for sufficiently small $\epsilon_1$ we have
$$\pi_{\alpha}(\chi_1\check T_{\alpha}^-u)\in L^{2,-l}(X_a),$$
for some $0\leq l<1/2$. Therefore, we have $Q_{\alpha}^-(u)=0$
\end{proof}

We can see that for $g\in C_0^{\infty}(C_a')$ the incoming wave of $P_{\alpha,+}(\lambda)g$ consists only of the wave from the channel $\alpha$.
\begin{pro}
Let $\alpha$ and $\beta$ be non-threshold channels and $\alpha\neq\beta$. Then, $Q_{\alpha}^-(P_{\alpha,+}(\lambda)g)=g$ and $Q_{\beta}^-(P_{\alpha,+}(\lambda)g)=0$.
\end{pro}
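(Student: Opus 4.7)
The plan is to use the defining identity
$$P_{\alpha,+}(\lambda)g=J_\alpha v_\alpha^--R(\lambda+i0)(H-\lambda)J_\alpha v_\alpha^-$$
to split $Q_\gamma^-(P_{\alpha,+}(\lambda)g)$ and dispose of the resolvent piece via Proposition~\ref{resolventasympt}, then compute the remaining piece directly for $\gamma=\alpha$ and $\gamma=\beta$. Since $(H-\lambda)J_\alpha v_\alpha^-\in L^{2,l}(X)$ for some $l>1/2$, Proposition~\ref{resolventasympt} gives $Q_\gamma^-(R(\lambda+i0)(H-\lambda)J_\alpha v_\alpha^-)=0$ for every non-threshold channel $\gamma$ with $\lambda>E_\gamma$. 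Therefore $Q_\gamma^-(P_{\alpha,+}(\lambda)g)=Q_\gamma^-(J_\alpha v_\alpha^-)$ in both cases of interest.

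For $\gamma=\alpha$, the normalization $\|u_\alpha\|_{L^2(X^a)}=1$ yields $\pi_\alpha(J_\alpha v_\alpha^-)=v_\alpha^-$. The phase factors $e^{iK_a(x_a,\lambda_\alpha)}e^{-iK_a(x_a,\lambda_\alpha)}$ cancel in the definition of $Q_\alpha^-$, and polar coordinates $dx_a=r_a^{n_a-1}dr_a\,d\omega$ produce
\begin{equation*}
Q_\alpha^-(J_\alpha v_\alpha^-)(h)=\lim_{\rho\to\infty}\rho^{-1}\int_0^\rho\eta(r_a)\,dr_a\int_{C_a}g(\omega)h(\omega)\,d\omega=\int_{C_a}g(\omega)h(\omega)\,d\omega,
\end{equation*}
so $Q_\alpha^-(J_\alpha v_\alpha^-)=g$ in $L^2(C_a)$, proving the first identity. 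When $\beta\neq\alpha$ has the same cluster $b=a$, the eigenfunctions $u_\alpha,u_\beta$ of $H^a$ are orthogonal in $L^2(X^a)$, hence $\pi_\beta(J_\alpha v_\alpha^-)=\langle u_\alpha,u_\beta\rangle_{L^2(X^a)}v_\alpha^-=0$ and $Q_\beta^-(J_\alpha v_\alpha^-)=0$.

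The main obstacle is the remaining case $\beta\neq\alpha$ with $b\neq a$. Here I aim to show that $h(\hat x_b)\pi_\beta(J_\alpha v_\alpha^-)(x_b)$ decays faster than any polynomial as $|x_b|\to\infty$, so that the averaged integral in $Q_\beta^-$ vanishes. Writing $x=x_b+x^b$ and using the Schwartz decay of $u_\alpha(x^a)$ and $u_\beta(x^b)$, the effective integration region has bounded transverse coordinates, and I examine three lattice subcases. If $a<b$, then $X_b\subset X_a$ and the bounded-transverse regime forces $\hat x_a\to\hat x_b\in C_b$; since $b\nleq a$ gives $C_b\subset C_{a,\mathrm{sing}}$ and $\mathrm{supp}\,g\subset C_a'$, we have $\mathrm{dist}(\mathrm{supp}\,g,C_b)>0$ and hence $g(\hat x_a)=0$ once $|x_b|$ is large enough. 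The case $b<a$ is symmetric, with $a\nleq b$ giving $C_a\subset C_{b,\mathrm{sing}}$ and forcing $h(\hat x_b)=0$. If $a,b$ are incomparable, then $a\vee b$ strictly dominates both, $\mathrm{supp}\,h\subset C_b'$ is bounded away from $C_{a\vee b}\subset C_{b,\mathrm{sing}}$, and a direct calculation yields $|\Pi^a x_b|\geq c|x_b|$ on $\mathrm{supp}\,h$; the Schwartz decay of $u_\alpha$ then produces the required rapid decay.

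The hard part is this last step: one must set up compatible orthogonal decompositions of $X$, $X_a$, $X_b$ for each lattice position, convert the topological support gap between $\mathrm{supp}\,g$ (resp.~$\mathrm{supp}\,h$) and the relevant singular set $C_{a,\mathrm{sing}}$ (resp.~$C_{b,\mathrm{sing}}$) into a quantitative distance estimate, and combine this with the Schwartz bounds on $u_\alpha$ and $u_\beta$ to control both the effective support and its complement. Once this case analysis is carried out, $Q_\beta^-(J_\alpha v_\alpha^-)=0$ follows, completing the proof.
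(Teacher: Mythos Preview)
Your approach is correct and matches the paper's proof exactly: split via the defining formula for $P_{\alpha,+}(\lambda)$, kill the resolvent piece with Proposition~\ref{resolventasympt}, and compute $Q_\gamma^-(J_\alpha v_\alpha^-)$ directly. The paper's own proof is two sentences long and simply asserts that the remaining computation is easy, so you have in fact supplied considerably more detail than the paper does---the lattice case analysis for $b\neq a$ (using that $\mathrm{supp}\,g\subset C_a'$ and $\mathrm{supp}\,h\subset C_b'$ are bounded away from the relevant singular sets, combined with the Schwartz decay of $u_\alpha,u_\beta$) is exactly the kind of argument the paper is suppressing.
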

\begin{proof}
By Proposition \ref{resolventasympt}, the part  $R(\lambda\pm i0)(H-\lambda)(J_{\alpha}v_{\alpha}^{\mp})$ of $P_{\alpha,\pm}(\lambda)g$ does not contribute to $Q_{\alpha}^{\mp}(P_{\alpha,\pm}(\lambda)g)$.

Since the remaining part is $J_{\alpha}v_{\alpha}^{\mp}$, we can easily see that the proposition holds.
\end{proof}

\section{Equivalence of the scattering matrices}\label{sixthsec}
In the following we use the notations in section \ref{firstsec}, \ref{first.2sec}, \ref{secondsec} and \ref{fifth.1sec}.
In this section we prove Theorem \ref{sme}.

\begin{proof}[Proof of Theorem \ref{sme}]
Let $f_1\in C^{\infty}(X_a)$ and $f_2\in C^{\infty}(X_b)$ satisfy $\hat f_1\in C_0^{\infty}(X_a)$, $\hat f_2\in C_0^{\infty}(X_b)$, $\hat f_1(\lambda,\cdot)\in C_0^{\infty}(C_a')$ for any $\lambda$ such that $\mathrm{supp}\, \hat f_1(\lambda,\cdot)\neq \varnothing$, and $\hat f_2(\lambda,\cdot)\in C_0^{\infty}(C_b')$ for any $\lambda>0$ such that $\mathrm{supp}\, \hat f_2(\lambda,\cdot)\neq \varnothing$. Here $(\hat f_1(\lambda))(\omega)=\hat f_1(\lambda,\omega):=(F_{\alpha}f_1)(\lambda,\omega),\ \omega\in C_a$ and $(\hat f_2(\lambda))(\omega_0)=\hat f_2(\lambda,\omega_0):=(F_{\beta}f_2)(\lambda,\omega_0),\ \omega_0\in C_b$, where $F_{\alpha}$ is defined by \eqref{myeq1.2.5}.

We only need to prove
\begin{equation}\label{myeq6.0.0.1}
\begin{split}
&\int_{\max\{E_{\alpha},E_{\beta}\}}^{\infty}(\hat S_{\beta\alpha}(\lambda)\hat f_1(\lambda),\hat f_2(\lambda))_{\hat K_{b}}d\lambda\\
&\quad=\int_{\max\{E_{\alpha},E_{\beta}\}}^{\infty}e^{i\pi(n_a+n_b-2)/4}\lambda_{\alpha}^{1/4}\lambda_{\beta}^{-1/4}(\Sigma_{\beta\alpha}(\lambda)\mathcal R\hat f_1(\lambda),\hat f_2(\lambda))_{\hat K_{b}}d\lambda,
\end{split}
\end{equation}
where $K_b:=L^2(C_b)$ and $(v_1,v_2)_M$ is the inner product of $v_1$ and $v_2$ in a Hilbert space $M$.

(i) First, we consider the case such that $\alpha=\beta$.

Let $\varphi\in C_0^{\infty}(C_b')$ be a function satisfying $\varphi(\hat{\xi}_b)\hat f_2(\lambda,\hat{\xi}_a)=\hat f_2(\lambda,\hat{\xi}_a)$ for any $\lambda$ such that $\mathrm{supp}\, \hat f_2(\lambda,\cdot)\neq 0$, and $\psi\in C^{\infty}_0(\mathbb R)$ be a function satisfying $\psi(t)=1$ for any $t>0$ such that $\mathrm{supp}\, \hat f_2(\sqrt{t},\cdot)\neq \varnothing$. Let $\eta(t)\in C^{\infty}(\mathbb R)$ be a function as in \eqref{myeq2.3} for some $\kappa>0$, and $\theta_+(t)\in C^{\infty}(\mathbb R)$ be a function such that $\theta_+(t)=0$ for $t<1-2\epsilon$ and $\theta_+(t)=1$ for $t>1-\epsilon$, where $1>\epsilon>0$ is determined later.

Set $D_a:=-i\nabla_a$. Then, we have
\begin{equation}\label{myeq6.0.1}
\begin{split}
W_{\beta}^+f_2&=s-\lim_{t\to+\infty}e^{itH}\varphi(D_b)\psi(\lvert D_b\rvert^2)J_{\beta}e^{-i(S_{\beta,\pm}(D_b,t)+\lambda_{\beta}t)}f_2\\
&=s-\lim_{t\to+\infty}e^{itH}T^+_{b}J_{\beta}e^{-i(S_{\beta,\pm}(D_b,t)+\lambda_{\beta}t)}f_2\\
&=s-\lim_{t\to+\infty}e^{itH}T^+_{b}J_{\beta}e^{-ith_{\beta}}\tilde W_{\beta}^+f_2\\
&=T_{\beta}^+J_{\beta}\tilde W_{\beta}^+f_2+i\int_0^{\infty}e^{isH}G_{\beta}^+e^{-is h_{\beta}}\tilde W_{\beta}^+f_2ds\\
&=T_{\beta}^+J_{\beta}\tilde W_{\beta}^+f_2+i\int_0^{\infty}e^{isH}G_{\beta}^+\tilde W_{\beta}^+e^{-is \tilde h_{\beta}}f_2ds,
\end{split}
\end{equation}
where $h_{\beta}:=\tilde H_b+E_{\beta}$, $\tilde W_{\beta}^{\pm}:=s-\lim_{t\to\pm\infty} e^{it\tilde H_b}e^{-iS_{\beta,\pm}(D_b,t)}$, $\tilde h_{\beta}:=-\Delta_b+E_{\beta}$, $T_{b}^+:=Op(t_{b}^+)$, $t_{b}^+:=\varphi(\xi_b)\psi(\lvert\xi_b\rvert^2)\eta(\lvert x_b\rvert)\theta_+(\hat x_b\cdot \hat\xi_b)$, and $G_{\beta}^+:=HT_{b}^+J_{\beta}-T_{b}^+J_{\beta}h_{\beta}$. Here $S_{\beta,\pm}({\xi}_b,t)$ is the function in Lemma \ref{HamiJac} obtained by replacing $V$ by $\tilde I_b$.

Since $\mathrm{supp}\, \varphi\in C_b'$ and $\hat x_b\cdot\hat\xi_b>1-2\epsilon$ on $\mathrm{supp}\, \theta_+(\hat x_b\cdot\hat\xi_b)$, we have $\hat x_b\in C_b'$ on $\mathrm{supp}\, t_b^+$ for sufficiently small $\epsilon$. We also have $\hat x_b\cdot\hat \xi_b<1-\epsilon$ on $\mathrm{supp}\, \theta_+'(\hat x_b\cdot\hat \theta_b)$, where $\theta_+'(t):=\frac{d}{dt}\theta_+(t)$. Therefore, using Lemma \ref{micro1} we can see that if $v\in B^*(X_a)$ is strictly outgoing or incoming, then $G_{\beta}^+v \in L^{2,l}(X)$ for some $l>1/2$.

In the same way we have
$$W_{\alpha}^-f_1=T_{a}J_{\alpha}\tilde W_{\alpha}^-f_1-i\int_{-\infty}^0e^{isH}G_{\alpha}\tilde W_{\alpha}^-e^{-is\tilde h_a}f_1ds,$$
where $T_{a}:=Op(t_{a})$, $t_{a}:=\varphi(\xi_a)\tilde\psi(\lvert\xi_a\rvert^2)\eta(\lvert x_a\rvert)(\theta_+(\hat x_a\cdot \hat\xi_a)+\theta_-(\hat x_a\cdot \hat\xi_a))$, $\theta_-(t):=\theta_+(-t)$, and $G_{\alpha}:=HT_{a}J_{\alpha}-T_{a}J_{\alpha}h_{\alpha}$. Here $\tilde\psi\in C^{\infty}_0(\mathbb R)$ be a function satisfying $\tilde\psi(t)=1$ for any $t>0$ such that $\mathrm{supp}\, \hat f_1(\sqrt{t},\cdot)\neq \varnothing$.

We also have
$$\Omega_{\alpha}^+\tilde W_{\alpha}^-f_1=T_{a}J_{\alpha}\tilde W_{\alpha}^-f_1+i\int_0^{\infty}e^{isH}G_{\alpha}\tilde W_{\alpha}^-e^{-is\tilde h_a}f_1ds,$$
where $\Omega_{\alpha}^{\pm}:=s-\lim_{t\to\pm\infty}e^{itH}J_{\alpha}e^{-ith_{\alpha}}\mathbf{1}_{Z_a}(p_a^+)$, with $Z_a$ defined as \eqref{myeqfirst.2.0.1} (see \cite[Theorem 6.10.1]{DG}). Here $p_a^+$ is the asymptotic velocity for $\tilde H_a$. Note that $\mathrm{Ran}\, \tilde W_{\alpha}^-=\mathbf{1}_{Z_a}(p_a^+)$ (see \cite[Theorem 6.15.2]{DG}).
Thus, we obtain
$$(\Omega_{\alpha}^+\tilde W_{\alpha}^--W_{\alpha}^-)f_1=i\int_{-\infty}^{\infty}e^{isH}G_{\alpha}\tilde W_{\alpha}^-e^{-is\tilde h_a}f_1ds.$$
Since $W_{\alpha}^+=\Omega_{\alpha}^+\tilde W_{\alpha}^+$, we have
$$S_{\alpha\alpha}- S_{\alpha}=(W_{\alpha}^+)^*(W_{\alpha}^--\Omega_{\alpha}^+\tilde W_{\alpha}^-),$$
where $S_{\alpha}:=(\tilde W_{\alpha}^+)^*\tilde W_{\alpha}^-$.
Therefore, by \eqref{myeq6.0.1} we obtain 
\begin{equation}\label{myeq6.1}
\begin{split}
(S_{\alpha\alpha}f_1,&f_2)_{\mathcal H}-(S_{\alpha}f_1,f_2)_{\mathcal H}\\
&=-i\int_{-\infty}^{\infty}dt(e^{itH}G_{\alpha}\tilde W_{\alpha}^-e^{-it\tilde h_{\alpha}}f_1,W_{\alpha}^+f_2)_{\mathcal H}\\
&=-i\int_{-\infty}^{\infty}dt(G_{\alpha}\tilde W_{\alpha}^-e^{-it\tilde h_{\alpha}}f_1,T_{a}^+J_{\alpha}\tilde W_{\alpha}^+e^{-it\tilde h_{\alpha}}f_2)_{\mathcal H}\\
&\quad -\int_0^{\infty}ds\int_{-\infty}^{\infty}dt(G_{\alpha}\tilde W_{\alpha}^-e^{-it\tilde h_{\alpha}}f_1,e^{isH}G_{\alpha}^+\tilde W_{\alpha}^+e^{-i(s+t)\tilde h_{\alpha}}f_2)_{\mathcal H},
\end{split}
\end{equation}
where $\mathcal H:=L^2(X)$.

Here we note $\tilde W_{\alpha}^{\pm}F_{\alpha}^*=\tilde F_{\alpha,\pm}^*$ where $(\tilde F_{\alpha,\pm}f)(\lambda):=(\tilde F_{a,\pm}f)(\lambda-E_{\alpha})$, Here $\tilde F_{a,\pm}$ is the generalized Fourier transform corresponding to $F_{\pm}$  defined in section \ref{secondsec} with $V$ replaced by $\tilde I_a$. Thus, we have
$$\tilde W_{\alpha}^{\pm}e^{-it\tilde h_
\alpha}f_1=\tilde F_{\alpha,\pm}^*e^{-it\lambda}\hat f_1.$$

The second term on the right-hand side of \eqref{myeq6.1} is calculated as
\begin{align*}
&(G_{\alpha}\tilde W_{\alpha}^-e^{-it\tilde h_{\alpha}}f_1,e^{isH}G_{\alpha}^+\tilde W_{\alpha}^+e^{-i(s+t)\tilde h_{\alpha}}f_2)_{\mathcal H}\\
&=(e^{-it\lambda}\hat f_1,\tilde F_{\alpha,-}G_{\alpha}^*e^{isH}G_{\alpha}^+\tilde W_{\alpha}^+e^{-i(s+t)\tilde h_{\alpha}}f_2)_{\hat {\mathcal H}_a}\\
&=\int_{E_{\alpha}}^{\infty}e^{-it\lambda}(\hat f_1(\lambda),\tilde {\mathcal F}_{\alpha,-}(\lambda)G_{\alpha}^*e^{isH}G_{\alpha}^+\tilde W_{\alpha}^+e^{-i(s+t)\tilde h_{\alpha}}f_2)_{K_a}d\lambda\\
&=\int_{E_{\alpha}}^{\infty}e^{-it\lambda}( (G_{\alpha}^+)^*e^{-isH} G_{\alpha}\tilde {\mathcal F}_{\alpha,-}^*(\lambda)\hat f_1(\lambda),\tilde W_{\alpha}^+e^{-i(s+t)\tilde h_{\alpha}}f_2)_{\mathcal H} d\lambda\\
&=\int_{E_{\alpha}}^{\infty}\int_{E_{\alpha}}^{\infty}e^{-it\lambda+i(s+t)\lambda'}\\
&\qquad\cdot (\tilde {\mathcal F}_{\alpha,+}(\lambda') (G_{\alpha}^+)^*e^{-isH} G_{\alpha}\tilde {\mathcal F}_{\alpha,-}^*(\lambda)\hat f_1(\lambda),\hat f_2(\lambda'))_{K_a} d\lambda d\lambda'\\
&=\int_{E_{\alpha}}^{\infty}\int_{E_{\alpha}}^{\infty}e^{-it\lambda+i(s+t)\lambda'}(e^{-isH} G_{\alpha}\tilde {\mathcal F}_{\alpha,-}^*(\lambda)\hat f_1(\lambda), G_{\alpha}^+\tilde {\mathcal F}_{\alpha,+}^*(\lambda')\hat f_2(\lambda'))_{\mathcal H} d\lambda d\lambda'.
\end{align*}
Here $\hat {\mathcal H_a}:=L^2((E_{\alpha},\infty); L^2(C_a))$, and $\tilde {\mathcal F}_{\alpha,\pm}(\lambda):=\tilde {\mathcal F}_{a,\pm}(\lambda-E_{\alpha})$, where $\tilde {\mathcal F}_{a,\pm}(\lambda-E_{\alpha})$ is the operator corresponding to $\mathcal F_{\pm}(\lambda)$ in section \ref{secondsec} with $V$ replaced by $\tilde I_a$.

Inserting the convergent factor $e^{-\epsilon s}$, integrating with respect to $s$ and taking the limit as $\epsilon \to 0_+$ we obtain
$$-i\int_{E_{\alpha}}^{\infty}\int_{E_{\alpha}}^{\infty}e^{-it(\lambda-\lambda')}(R(\lambda'+i0) G_{\alpha}\tilde {\mathcal F}_{\alpha,-}^*(\lambda)\hat f_1(\lambda), G_{\alpha}^+\tilde {\mathcal F}_{\alpha,+}^*(\lambda')\hat f_2(\lambda'))  d\lambda d\lambda',$$
where $(v_1,v_2):=\int_Xv_1(x)\bar v_2(x)dx$.

We again insert the factor $e^{-\epsilon\lvert t\rvert}$, integrate with respect to $t$, and take the limit as $\epsilon\to0_+$. Then, the second term on the right-hand side of \eqref{myeq6.1} is written as
\begin{equation*}\label{myeq6.1.1.1}
2\pi i\int_{E_{\alpha}}^{\infty}(R(\lambda+i0) G_{\alpha}\tilde {\mathcal F}_{\alpha,-}^*(\lambda)\hat f_1(\lambda), G_{\alpha}^+\tilde {\mathcal F}_{\alpha,+}^*(\lambda)\hat f_2(\lambda)) d\lambda.
\end{equation*}

In the similar way the first term on the right-hand side of \eqref{myeq6.1} is written as
\begin{equation}\label{myeq6.1.1}
-2\pi i\int_{E_{\alpha}}^{\infty}(G_{\alpha}\tilde {\mathcal F}_{\alpha,-}^*(\lambda)\hat f_1(\lambda), T_{a}^+J_{\alpha}\tilde {\mathcal F}_{\alpha,+}^*(\lambda)\hat f_2(\lambda)) d\lambda.
\end{equation}

Noting $(\tilde H_a-\lambda_{\alpha})\tilde {\mathcal F}_{\alpha,-}^*(\lambda)=0$ we obtain
\begin{equation}\label{myeq6.1.1.2}
\begin{split}
&G_{\alpha}^+\tilde {\mathcal F}_{\alpha,-}^*(\lambda)\\
&\quad=\{(H-\lambda)T_a^+J_{\alpha}-T_a^+J_{\alpha}(\tilde H_a-\lambda_{\alpha})\}\tilde {\mathcal F}_{\alpha,-}^*(\lambda)=(H-\lambda)T_a^+J_{\alpha}\tilde {\mathcal F}_{\alpha,-}^*(\lambda).
\end{split}
\end{equation}

Therefore, we have
\begin{equation}\label{myeq6.1.2}
\begin{split}
&(S_{\alpha\alpha}f_1,f_2)-(S_{\alpha}f_1,f_2)\\
&\quad=2\pi i\int_{E_{\alpha}}^{\infty}(G_{\alpha}\tilde {\mathcal F}_{\alpha,-}^*(\lambda)\hat f_1(\lambda), R(\lambda-i0)(H-\lambda)T_{a}^+J_{\alpha}\tilde {\mathcal F}_{\alpha,+}^*(\lambda)\hat f_2(\lambda)) d\lambda\\
&\qquad-2\pi i\int_{E_{\alpha}}^{\infty}(G_{\alpha}\tilde {\mathcal F}_{\alpha,-}^*(\lambda)\hat f_1(\lambda), T_{a}^+J_{\alpha}\tilde {\mathcal F}_{\alpha,+}^*(\lambda)\hat f_2(\lambda)) d\lambda.
\end{split}
\end{equation}

We can replace $T_a^+\tilde {\mathcal F}_{\alpha,+}^*(\lambda)\hat f_2(\lambda)$ in \eqref{myeq6.1.2} by $-C_{\alpha}^+(\lambda)T_a^+w_{\alpha}^+(\lambda)$ where
$$(w_{\alpha}^+(\lambda))(x_a)=w_{\alpha}^+(\lambda,x_a):=\eta(r_a)\hat f_2(\lambda,\hat x_a)e^{iK_{a}(x_a,\lambda_{\alpha})}r_a^{-(n_a-1)/2},$$
and
$$C_{\alpha}^+(\lambda):=2^{-1}i\pi^{-1/2}\lambda_{\alpha}^{-1/4}e^{-i\pi(n_a-3)/4},$$
with $\eta$ being as in \eqref{myeq2.3} for some $\kappa$.

To see that, set $F_1^+:=J_{\alpha}T_{a}^+\tilde R_a(\lambda_{\alpha}-i0)\tilde w_{\alpha}^+(\lambda)$, where $\tilde H_a:=-\Delta_a+\tilde I_a$, $\tilde w_{\alpha}^+(\lambda):=(\tilde H_a-\lambda_{\alpha})w_{\alpha}^+(\lambda)$ and $\tilde R_a(\lambda_{\alpha}-i0):=(\tilde H_a-\lambda+i0)^{-1}$. Then, by Lemma \ref{outdirect}, $F_1$ is incoming, and therefore, by Lemma \ref{Iunique} we have
$$F_1^+-R(\lambda-i0)F_1^+=0.$$
Since we have
$$\tilde {\mathcal F}_{\alpha,+}^*(\lambda)\hat f_2(\lambda)=-C_{\alpha}^+(\lambda)(w_{\alpha}^+(\lambda)-F_1^+),$$
$T_a^+\tilde {\mathcal F}_{\alpha,+}^*(\lambda)\hat f_2(\lambda)$ in \eqref{myeq6.1.2} can be replaced by $-C_{\alpha}^+(\lambda)T_a^+w_{\alpha}^+(\lambda)$.

Set $\tilde T_a^+:=1-T_a^+$. Since by Lemma \ref{micro2} we have $\tilde T_{a}^+w_{\alpha}^+(\lambda)\in \mathcal S(X_a)$, by Lemma \ref{outdirect} $F_2^+:=J_{\alpha}\tilde T_a^+w_{\alpha}^+(\lambda)$ is incoming, and therefore
$$F_2^+-R(\lambda-i0)F_2^+=0.$$

Thus, we can remove $T_a^+$ in front of $w_{\alpha}^+(\lambda)$. Therefore, we can rewrite \eqref{myeq6.1.2} as
\begin{equation}\label{myeq6.1.3}
\begin{split}
&(S_{\alpha\alpha}f_1,f_2)-(S_{\alpha}f_1,f_2)\\
&\quad=-\tilde C_{\alpha}(\lambda)\int_{E_{\alpha}}^{\infty}(R(\lambda+i0) G_{\alpha}\tilde {\mathcal F}_{\alpha,-}^*(\lambda)\hat f_1(\lambda), (H-\lambda)J_{\alpha}w_{\alpha}^+(\lambda)) d\lambda\\
&\qquad+\tilde C_{\alpha}(\lambda)\int_{E_{\alpha}}^{\infty}(G_{\alpha}\tilde {\mathcal F}_{\alpha,-}^*(\lambda)\hat f_1(\lambda), J_{\alpha}w_{\alpha}^+(\lambda)) d\lambda,
\end{split}
\end{equation}
where $\tilde C_{\alpha}(\lambda):=\pi^{1/2}\lambda_{\alpha}^{-1/4}e^{i\pi(n_a-3)/4}$.

By Lemma \ref{2bodyasympt} and \eqref{myeq5.9} we can see
\begin{align*}
(\hat S_{\alpha}(\lambda)\hat f_1(\lambda),\hat f_2(\lambda))&=\tilde C_{\alpha}(\lambda)\{(T_{a}\tilde {\mathcal F}_{\alpha,-}^*(\lambda)\hat f_1(\lambda),\tilde w_{\alpha}^+(\lambda))_a\\
&\qquad-((\tilde H_a-\lambda_{\alpha})T_{a}\tilde {\mathcal F}_{\alpha,-}^*(\lambda)\hat f_1(\lambda),w_{\alpha}^+(\lambda))_a\},
\end{align*}
where $(v_1,v_2)_a:=\int_{X_a}v_1(x_a)v_2(x_a)dx_a$ and $\hat S_{\alpha}(\lambda)$ is the fiber of $F_{\alpha} S_{\alpha}F_{\alpha}^*$ (see \cite{RS}).

Therefore, we obtain
\begin{equation}\label{myeq6.1.3.1}
\begin{split}
(S_{\alpha} f_1,f_2)=&\int_{E_{\alpha}}^{\infty}\tilde C_{\alpha}(\lambda)\{(T_{a}\tilde {\mathcal F}_{\alpha,-}^*(\lambda)\hat f_1(\lambda),\tilde w_{\alpha}^+(\lambda))_a\\
&\quad-((\tilde H_a-\lambda_{\alpha})T_{a}\tilde {\mathcal F}_{\alpha,-}^*(\lambda)\hat f_1(\lambda),w_{\alpha}^+(\lambda))_a\}d\lambda\\
=&\int_{E_{\alpha}}^{\infty}\tilde C_{\alpha}(\lambda)\{(J_{\alpha}T_{a}\tilde {\mathcal F}_{\alpha,-}^*(\lambda)\hat f_1(\lambda),J_{\alpha}\tilde w_{\alpha}^+(\lambda))\\
&\quad-((\tilde H_a-\lambda_{\alpha})J_{\alpha}T_{a}\tilde {\mathcal F}_{\alpha,-}^*(\lambda)\hat f_1(\lambda),J_{\alpha}w_{\alpha}^+(\lambda))\}d\lambda\\
=&\int_{E_{\alpha}}^{\infty}\tilde C_{\alpha}(\lambda)\{(J_{\alpha}T_{a}\tilde {\mathcal F}_{\alpha,-}^*(\lambda)\hat f_1(\lambda),(H-\lambda)J_{\alpha}w_{\alpha}^+(\lambda))\\
&\quad-((H-\lambda)J_{\alpha}T_{a}\tilde {\mathcal F}_{\alpha,-}^*(\lambda)\hat f_1(\lambda),J_{\alpha}w_{\alpha}^+(\lambda))\}d\lambda.
\end{split}
\end{equation}

In the same way as in \eqref{myeq6.1.1.2} we have
\begin{equation}\label{myeq6.1.3.2}
G_{\alpha}\tilde {\mathcal F}_{\alpha,-}^*(\lambda)=(H-\lambda)T_aJ_{\alpha}\tilde {\mathcal F}_{\alpha,-}^*(\lambda).
\end{equation}

Thus, by \eqref{myeq6.1.3} and \eqref{myeq6.1.3.1} we have
\begin{align*}
&(S_{\alpha\alpha} f_1,f_2)\\
&\quad=\tilde C_{\alpha}(\lambda)\int_{E_{\alpha}}^{\infty}\{(J_{\alpha}T_{a}\tilde {\mathcal F}_{\alpha,-}^*(\lambda)\hat f_1(\lambda),(H-\lambda)(J_{\alpha}w_{\alpha}^+(\lambda)))\\
&\qquad-(R(\lambda+i0) G_{\alpha}\tilde {\mathcal F}_{\alpha,-}^*(\lambda)\hat f_1(\lambda), (H-\lambda)(J_{\alpha}w_{\alpha}^+(\lambda)))\}d\lambda.\\
&\quad=\tilde C_{\alpha}(\lambda)\int_{E_{\alpha}}^{\infty}\{(J_{\alpha}T_{a}\tilde {\mathcal F}_{\alpha,-}^*(\lambda)\hat f_1(\lambda),(H-\lambda)(J_{\alpha}w_{\alpha}^+(\lambda)))\\
&\qquad-(R(\lambda+i0) (H-\lambda)J_{\alpha}T_a\tilde {\mathcal F}_{\alpha,-}^*(\lambda)\hat f_1(\lambda), (H-\lambda)(J_{\alpha}w_{\alpha}^+(\lambda)))\}d\lambda.
\end{align*}

By Lemma \ref{outdirect} $F_1^-:=J_{\alpha}T_a\tilde R_a(\lambda_{\alpha}+i0)(\tilde H_a-\lambda_{\alpha})w_{\alpha}^-(\lambda)$ is outgoing, where $(w_{\alpha}^-(\lambda))(x_a)=w_{\alpha}^-(\lambda,x_a):=\eta(r_a)\hat f_1(\lambda,-\hat x_a)e^{-iK_a(x_a,\lambda_{\alpha})}r_a^{-(n_a-1)/2}$. Thus, we have
\begin{equation}\label{myeq6.1.4}
F_1^--R(\lambda+i0)(H-\lambda)F_1^-=0.
\end{equation}

Set $\tilde T_a:=1-T_a$. Then, by Lemma \ref{outdirect} $F_2^-:=J_{\alpha}\tilde T_aw_{\alpha}^-(x_a,\lambda_{\alpha})$ is outgoing. Thus, we have
\begin{equation}\label{myeq6.1.5}
F_2^--R(\lambda+i0)(H-\lambda)F_2^-=0.
\end{equation}

By \eqref{myeq6.1.4} and \eqref{myeq6.1.5} we can replace $T_a\tilde {\mathcal F}^*_{\alpha,-}(\lambda)\hat f_1(\lambda)$ by $C_{\alpha}^-(\lambda)w_{\alpha}^-(\lambda)$, where $C_{\alpha}^-(\lambda):=2^{-1}i\pi^{-1/2}\lambda_{\alpha}^{-1/4}e^{i\pi(n_a-3)/4}$.
Thus, by \eqref{myeq5.9} we obtain
\begin{align*}
(S_{\alpha\alpha}f_1,f_2)=&(-2^{-1}i\lambda_{\alpha}^{-1/2})^{-1}C_{\alpha}^-(\lambda)\tilde C_{\alpha}(\lambda)\\
& \cdot\int_{E_{\alpha}}^{\infty}(\Sigma_{\alpha\alpha}(\lambda)\mathcal R\hat f_1(\lambda),\hat f_2(\lambda))d\lambda\\
=&e^{i\pi(n_a-1)/2}\int_{E_{\alpha}}^{\infty}(\Sigma_{\alpha\alpha}(\lambda)\mathcal R\hat f_1(\lambda),\hat f_2(\lambda))d\lambda,
\end{align*}
and therefore, \eqref{myeq6.0.0.1} for $\alpha=\beta$.

(ii) Next we consider $S_{\beta\alpha},\ \beta\neq\alpha$. Since $(\Omega_{\beta}^+)^*\Omega_{\alpha}^+=0$(see e.g. \cite[Theorem 6.15.3]{DG}), we have
$$S_{\beta\alpha}=(W_{\beta}^+)^*(W_{\alpha}^--\Omega_{\alpha}^+\tilde W_{\alpha}^-).$$
Hence, in the same way as above we obtain
\begin{equation}\label{myeq6.4.1}
\begin{split}
&(S_{\beta\alpha}f_1,f_2)\\
&=2\pi i\int_{\max\{E_{\alpha},E_{\beta}\}}^{\infty}\{(R(\lambda+i0)G_{\alpha}\tilde {\mathcal F}_{\alpha,-}^*(\lambda)\hat f_1(\lambda),G_{\beta}^+\tilde {\mathcal F}_{\beta,+}^*(\lambda)\hat f_2(\lambda))\\
&\quad-(G_{\alpha}\tilde {\mathcal F}_{\alpha,-}^*(\lambda)\hat f_1(\lambda),T_{b}^+J_{\beta}\tilde {\mathcal F}_{\beta,+}^*(\lambda)\hat f_2(\lambda))\}d\lambda\\
&=2\pi i\int_{\max\{E_{\alpha},E_{\beta}\}}^{\infty}\{(R(\lambda+i0)G_{\alpha}\tilde {\mathcal F}_{\alpha,-}^*(\lambda)\hat f_1(\lambda),(H-\lambda)T_{b}^+J_{\beta}\tilde {\mathcal F}_{\beta,+}^*(\lambda)\hat f_2(\lambda))\\
&\quad-(G_{\alpha}\tilde {\mathcal F}_{\alpha,-}^*(\lambda)\hat f_1(\lambda),T_{b}^+J_{\beta}\tilde {\mathcal F}_{\beta,+}^*(\lambda)\hat f_2(\lambda))\}d\lambda.
\end{split}
\end{equation}

We can replace $T_b^+\tilde {\mathcal F}_{\beta,+}^*(\lambda)\hat f_2(\lambda)$ in \eqref{myeq6.4.1} by $-C_{\beta}(\lambda)w_{\beta}^+(\lambda)$ where
$$C_{\beta}^+(\lambda):=2^{-1}i\pi^{-1/2}\lambda_{\beta}^{-1/4}e^{-i\pi(n_b-3)/4},$$
and $(w_{\beta}^+(\lambda))(x_b)=w_{\beta}^+(\lambda,x_b):=\eta(r_b)\hat f_2(\lambda,\hat x_b)e^{iK_{b}(x_b,\lambda_{\beta})}r_b^{-(n_b-1)/2}$.

We shall prove
\begin{equation}\label{myeq6.5}
\begin{split}
(G_{\alpha}\tilde {\mathcal F}_{\alpha,-}^*(\lambda)\hat f_1(\lambda)&,J_{\beta}w_{\beta}^+(\lambda))\\
&=((H-\lambda)T_aJ_{\alpha}\tilde {\mathcal F}_{\alpha,-}^*(\lambda)\hat f_1(\lambda),J_{\beta}w_{\beta}^+(\lambda))\\
&=(T_{a}J_{\alpha}\tilde {\mathcal F}_{\alpha,-}^*(\lambda)\hat f_1(\lambda),(H-\lambda)J_{\beta}w_{\beta}^+(\lambda)).
\end{split}
\end{equation}

When $a=b$ and $\alpha\neq\beta$, by \eqref{myeq6.1.3.2}, $\pi_{\beta}J_{\alpha}=0$ and $\pi_{\beta}(-\Delta_a)J_{\alpha}=0$, \eqref{myeq6.5} holds. 

When $a\neq b$, (a) $J_{\alpha}\tilde {\mathcal F}_{\alpha,-}^*(\lambda)\hat f_1(\lambda)$ exponentially decays on $Y_b^{\epsilon}$ for any $\epsilon>0$, or (b) $J_{\beta}\tilde {\mathcal F}_{\beta,-}^*(\lambda)\hat f_2(\lambda)$ exponentially decays on $Y_a^{\epsilon}$ for any $\epsilon>0$, where $Y_a^{\epsilon}$ is defined as \eqref{myeqfirst.2.0.2}.

In the case (a) let $\chi_j\in C^{\infty}(X),\ j=1,2$ be functions satisfying the following: there exists $\epsilon'>0$ such that $\chi_1=1$ on $\bigcup_{c\nleq b}\{x: \lvert x\rvert >1,\ 2\epsilon'\lvert x\rvert>\lvert x^c\rvert\}$, $\chi_1=0$ on $\bigcup_{c\nleq b}\{x: \lvert x\rvert >1,\ \epsilon'\lvert x\rvert<\lvert x^c\rvert\}$, and $\chi_1+\chi_2=1$. We also denote by $\chi_j$ the multiplication operator by $\chi_j$.

Then, for $\epsilon'$ sufficiently small $\chi_1J_{\beta}\tilde {\mathcal F}_{\beta,-}^*(\lambda)\hat f_2(\lambda)$ decays exponentially, and $J_{\alpha}\tilde {\mathcal F}_{\alpha,-}^*(\lambda)\hat f_1(\lambda)$ decays exponentially on $\mathrm{supp}\, \chi_2$. Thus we can see that
\begin{align*}
(\Delta T_{a}&J_{\alpha}\tilde {\mathcal F}_{\alpha,-}^*(\lambda)\hat f_1(\lambda),\chi_jT_{b}^+J_{\beta}\tilde {\mathcal F}_{\beta,+}^*(\lambda)\hat f_2(\lambda))\\
&=(T_{a}J_{\alpha}\tilde {\mathcal F}_{\alpha,-}^*(\lambda)\hat f_1(\lambda),\Delta \chi_jT_{b}^+J_{\beta}\tilde {\mathcal F}_{\beta,+}^*(\lambda)\hat f_2(\lambda)),\ j=1,2
\end{align*}
and therefore, \eqref{myeq6.5} holds.

In the same way we can see that in the case (b) \eqref{myeq6.5} holds.

Thus, by \eqref{myeq6.4.1} we obtain
\begin{equation}\label{myeq6.6}
\begin{split}
(S_{\beta\alpha}f_1,f_2)=&\tilde C_{\beta}(\lambda)\int_{\max\{E_{\alpha},E_{\beta}\}}^{\infty}\{(G_{\alpha}\tilde {\mathcal F}_{\alpha,-}^*(\lambda)\hat f_1(\lambda),(H-\lambda)J_{\beta}w_{\beta}^+(\lambda))\\
&-(R(\lambda+i0)G_{\alpha}\tilde {\mathcal F}_{\alpha,-}^*(\lambda)\hat f_1(\lambda),(H-\lambda)J_{\beta}w_{\beta}^+(\lambda))\}d\lambda.
\end{split}
\end{equation}

As in the case of $\hat S_{\alpha\alpha}(\lambda)$, we can replace $T_a\tilde {\mathcal F}^*_{\alpha,-}(\lambda)\hat f_1(\lambda)$ by $C_{\alpha}^-(\lambda)w_{\alpha}^-(\lambda)$, and therefore, we obtain \eqref{myeq6.0.0.1}.
\end{proof}

\section{Equivalence and adjoint operators of generalized Fourier transform}\label{eighthsec}
In the following we use the notations in section \ref{firstsec}, \ref{first.2sec}, \ref{secondsec} and \ref{sixthsec}.

\begin{proof}[Proof of Theorem \ref{gfe} (1)]
Let $f\in L^{2,l}(X),\ l>1/2$ and $f_1\in C^{\infty}(X_a)$ satisfy $\hat f_1\in C_0^{\infty}(X_a)$ and $\hat f_1(\lambda,\cdot)\in C_0^{\infty}(C_a')$ for any $\lambda$ such that $\mathrm{supp}\, \hat f_1(\lambda,\cdot)\neq 0$. Here $(\hat f_1(\lambda))(\hat x_a)=\hat f_1(\lambda,\hat x_a):=(F_{\alpha}f_1)(\lambda,\hat x_a)$.

Then, we have
$$(F_{\alpha}(W_{\alpha}^+)^*f,\hat f_1)=(f_1,W_{\alpha}^+(F_{\alpha})^*\hat f_1).$$

By \eqref{myeq6.0.1} the right-hand side is written as
\begin{equation}\label{myeq8.1}
(f,T_{\alpha}^+J_{\alpha}\tilde W_{\alpha}^+(F_{\alpha})^*\hat f_1)-i\int_0^{\infty}(f,e^{isH}G_{\alpha}^+\tilde W_{\alpha}^+e^{-is\tilde h_{\alpha}}(F_{\alpha})^*\hat f_1)ds.
\end{equation}
As in the proof of Theorem \ref{sme} we can rewrite this as
\begin{align*}
\int_{E_{\alpha}}^{\infty}&\{(f,T_{a}^+J_{\alpha}\tilde {\mathcal F}_{\alpha,+}^*(\lambda)\hat f_1(\lambda))-(f,R(\lambda-i0)G_{\alpha}^+\tilde {\mathcal F}_{\alpha,+}^*(\lambda)\hat f_1(\lambda))\}d\lambda,\\
&=\int_{E_{\alpha}}^{\infty}\{(f,T_{a}^+J_{\alpha}\tilde {\mathcal F}_{\alpha,+}^*(\lambda)\hat f_1(\lambda))\\
&\quad-(f,R(\lambda-i0)(H-\lambda)T_{a}^+J_{\alpha}\tilde {\mathcal F}_{\alpha,+}^*(\lambda)\hat f_1(\lambda))\}d\lambda.
\end{align*}

As in the proof of Theorem \ref{sme} we can replace $T_a^+\tilde {\mathcal F}_{\alpha,+}^*(\lambda)\hat f_1(\lambda)$ by $-C_{\alpha}^+(\lambda)\hat w_{\alpha}^+(\lambda)$, where
$$(\hat w_{\alpha}^+(\lambda))(x_a)=\hat w_{\alpha}^+(\lambda,x_a):=\eta(r_a)\hat f_1(\lambda,\hat x_a)e^{iK_a(x_a,\lambda_{\alpha})}r_a^{-(n_a-1)/2}.$$

Thus, we obtain
\begin{equation}\label{myeq8.2}
\begin{split}
(F_{\alpha}(W_{\alpha}^+)^*f,\hat f_1)&=\hat C_{\alpha}^+(\lambda)\int_{E_{\alpha}}^{\infty}\{(f_1,J_{\alpha}w_{\alpha}^+(\lambda))\\
&\quad-(f_1,R(\lambda-i0)(H-\lambda)J_{\alpha}w_{\alpha}^+(\lambda))\}d\lambda,
\end{split}
\end{equation}
where $\hat C_{\alpha}^+(\lambda):=2^{-1}i\pi^{-1/2}\lambda_{\alpha}^{-1/4}e^{i\pi(n_a-3)/4}$.

Therefore, by \eqref{myeq5.10} we have $(F_{\alpha}(W_{\alpha}^+)^*f)(\lambda)=\mathcal G_{\alpha}^+(\lambda)f.$

In the same way we obtain
\begin{equation}\label{myeq8.3}
\begin{split}
(F_{\alpha}(W_{\alpha}^-)^*f,\hat f_1)&=\hat C_{\alpha}^-(\lambda)\int_{E_{\alpha}}^{\infty}\{(f,J_{\alpha}\hat w_{\alpha}^-(\lambda))\\
&\quad-(f,R(\lambda+i0)(H-\lambda)J_{\alpha}\hat w_{\alpha}^-(\lambda))\}d\lambda,
\end{split}
\end{equation}
where $(w_{\alpha}^-(\lambda))(x_a)=w_{\alpha}^-(\lambda,x_a):=\eta(r_a)f_1(\lambda,-\hat x_a)e^{-iK_a(x_a,\lambda_{\alpha})}r_a^{-(n_a-1)/2}$ and
$$\hat C_{\alpha}^-(\lambda):=2^{-1}i\pi^{-1/2}\lambda_{\alpha}^{-1/4}e^{-i\pi(n_a-3)/4}.$$

Therefore, by \eqref{myeq5.10} we have $(F_{\alpha}(W_{\alpha}^-)^*f)(\lambda)=\mathcal G_{\alpha}^-(\lambda)f.$

\end{proof}

\begin{proof}[Proof of Theorem \ref{gfe} (2)]
By \eqref{myeq8.2} and Theorem \ref{gfe} (1) we can see that
\begin{equation}\label{myeq8.4}
(\mathcal G_{\alpha}^+(\lambda))^*=\overline{\hat C_{\alpha}^+(\lambda)}P_{\alpha,-}(\lambda).
\end{equation}

In the same way by \eqref{myeq8.3} and Theorem \ref{gfe} (1) we can see that
\begin{equation}\label{myeq8.5}
(\mathcal G_{\alpha}^-(\lambda))^*=\overline{\hat C_{\alpha}^-(\lambda)}P_{\alpha,+}(\lambda)\mathcal R.
\end{equation}

Theorem \ref{gfe} (2) follows from \eqref{myeq8.4} and \eqref{myeq8.5}.
\end{proof}

\end{document}